\def\tsc#1{\csdef{#1}{\textsc{\lowercase{#1}}\xspace}}
\def\nn{\nonumber}
\def\beq{\begin{equation}}
\def\eeq{\end{equation}}
\def\bea{\begin{eqnarray}}
\def\eea{\end{eqnarray}}
\def\ba{\begin{array}}
\def\ea{\end{array}}
\def\bitem{\begin{itemize}}
\def\eitem{\end{itemize}}
\def\ben{\begin{enumerate}}
\def\een{\end{enumerate}}
\def\eg{{\it e.g., \/}}
\def\ie{{\it i.e.,\ \/}}
\newcommand{\mbbE}{\mathbb{E}}
\def\Pbf{{\bm P}}
\newcommand{\E}{\mathbb{E}}
\setlist[itemize]{noitemsep, topsep=2pt}
\let\OLDthebibliography\thebibliography
\renewcommand\thebibliography[1]{
  \OLDthebibliography{#1}
  \setlength{\parskip}{1pt}
  \setlength{\itemsep}{1pt plus 0.3ex}
}
\def\ben{\begin{enumerate}[itemsep=0pt, parsep=0pt, topsep=2pt]}
\newcommand{\reg}[1]{\mathcal{R}_{#1}}
\newtheorem{theorem}{Theorem}
\newtheorem{proposition}{Proposition}
\newdefinition{rmk}{Remark}
\newproof{proof}{Proof}
\begin{document}
\let\WriteBookmarks\relax
\def\floatpagepagefraction{1}
\def\textpagefraction{.001}

\shorttitle{}    

\shortauthors{}  

\title [mode = title]{Renewable-Colocated Green Hydrogen Production: Optimal Scheduling and Profitability}



%

\author[1]{Siying Li}[orcid=0000-0001-7118-7659]
\ead{sl2843@cornell.edu}
\affiliation[1]{organization={School of Electrical and Computer Engineering, Cornell University},
            city={Ithaca},
            postcode={14853}, 
            state={NY},
            country={USA}}

\author[1]{Lang Tong}[orcid=0000-0003-3322-2681]
\ead{lt35@cornell.edu}

\author[2]{Timothy D. Mount}
\ead{tdm2@cornell.edu}
\affiliation[2]{organization={Dyson School of Applied Economics and Management, Cornell University},
            city={Ithaca},
            postcode={14853}, 
            state={NY},
            country={USA}}

\author[3]{Kanchan Upadhyay}
\ead{KUpadhyay@nyiso.com}
\affiliation[3]{organization={New York Independent System Operator (NYISO)},
            city={Rensselaer},
            postcode={12144}, 
            state={NY},
            country={USA}}

\author[3]{Harris Eisenhardt}
\ead{HEisenhardt@nyiso.com}

\author[3]{Pradip Kumar}
\ead{PKumar@nyiso.com}

\nonumnote{This work was supported in part by the National Science Foundation under Grants 2218110 and 2412776, and by the NYISO Grid Development Research Project.}

\begin{abstract}
We study the optimal green hydrogen production and energy market participation of a renewable-colocated hydrogen producer (RCHP) that utilizes onsite renewable generation for both hydrogen production and grid services. Under deterministic and stochastic profit-maximization frameworks, we analyze RCHP's multiple market participation models and derive closed-form optimal scheduling policies that dynamically allocate renewable energy to hydrogen production and electricity export to the wholesale market. Analytical characterizations of the RCHP's operating profit and the optimal sizing of renewable and electrolyzer capacities are obtained. We use real-time renewable generation and electricity price data from three independent system operators to evaluate the impacts of market prices and environmental policies on RCHP's profitability.
\end{abstract}



\begin{keywords}
Electricity markets \sep Green hydrogen \sep Optimal system sizing \sep Renewable energy integration \sep Resource colocation 
\end{keywords}

\maketitle

\section{Introduction}\label{sec:intro}
Green hydrogen, also known as renewable hydrogen, refers to hydrogen certified as being produced with lifecycle greenhouse gas emissions below a specified threshold. The specific certification criteria vary by country and region and continue to evolve. One common pathway for green hydrogen classification is to demonstrate that production is powered directly by colocated\footnote{Although \textit{collocate} is the standard spelling, we adopt the IT-oriented variant \textit{colocate} to emphasize the joint operation and co-optimization between renewable and hydrogen producers.} renewable energy sources or, if grid electricity is used, that the electricity is certified as renewable \cite{EUCMDR:23, DOE2023}. Both the European Union and the U.S. have established tradable certification mechanisms, such as Guarantees of Origin (GO) and Renewable Energy Certificates (RECs), to verify that grid-imported electricity originates from renewable sources.

This work examines the optimal production of green hydrogen by grid-connected Renewable-Colocated Hydrogen Producers (RCHPs), as illustrated in Fig.~\ref{fig:schematic}.
Our study is motivated by several emerging trends. First, green hydrogen is a dispatchable, emission-free resource that can complement intermittent renewable sources, enhancing system reliability and resource adequacy. Second, global hydrogen demand reached about 100 million tonnes in 2024, representing an increase of over 2\% from 2023, and is expected to continue rising steadily over the coming decade. However, demand for low-emissions hydrogen accounted for less than 1\% of the total, underscoring the significant room for growth \cite{iea_global_hydrogen_review_2025}.
If RCHPs prove profitable, they could accelerate the clean energy transition of a promising energy market. Third, recent market trends favor RCHPs. Curtailment of renewable generation has increased sharply in high-penetration regions: in 2024, approximately 10\% of wind generation in Britain and 30\% in Northern Ireland were curtailed \cite{FT2025}. Similarly, total renewable curtailment in CAISO exceeded 3.4 TWh, highlighting the untapped potential of green hydrogen production.
Finally, grid-connected RCHPs can enhance grid reliability by dynamically adjusting their operations, maximizing hydrogen production during oversupply conditions, and prioritizing renewable power delivery to the grid when supply is constrained. This dual role of RCHPs is a focus of this work.

The potential of green hydrogen remains a topic of debate, with valid concerns about its underlying assumptions and economic viability. A central issue is the cost and profitability of green hydrogen production, as current electrolyzer technology faces significant challenges, including high energy demand, low production efficiency, and limited utilization due to the intermittency of renewable power.

While this article does not claim to resolve these challenges, it provides an analytical framework, a cost-effective and optimized production strategy, and empirical evidence on the short-run profitability of green hydrogen. Specifically, we address the following key questions:
\begin{itemize}
\item What is the profit-maximizing hydrogen production level given colocated renewable generation?
\item What is the expected profitability of an RCHP under different market participation models?
\item How do prices in the electricity and hydrogen markets and environmental policies on subsidies affect RCHP's profitability?
\item How do electrolyzer and renewable generation capacities impact profitability, and what are their optimal sizes given a fixed cost budget?
\end{itemize}
\begin{figure}[!htb]
\centering
\includegraphics[scale=0.3]{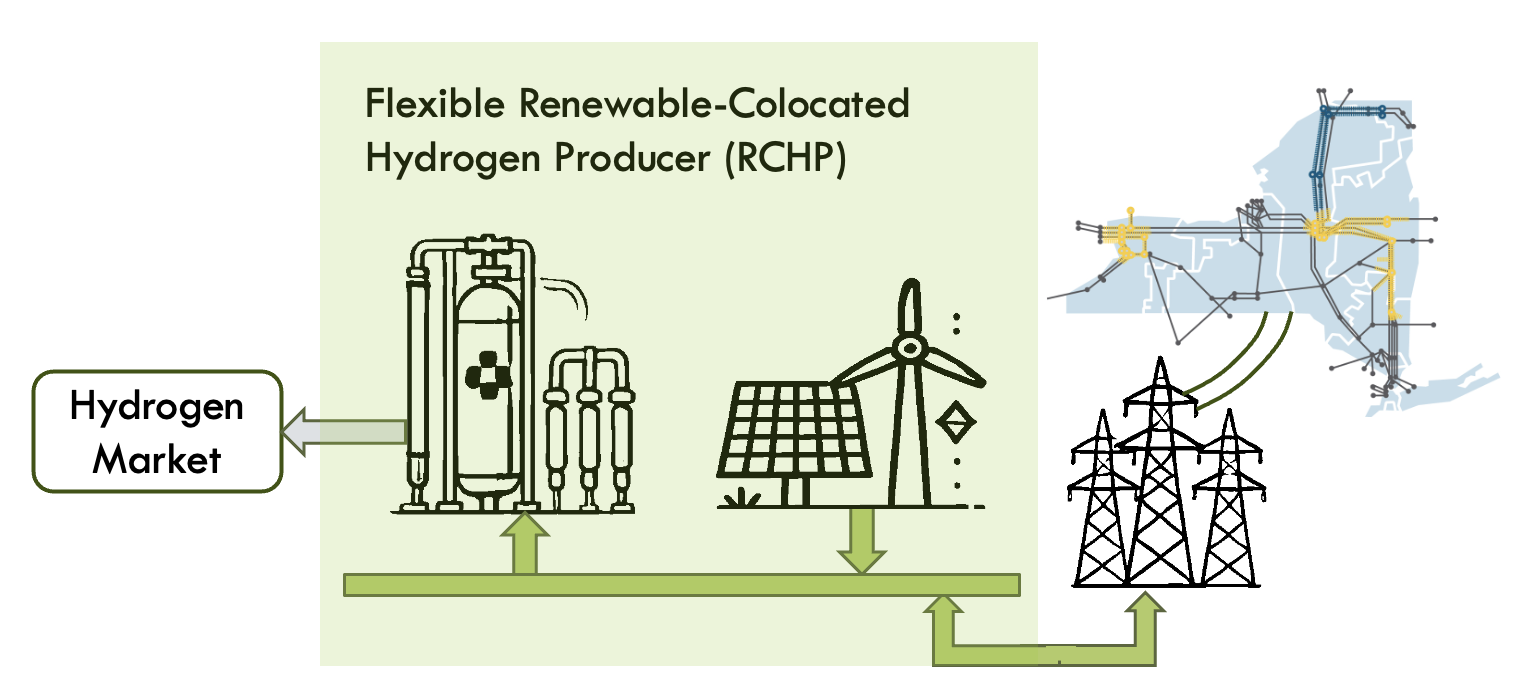}
\vspace{-0.5em}
\caption{\scriptsize Schematic of a flexible RCHP.}\label{fig:schematic}
\vspace{-2em}
\end{figure}

Fig.~\ref{fig:schematic} shows a generic {\em flexible RCHP model} schematic. By flexible RCHP, we mean that the RCHP not only produces hydrogen but also participates in a wholesale electricity market, exporting surplus renewable power to the grid and procuring certified renewable energy for hydrogen production.
To highlight the functional and economic differences between existing RCHP models and the flexible RCHP model proposed here, we define four RCHP configurations based on the RCHP interface with the wholesale electricity market: M0 for an RCHP with no interconnection with the grid, M1-x for a unidirectional interface with the wholesale market, either as a producer (M1-p) or as a consumer (M1-c). Finally, M2 is a prosumer model that has a bidirectional interface with the market. Specifically, the four RCHP market participation models are as follows:
\begin{itemize}
\item \underline{\it Standalone Hydrogen Producer (M0):} An RCHP under M0 produces hydrogen exclusively from colocated renewable. M0 is a benchmark for comparisons.

\item \underline{\it Renewable Producer (M1-p):} A renewable energy producer in the electricity market; the RCHP produces hydrogen while exporting surplus renewable energy.

\item \underline{\it Price-elastic Consumer (M1-c):} A flexible demand in the electricity market, importing certified renewable energy to supplement onsite renewable power for hydrogen production.

\item \underline{\it Flexible Prosumer (M2):} A prosumer, the RCHP can purchase certified renewable energy from the market to supplement onsite renewables for hydrogen production or sell surplus renewable power to the grid.
\end{itemize}
This article focuses on the optimal scheduling and {\em short-run} profitability analysis of flexible RCHP under M2 with results applied to other models as special cases.

\subsection{Related Literature}
Before discussing specific models, it is important to clarify our fundamental perspective. While a substantial body of literature evaluates renewable-hydrogen integration from a centralized system operator's perspective---focusing on network constraints, optimal power flow, and system cost minimization \cite{Zhou&etal:22IJHE,Clegg:16IET}---our work adopts the perspective of the RCHP as an independent market participant, focusing exclusively on the resource's profit-driven decision making.

We classify relevant literature based on the market participation models outlined above.

\paragraph{1) RCHP as a Renewable Producer.}
The power producer model M1-p, often referred to as a hybrid renewable hydrogen system, has been extensively studied over the past decade \cite{Glenk_Reichelstein_2019,Jiang&etal:19IJHE,McDonagh&etal:20AE,LUCAS&etal:22AE}. Among these studies, the techno-economic analysis by Glenk and Riechelstein \cite{Glenk_Reichelstein_2019} is most closely related to our work. Under the M1-p model, Glenk and Riechelstein derive conditions on electricity and hydrogen prices under which an RCHP system would be economically viable, showing that, in Texas, an RCHP under M1-p is competitive when the hydrogen price exceeds \$3.53/kg. Our results are consistent with conclusions of \cite{Glenk_Reichelstein_2019,Jiang&etal:19IJHE,McDonagh&etal:20AE,LUCAS&etal:22AE} and complement these existing analysis with the most general market interaction model. In addition, our approach differs from these existing methodologies by focusing on short-run profitability and providing closed-form profit-maximizing solutions.

\paragraph{2) RCHP as a Price-elastic Consumer.}
Under the flexible demand model M1-c, an RCHP can import electricity from the grid when it enhances profitability but cannot export power. Compared to M0, M1-c introduces a different cost structure, where electricity prices directly impact hydrogen production costs. Moreover, to meet green hydrogen certification standards, the costs of purchasing RECs for grid-imported electricity must be included. These costs are mostly ignored in the literature.

Extensive research has been conducted on hydrogen production as a demand-side participant in the wholesale market, particularly under the Power-to-Gas (PtG) framework \cite{VanLeeuwen&Mulder:18AE,Li&etal:22TSE, Li&etal:24AE}. Gahleitner provided a comprehensive review of physical PtG plants from 1990 to 2012 \cite{Gahleitner:13IJHE}. 
Colocated renewable power is considered in \cite{VanLeeuwen&Mulder:18AE}, where Van Leeuwen and Mulder introduce the concept of willingness to pay (WTP) for electricity used in hydrogen production. Alongside electricity prices, WTP is crucial in defining an RCHP's strategy for importing electricity. The efficiency characterization of such PtG systems is explored in \cite{Frank&etal:18AE}.

\paragraph{3) RCHP as a Flexible Prosumer.}
We find no prior work that directly addresses the prosumer model under M2, though partial overlaps exist in \cite{Alkano&etal:18TSG, Pavic&etal:22AE, Li&etal:25TSE}, which consider systems with electrolyzers, hydrogen storage, and fuel cells capable of bidirectional power exchange with the grid. Unlike these studies, our model restricts the ``producer'' role to the direct sale of surplus renewable generation, excluding fuel cell-based power-to-hydrogen-to-power (P2H2P) conversion from the real-time energy market framework. This design choice is driven by the prevailing economic realities of energy arbitrage: the relatively low round-trip efficiency of P2H2P systems makes them economically uncompetitive for frequent cycling \cite{Zhang&etal:20FER}. 
We provide a detailed numerical example in Appendix \ref{appendix:fuel_cell} to quantitatively justify this setup under historical market conditions.

Among the aforementioned works, \cite{Alkano&etal:18TSG} investigates decentralized coordination among multiple PtG facilities. \cite{Pavic&etal:22AE} integrates power, hydrogen, and gas systems in a large-scale multi-energy framework. \cite{Li&etal:25TSE} studies the coordinated dispatch of renewable and hydrogen systems within a grid-connected microgrid. The complexity of these models limits their analytical tractability and the ability to obtain closed-form insights.

The economic viability of RCHPs as prosumers critically depends on electricity and hydrogen prices and their market coupling. Li and Mulder \cite{Li&Mulder:21AE} explore this coupling, suggesting that PtG can reduce price volatility, improve social welfare (including carbon costs), and support grid operations. They conclude that the high investment costs and the displacement of cheaper energy carriers outweigh these benefits.

To clearly show the scope and novelty of our work, a comparative summary of our approach relative to prior research is presented in Table~\ref{tab:comparison}.
\begin{table*}[htbp]
\centering
\footnotesize
\begin{threeparttable}
\caption{\small Comparison of the proposed RCHP framework with existing literature.}
\vspace{-1em}
\label{tab:comparison}
\begin{tabular}{@{}llllll@{}}
\toprule
Reference & Grid Interface & Optimization & Solution Form & Profitability Analysis & Green Subsidies \\ 
\midrule
\cite{Glenk_Reichelstein_2019} & M1-p & Deterministic & Closed-form & Yes & Renewable  \\
\cite{Jiang&etal:19IJHE} & M1-p & Stochastic & CCP\tnote{a} & Yes& No  \\
\cite{McDonagh&etal:20AE} & M1-p & Deterministic & MILP\tnote{b} & Yes & No\\
\cite{LUCAS&etal:22AE} & M1-p & Deterministic & Simulation & Yes & No\\
\cite{VanLeeuwen&Mulder:18AE} & M1-c & Deterministic & Simulation & Yes & No \\ 
\cite{Li&etal:22TSE} & M1-c & Stochastic & DP\tnote{c}  & No & No \\ 
\cite{Li&etal:24AE} & M1-c & Robust & Two-layer algorithm  & No & No \\ 
\cite{Alkano&etal:18TSG} & M2 & Deterministic & MPC\tnote{d} & No & No \\ 
\cite{Pavic&etal:22AE} & M1-c, M2 & Deterministic & MIP\tnote{e} & Yes & No \\ 
\cite{Li&etal:25TSE} & M2 & Deterministic & MILP & No & No \\ 
\cite{Li&Mulder:21AE} & M2 & Deterministic & MCP\tnote{f} & Yes & Renewable \& H2 \\ 
This Paper & M0, M1-p, M1-c, M2 & Deterministic \& Stochastic & Closed-form & Yes & Renewable \& H2 \\ \bottomrule
\end{tabular}
\begin{tablenotes}[para, flushleft]
\footnotesize
\item[a] CCP: Chance-Constrained Programming.
\item[b] MILP: Mixed-Integer Linear Programming.
\item[c] DP: Dynamic Programming. 
\item[d] MPC: Model Predictive Control. 
\item[e] MIP: Mixed-Integer Programming. 
\item[f] MCP: Mixed Complementarity Problem.
\end{tablenotes}
\end{threeparttable}
\vspace{-1.5em}
\end{table*}

\subsection{Summary of Contributions}
\paragraph{1) Optimal Production Plan.}
We develop a framework to maximize profit in RCHP real-time operations, integrating four market participation models into a unified structure. Although the RCHP's bidirectional grid participation generally introduces non-convexity due to differing costs as a producer or consumer, we demonstrate that the optimization can be exactly relaxed into a convex program under practical market conditions. We leverage this property to derive a closed-form solution, resulting in an efficient threshold policy that maps renewable generation and real-time locational marginal price (LMP) to optimal electrolyzer inputs and grid transactions.

Fig.~\ref{fig:Optimality} (a) shows the structure of the optimal hydrogen production plan under M2 when the electrolyzer capacity $Q_{\mbox{\tiny\sf H}}$ is smaller than the renewable power capacity $Q_{\mbox{\tiny\sf R}}$, where the plane of LMP $\pi^{\mbox{\tiny\sf LMP}}$ and renewable generation $Q$ are partitioned into four distinct regions $\reg{1}$-$\reg{4}$. When the LMP $\pi^{\mbox{\tiny\sf LMP}}$ is high in region $\reg{3}$, the RCHP produces no hydrogen and exports all its renewable power to the grid. In region $\reg{4}$ where the colocated renewable power $Q$ is abundant but LMP $\pi^{\mbox{\tiny\sf LMP}}$ is below threshold $\overline{\pi}^{\mbox{\tiny\sf LMP}}$, the RCHP maximizes its hydrogen production and exports the surplus renewable power. When the LMP $\pi^{\mbox{\tiny\sf LMP}}$ and renewable energy level $Q$ are both low in the region $\reg{1}$, the RCHP imports power and maximizes its hydrogen production. Perhaps the most intriguing is when the renewable energy level $Q$ is low, but the LMP falls between the lower and upper price thresholds, $\pi^{\mbox{\tiny\sf LMP}} \in (\underline{\pi}^{\mbox{\tiny\sf LMP}}, \overline{\pi}^{\mbox{\tiny\sf LMP}})$.  In this case, using only its renewable power for hydrogen production is optimal, making the RCHP a net-zero producer. This net-zero region arises from the discrepancy between RECs' purchasing and selling prices. See Sec.~\ref{sec:optimalplan}. 

\vspace{-0.5em}
\begin{figure}[h]
\centering
\includegraphics[scale=0.3]{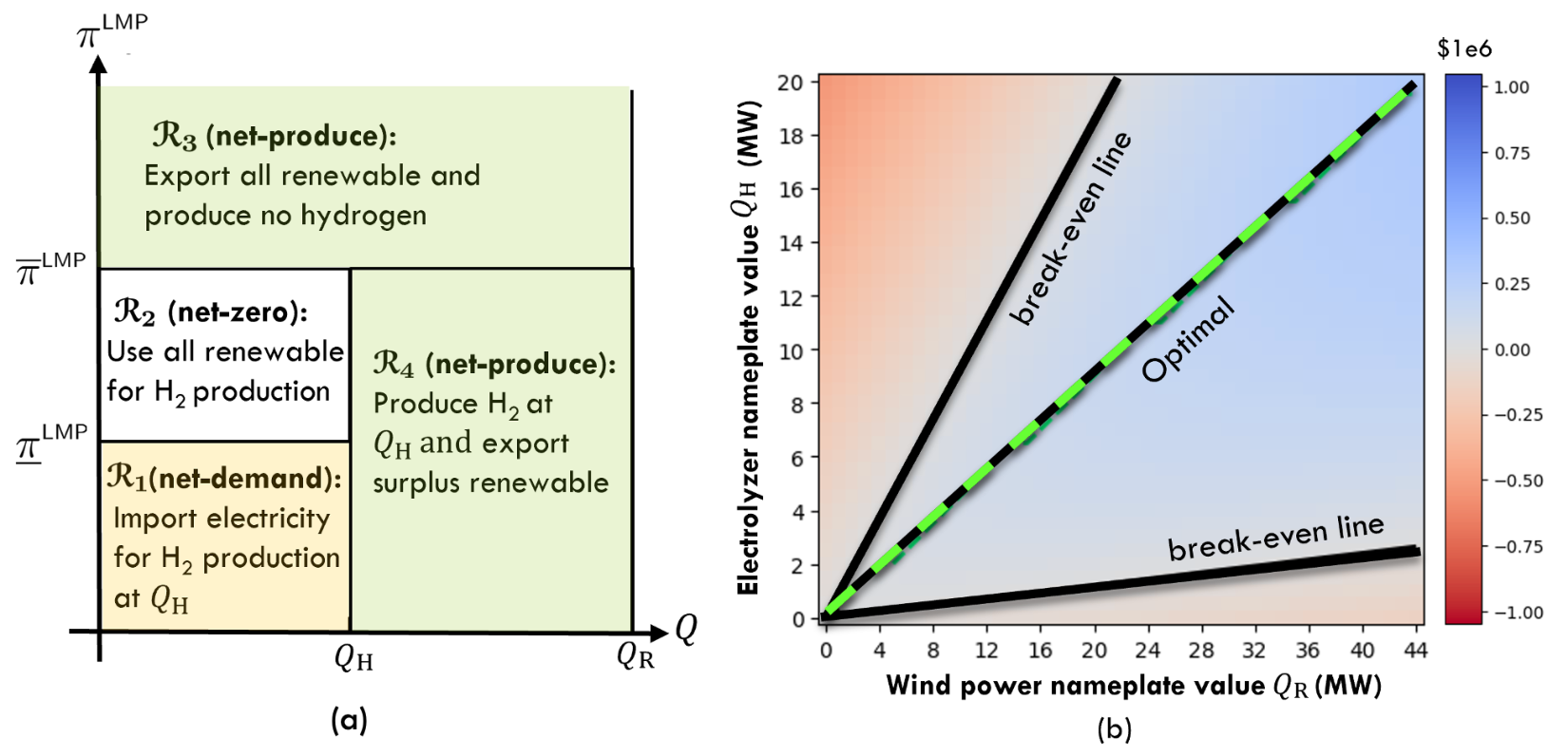}
\vspace{-1em}
\caption{\scriptsize (a) Optimal hydrogen production policy for flexible  RCHP (M2).  (b) Operating profit heatmap as the function of the electrolyzer and renewable nameplate capacities. }\label{fig:Optimality}
\vspace{-2em}
\end{figure}

\paragraph{2) Short-run Profit and Optimal Nameplate Capacities.}
We analyze the profitability of the four RCHP models under the optimal hydrogen production policy, deriving closed-form expressions for expected operating profit as functions of renewable and electrolyzer capacities as shown in Fig.~\ref{fig:Optimality} (b). We show that the profitable region (in blue) and the deficit region (in orange) are convex cones with linear boundaries (in black) being the break-even region. Within the convex cones, profits or deficits grow in proportion to linearly increasing capacities.

The profit function suggests an optimal design trade-off: oversized electrolyzers lead to underutilization, while undersized electrolyzers waste renewable power, both reducing profitability. Theorem \ref{Thm:profitability} presents the optimal matched electrolyzer capacity as a linear function of renewable capacity, as shown in Fig.~\ref{fig:Optimality} (b). As a by-product, our analysis also produces an easily computable profit forecasting tool. Specifically, the expressions given in Proposition \ref{Prop:linear} can be used to predict future profits using historical renewable power and LMP statistics.

\paragraph{3) Insights from Empirical Studies.}
Empirical studies based on actual LMP and renewable generation profiles from three independent system operators (NYISO, CAISO, and MISO) are presented in Sec.~\ref{sec:simulation}, revealing key insights into RCHP operation and profitability. Sensitivity analyses indicate that the hydrogen price and environmental subsidies play a pivotal role in shaping RCHP profitability and determining the preferred participation model. The profitability gap between the producer/consumer models (M1-p and M1-c) and the prosumer model (M2) varies with market conditions, with close alignment at low or high hydrogen prices.

We also observe pronounced cross-regional and resource-specific patterns. Distinct LMP and renewable generation profiles across ISOs lead to different profitability outcomes for RCHP participation models. For instance, in MISO, high wind availability enhances revenues from renewable sales, making the producer model more profitable than the consumer model, whereas the opposite trend emerges in NYISO and CAISO. Across all regions, wind-based RCHPs generally achieve higher utilization rates for hydrogen production than solar-based ones, as the concentrated generation peaks of solar led to more curtailment and surplus market sales.

For future reference, key variables and system parameters are summarized in Table \ref{tab:symbols}.
\begin{table}[htbp]
\begin{center}
\caption{\small RCHP decision variables and system parameters.}\label{tab:symbols}
\begin{tabularx}{\columnwidth}{@{}lX@{}}
\toprule
\multicolumn{2}{@{}l}{Indices and Time Parameters} \\ [0.5ex] 
$t$ & Index for scheduling intervals \\ 
$\Delta T$ & Duration of each scheduling interval\\ [0.5ex] 
\hline
\multicolumn{2}{@{}l}{Exogenous variables} \\ [0.5ex] 
$\eta_t$ & Capacity factor of renewable generation at $t$ \\
$\pi^{\mbox{\tiny\sf H}} $  & Hydrogen market price \\
$\pi^{\mbox{\tiny\sf LMP}}_t$  & Real-time locational marginal price at $t$\\ [0.5ex]
\hline
\multicolumn{2}{@{}l}{Decision variables} \\ [0.5ex]
$H_t$ & Hydrogen production at $t$ \\
$\Pbf_t$ & Decision variables in the profit maximization at $t$\\
$P^{\mbox{\tiny\sf EX}}_t$ & Power exported to the grid at $t$\\
$P^{\mbox{\tiny\sf H}}_t$ & Electrolyzer input power at $t$\\
$P^{\mbox{\tiny\sf IM}}_t$ & Power imported from the grid at $t$\\ [0.5ex]
\hline
\multicolumn{2}{@{}l}{System parameters} \\ [0.5ex]
$\alpha^{\mbox{\tiny\sf H}}$, $\alpha^{\mbox{\tiny\sf R}}$ & Per-unit cost of electrolyzer/renewable capacity \\
$c^{\mbox{\tiny\sf W}}$ & Non-electricity marginal cost  of hydrogen supply\\
$\gamma$& Electrolyzer efficiency factor\\
$Q_{\mbox{\tiny\sf H}}$, $Q_{\mbox{\tiny\sf R}}$& Electrolyzer/renewable nameplate capacity\\[0.5ex]
\hline
\multicolumn{2}{@{}l}{Production credits} \\ [0.5ex]
$\tau^{\mbox{\tiny\sf H}}$, $\tau^{\mbox{\tiny\sf R}}$ & Per-unit hydrogen/renewable production credit\\
$\tau^{\mbox{\tiny\sf EX}}_{\mbox{\tiny\sf REC}}$ & REC prices for exported renewable\\ 
$\tau^{\mbox{\tiny\sf IM}}_{\mbox{\tiny\sf REC}}$ & REC prices for imported renewable\\ [0.5ex]
\bottomrule
\end{tabularx}
 \end{center}
 \vspace{-2em}
 \end{table}


\section{Production Model, Cost, and Profit}\label{sec:formulation}
\paragraph{1) Production Model.} We adopt the standard linear hydrogen production model for an electrolyzer \cite{Glenk_Reichelstein_2019}:
 \begin{equation}
H_{t} = \gamma P_t^{\mbox{\tiny\sf H}}\Delta T, \label{eq:linear_prod}
\end{equation}
where $H_{t}$ represents the hydrogen produced (kg) in interval $t$, $P_t^{\mbox{\tiny\sf H}}$ the power used for hydrogen production (kW), $\gamma$ the electrolyzer's efficiency factor (kg/kWh), and  $\Delta T$ the scheduling interval duration, assumed to be aligned with the wholesale market pricing interval.   Without loss of generality, we assume $\Delta T=1$.

The linear model \eqref{eq:linear_prod} provides a good approximation of the electrolyzer's production behavior while maintaining analytical tractability. It is important to note that $\gamma$ represents the effective system-wide conversion efficiency, and can be readily scaled down to subsume the auxiliary energy consumption of localized hydrogen compression \cite{ArmijoPhilibert:20IHE, McDonag&etal:20AE}. Furthermore, this linear model can be extended to a piecewise linear formulation to better capture nonlinear operational characteristics. The main analytical results remain valid under this extension, although the optimal scheduling policy then involves an exponentially growing number of regions. See Appendix \ref{appendix:piecewise_prod} for details.

The input power of an electrolyzer with nameplate capacity $Q_{\mbox{\tiny\sf H}}$ is bounded, satisfying the following equation:
\begin{equation}
0 \leq P_t^{\mbox{\tiny\sf H}} \leq Q_{\mbox{\tiny\sf H}}. \label{eq:EL_cons}
\end{equation}

\paragraph{2) RCHP Fixed Costs.}
An RCHP's production cost includes both fixed and variable costs.
The fixed operating cost $C^{\mbox{\tiny\sf F}}$ covers the amortized capital investment, management, maintenance, insurance, and equipment degradation. It is typically assumed to be linear with respect to the nameplate capacities of the renewable plant and electrolyzer, denoted by $Q_{\mbox{\tiny\sf R}}$ and $Q_{\mbox{\tiny\sf H}}$, respectively (we discuss the relaxation to nonlinear cost structures in Appendix \ref{appendix:nonlinear_cost}.)
\bea \label{eq:CF}
    C^{\mbox{\tiny\sf F}}(Q_{\mbox{\tiny\sf R}}, Q_{\mbox{\tiny\sf H}})&=&
    \alpha^{\mbox{\tiny\sf R}}Q_{\mbox{\tiny\sf R}} + \alpha^{\mbox{\tiny\sf H}}Q_{\mbox{\tiny\sf H}},
\eea
where $\alpha^{\mbox{\tiny\sf R}}$ and $\alpha^{\mbox{\tiny\sf H}}$ represent the annual fixed costs per unit capacity of renewable and electrolyzer facilities, respectively \cite{Glenk_Reichelstein_2019, Reichelstein:15EE}.

\paragraph{3) RCHP Variable Costs.}
The marginal cost of hydrogen supply includes (a) the marginal cost of renewable energy, assumed negligible, (b) the marginal cost of grid-imported power, which is the sum of the time-varying LMP $\pi^{\mbox{\tiny\sf LMP}}_t$ and its associated REC price $\tau_{\mbox{\tiny\sf REC}}^{\mbox{\tiny\sf IM}}$, and (c) the non-electricity marginal cost $c^{\mbox{\tiny\sf W}}$, which accounts for consumable inputs such as water and the downstream logistics costs of hydrogen storage and transport to the point of sale \cite{IEA:19}.

We assume that the start-up and shut-down costs of the electrolyzer are negligible, as modern Proton Exchange Membrane (PEM) systems---the industry standard for integrating intermittent renewables---can undergo cold starts and rapid ramping with minimal thermal stress \cite{ButtlerSpliethoff:18RSER, IRENA:20}. Furthermore, operation-induced degradation effects can be readily incorporated either as a constant adder to the non-electricity marginal cost $c^{\mbox{\tiny\sf W}}$, or as a static derating factor applied to the electrolyzer efficiency $\gamma$ \cite{Campana&etal:25AE}.

\paragraph{4) RCHP's Revenue.}
We model the RCHP as a competitive price-taker in both the hydrogen and wholesale electricity markets, which is a standard modeling approach \cite{Dadkhah&etal:21IJHE, AkhavanMohsenian:14TSG}. This assumption is practically justified by the facility's negligible share in the overall markets, coupled with strict regulatory oversight by Independent Market Monitors (IMMs) \cite{Potomac:24} in wholesale electricity markets, which compel utility-scale resources to bid their true marginal values rather than exercising market power.

The revenue of an RCHP consists of income from (a) selling hydrogen at the market price $\pi^{\mbox{\tiny\sf H}}$ and receiving per-kg green hydrogen production credits $\tau^{\mbox{\tiny\sf H}}$, (b) exporting surplus renewable power to the energy market at the LMP $\pi^{\mbox{\tiny\sf LMP}}_t$ and earning the associated renewable energy certificates $\tau_{\mbox{\tiny\sf REC}}^{\mbox{\tiny\sf EX}}$, and (c) obtaining the renewable production tax credits $\tau^{\mbox{\tiny\sf R}}$.

\paragraph{5) RCHP's Gross Profit.}
Let vector $\theta$ denote the set of techno-economic system parameters:
$$\theta:=(\pi^{\mbox{\tiny\sf H}},\tau^{\mbox{\tiny\sf H}},\tau^{\mbox{\tiny\sf R}},\tau_{\mbox{\tiny\sf REC}}^{\mbox{\tiny\sf EX}},\tau_{\mbox{\tiny\sf REC}}^{\mbox{\tiny\sf IM}},\gamma,c^{\mbox{\tiny\sf W}},Q_{\mbox{\tiny\sf R}},Q_{\mbox{\tiny\sf H}}),$$
and let $\Pbf_t=\Big[P^{\mbox{\tiny\sf H}}_{t}, P^{\mbox{\tiny\sf EX}}_t, P^{\mbox{\tiny\sf IM}}_t \Big]$ be the vector of power dispatch variables: the electrolyzer input $P^{\mbox{\tiny\sf H}}_{t}$, the power exported to the grid  $P^{\mbox{\tiny\sf EX}}_t$, and the grid-imported power $P^{\mbox{\tiny\sf IM}}_t$. While the hydrogen price $\pi^{\mbox{\tiny\sf H}}$, environmental attributes $\tau^{\mbox{\tiny\sf H}},\tau^{\mbox{\tiny\sf R}},\tau_{\mbox{\tiny\sf REC}}^{\mbox{\tiny\sf EX}},\tau_{\mbox{\tiny\sf REC}}^{\mbox{\tiny\sf IM}}$, and marginal cost $c^{\mbox{\tiny\sf W}}$ are modeled as constant within the real-time operational horizon, this merely reflects the practical reality that these values remain significantly more stable compared to the high-frequency fluctuations of LMPs. It is important to note that our framework does not require these parameters to be time-invariant; any shift in policy or market conditions can be seamlessly integrated into the model as a change in $\theta$, without altering the underlying problem structure.

Given realized renewable capacity factor $\eta_t$ and electricity LMP $\pi^{\mbox{\tiny\sf LMP}}_t$, the {\em gross profit} of an RCHP under M2 in interval $t$ as a function of $\Pbf_t$  is
\begin{align}
J^{\mbox{\tiny\sf GP}}_\theta(\Pbf_t)&= (\pi^{\mbox{\tiny\sf H}}+\tau^{\mbox{\tiny\sf H}})(\gamma P^{\mbox{\tiny\sf H}}_{t})+(\pi^{\mbox{\tiny\sf LMP}}_{t}+\tau^{\mbox{\tiny\sf EX}}_{\mbox{\tiny\sf REC}}) P^{\mbox{\tiny\sf EX}}_t\nonumber\\
&+\tau^{\mbox{\tiny\sf R}}\eta_tQ_{\mbox{\tiny\sf R}}-c^{\mbox{\tiny\sf W}}\gamma P^{\mbox{\tiny\sf H}}_{t}- (\pi^{\mbox{\tiny\sf LMP}}_t+\tau^{\mbox{\tiny\sf IM}}_{\mbox{\tiny\sf REC}})P_t^{\mbox{\tiny\sf IM}},\label{eq:objwREC}
\end{align}
where the first three terms on the right-hand side are revenues from selling produced hydrogen in the hydrogen market, exporting renewable energy to the energy market, and obtaining renewable production credits, respectively. The last two terms are the costs associated with consumable inputs and the import of certified renewables from the grid. Note that this model is conservative in the sense that it does not consider the possibility of the RCHP retaining self-generated RECs to reduce REC purchases.

\paragraph{6) Hydrogen Storage and Delivery System.}
For analytical tractability and generalizability, we assume that the produced hydrogen can be stored or delivered without explicit capacity constraints. Given the diverse and emerging industrial paradigms, such as pipeline blending \cite{Topolski&etal:22NREL, CleggMancarella:16IET}, direct off-take \cite{ArmijoPhilibert:20IHE}, and truck transportation, modeling each of these methods is nontrivial, and each imposes different physical limits.
Consequently, specializing to a single storage and delivery mechanism would limit our framework's broad applicability. This unconstrained approach also aligns with existing literature \cite{Glenk_Reichelstein_2019}. To quantitatively justify this simplification, we provide an ex-post numerical evaluation in Appendix \ref{appendix:sto_limits}, which demonstrates that finite storage limits result in marginal profit losses (<3.1\%) across all market participation models.

\section{Profit-Maximizing Production}\label{sec:optimalplan}
This section presents the optimal RCHP production plan in closed form, from which we gain intuitions and insights into the operational strategies of the RCHP. For presentation simplicity, we assume positive LMPs. The general solution involving negative LMPs can be found in the Appendix.

\subsection{Profit Maximization}
At each time interval, the RCHP maximizes its gross profit by setting the optimal hydrogen production level and the amount of renewable power to trade in the market.
The profit maximization program under the prosumer model M2 is formulated as follows.
\begin{subequations}\label{eq:maxprofit}
\begin{align}
& \hspace{-1em} \underset{\Pbf_t=(P^{\mbox{\tiny\sf H}}_{t}, P^{\mbox{\tiny\sf EX}}_t, P^{\mbox{\tiny\sf IM}}_t)}{\rm maximize}&&
 J^{\mbox{\tiny\sf GP}}_\theta(\Pbf_t)\\ \nn
& \hspace{-1em} \mbox{subject to constraints}  && \\
& \hspace{-1em} \mbox{(power balance)} &&  0\le P^{\mbox{\tiny\sf H}}_t+P^{\mbox{\tiny\sf EX}}_t-P^{\mbox{\tiny\sf IM}}_t\le \eta_t Q_{\mbox{\tiny\sf R}},\label{eq:cons2}\\
& \hspace{-1em} \mbox{(I/O complementarity)} && P^{\mbox{\tiny\sf IM}}_tP^{\mbox{\tiny\sf EX}}_t=0,\label{eq:cons3}\\
& \hspace{-1em} \mbox{(electrolyzer input limit)} && 0 \leq P^{\mbox{\tiny\sf H}}_{t} \leq Q_{\mbox{\tiny\sf H}},\label{eq:cons4}\\
& \hspace{-1em} \mbox{(renewable export limit)} && 0 \leq  P^{\mbox{\tiny\sf EX}}_{t}  \leq \eta_tQ_{\mbox{\tiny\sf R}},\label{eq:cons5}\\
& \hspace{-1em} \mbox{(grid-import limit)} && 0 \leq  P^{\mbox{\tiny\sf IM}}_{t}  \leq Q_{\mbox{\tiny\sf H}}.\label{eq:cons6}
\end{align}
\end{subequations}
\vspace{-1em}
\begin{rmk}[Exact Convex Relaxation]\label{rmk:convexity}
    While the complementarity constraint \eqref{eq:cons3} is non-convex, it can be exactly relaxed under the no risk-free arbitrage condition for RECs, \ie$\tau_{\mbox{\tiny\sf REC}}^{\mbox{\tiny\sf IM}} > \tau_{\mbox{\tiny\sf REC}}^{\mbox{\tiny\sf EX}}$. Consequently, the profit maximization \eqref{eq:maxprofit} can be equivalently reformulated and solved as a convex program. 
\end{rmk}

Proof sketch in Appendix~\ref{sec:proof_rmk}.
The price spread $\tau_{\mbox{\tiny\sf REC}}^{\mbox{\tiny\sf IM}} > \tau_{\mbox{\tiny\sf REC}}^{\mbox{\tiny\sf EX}}$ is consistent with prevailing market structures. It ensures that simultaneous power injection and withdrawal is economically suboptimal, indicating that the optimal solution to the relaxed convex program inherently satisfies the original complementarity constraint \eqref{eq:cons3}.

This optimization framework further encompasses other RCHP market participation models as special cases. To ensure non-trivial operation across these cases, we assume that producing hydrogen using self-generated renewable power and selling it at least breaks even for RCHP, \ie $\pi^{\mbox{\tiny\sf H}}+\tau^{\mbox{\tiny\sf H}}-c^{\mbox{\tiny\sf W}}\geq 0$. This guarantees that the electrolyzer does not remain shut down. Accordingly, under the standalone model M0, $P^{\mbox{\tiny\sf IM}}_t=P^{\mbox{\tiny\sf EX}}_t=0$ and the optimal hydrogen production is $H_t^*=\gamma \min\{\eta Q_{\mbox{\tiny\sf R}}, Q_{\mbox{\tiny\sf H}}\}$. Under the producer model M1-p, $P^{\mbox{\tiny\sf IM}}_t=0$. Under the consumer model M1-c, $P^{\mbox{\tiny\sf EX}}_t=0$.

\subsection{Structure of Optimal Hydrogen Production} \label{sec:structure}
We first describe the structure of the optimal solution of \eqref{eq:maxprofit}
and present intuitions behind the solution, followed by the closed-form optimal production plan in  Theorem~\ref{Thm:solution}.

Fig. \ref{fig:Optimalplan_M2} shows the structure of the optimal production under the four RCHP models on the price-quantity plane: the x-axis represents the level of renewable generation and the y-axis the real-time LMP. Two pairs of thresholds define the optimal production plan: LMP thresholds  ($\underline{\pi}^{\mbox{\tiny\sf LMP}}$, $\overline{\pi}^{\mbox{\tiny\sf LMP}}$) on the y-axis and renewable thresholds $(Q_{\mbox{\tiny\sf H}}, Q_{\mbox{\tiny\sf R}})$ on the x-axis. Both threshold pairs are functions of known system and market parameters and set before real-time operations. In particular, the renewable power thresholds are nameplate capacities of the electrolyzer and renewable plant, and the LMP thresholds are given by
\begin{equation}\label{eq:thresholds}
\begin{array}{cl}
\underline{\pi}^{\mbox{\tiny\sf LMP}} &= \gamma(\pi^{\mbox{\tiny\sf H}}+\tau^{\mbox{\tiny\sf H}}-c^{\mbox{\tiny\sf w}})-\tau_{\mbox{\tiny\sf REC}}^{\mbox{\tiny\sf IM}},\\[1pt]
\overline{\pi}^{\mbox{\tiny\sf LMP}} &= \gamma(\pi^{\mbox{\tiny\sf H}}+\tau^{\mbox{\tiny\sf H}}-c^{\mbox{\tiny\sf w}})-\tau_{\mbox{\tiny\sf REC}}^{\mbox{\tiny\sf EX}}.\\
\end{array}
\end{equation}
The gap between the two thresholds is from the REC price spread discussed in Remark~\ref{rmk:convexity}. It follows that $\overline{\pi}^{\mbox{\tiny\sf LMP}}>\underline{\pi}^{\mbox{\tiny\sf LMP}}$, and without loss of generality, we assume that $\underline{\pi}^{\mbox{\tiny\sf LMP}}>0$. Because these closed-form thresholds are parameterized by the system parameters in $\theta$, the RCHP's optimal scheduling policy is inherently adaptive. For instance, if green hydrogen credits phase out, \ie $\tau^{\mbox{\tiny\sf H}} = 0$, or REC rules change, the thresholds will automatically shift to reflect the new profit-maximizing boundaries. This analytical property demonstrates the robustness of the proposed strategy across diverse and evolving policy and market landscapes. Furthermore, the proposed framework easily accommodates explicit grid interconnection and hydrogen market absorption limits. These constraints simply introduce localized saturation effects, such as excess curtailment, while preserving the strict convexity and analytical tractability of the optimal scheduling policy.

\begin{figure}[!htb]
    \centering
    \includegraphics[scale=0.32]{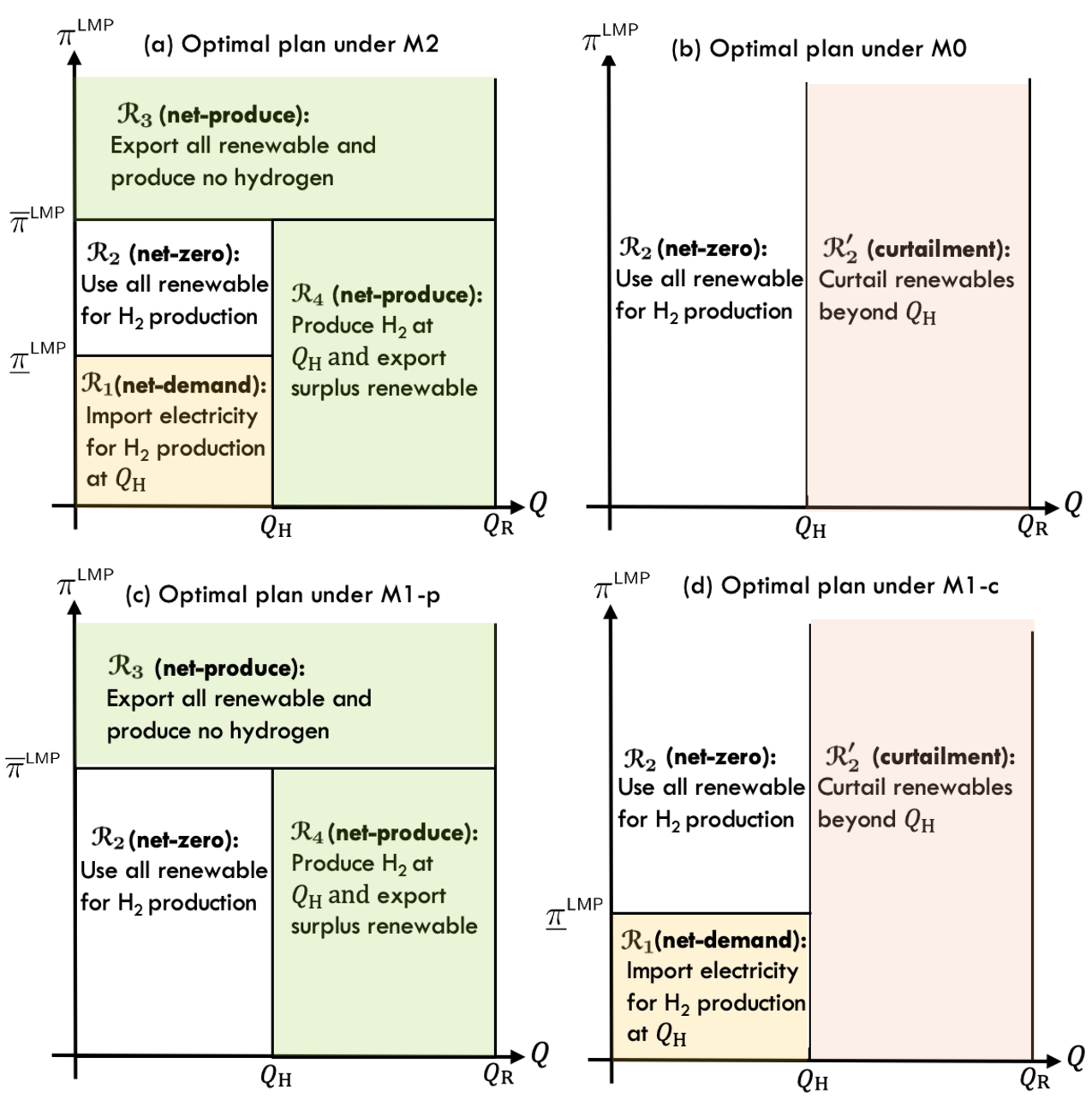}
    \caption{\scriptsize Optimal production plans of RCHP under different models when $Q_{\mbox{\tiny\sf H}}<Q_{\mbox{\tiny\sf R}}$. See Sec.~\ref{sec:Appendix_OptimalPlan} for the $Q_{\mbox{\tiny\sf H}}>Q_{\mbox{\tiny\sf R}}$ case.}
    \label{fig:Optimalplan_M2}
    \vspace{-1em}
\end{figure}

The optimal production plan has five operating regions $\reg{1}$-$\reg{4}$ and $\reg{2}'$.
To the market operator, the RCHP is a net consumer in $\reg{1}$, a net-zero participant in $\reg{2}$ and $\reg{2}'$ where the RCHP neither consumes from nor produces to the grid, and a renewable producer in $\reg{3}$ and $\reg{4}$.

Under the general prosumer model M2, the RCHP operates as follows in the $\reg{1}$-$\reg{4}$ regions, as shown in Fig. \ref{fig:Optimalplan_M2} (a).
\begin{itemize}
    \item[$\reg{1}$:]  $\reg{1}$ is the scenario of low LMP and limited renewable generation. The RCHP maximizes hydrogen production to the electrolyzer capacity $Q_{\mbox{\tiny\sf H}}$ by supplementing renewables with grid-imported power.

    \item[$\reg{2}$:] $\reg{2}$ is the scenario of moderate LMP and limited renewables. The RCHP uses all colocated renewable energy for hydrogen production without importing power from the grid.  The RCHP is a net-zero participant.

    \item [$\reg{3}$:] $\reg{3}$ is the high LMP scenario. The RCHP exports all renewable to the grid and produces zero hydrogen.

    \item [$\reg{4}$:] $\reg{4}$ is the high renewable scenario, where colocated renewable generation is beyond the electrolyzer capacity, and the LMP is moderate. RCHP maximizes hydrogen production to $Q_{\mbox{\tiny\sf H}}$ and sells the surplus renewable energy to the wholesale market.
\end{itemize}
The optimal production under other participation models are slight generalizations of  Fig. \ref{fig:Optimalplan_M2} (a) with new scenarios corresponding to renewable curtailment under M0 and M1-c.  Under the standalone model M0 shown in Fig. \ref{fig:Optimalplan_M2} (b), the RCHP has no connection to the grid.  It is optimal to use all renewables to produce hydrogen up to the electrolyzer capacity and curtail the over production.

Under the producer model M1-p, as shown in Fig. \ref{fig:Optimalplan_M2} (c), the RCHP behaves the same as under M2 in $\reg{3}$ and $\reg{4}$. Since it cannot import power, the net-zero region $\reg{2}$ extends downward to replace $\reg{1}$ in Fig. \ref{fig:Optimalplan_M2} (a).

Under the consumer model M1-c, as shown in Fig. \ref{fig:Optimalplan_M2} (d), the RCHP behaves the same as under M2 in $\reg{1}$. Since the RCHP cannot export renewable, the net-zero region $\reg{2}$ extends to replace $\reg{3}$ in Fig. \ref{fig:Optimalplan_M2} (a). Additionally, any renewable beyond the electrolyzer capacity must be curtailed.

\subsection{Optimal Production in Closed Form} \label{sec:solution}
Theorem~\ref{Thm:solution} below validates the solution structure in
Fig.~\ref{fig:Optimalplan_M2} and provides explicit expressions for the solution to the profit maximization program \eqref{eq:maxprofit}. The proof is given in Appendix~\ref{sec:Appendix_OptimalPlan}.

\begin{theorem}\label{Thm:solution}
Under the prosumer model M2 (Fig.~\ref{fig:Optimalplan_M2} (a)) and positive LMP, the solution $\Pbf^*_t=\Big[P^{\mbox{\tiny\sf H}*}_t,P^{\mbox{\tiny\sf EX}*}_t,P^{\mbox{\tiny\sf IM}*}_t\Big]$ of \eqref{eq:maxprofit} in interval $t$ as a function of $\pi^{\mbox{\tiny\sf LMP}}_t$ and capacity factor $\eta_t$ is
{
\begin{eqnarray}\label{eq:optsol}
\mathbf{P}^*_t &=&F^{\mbox{\tiny\sf M2}}_\theta(\pi_t^{\mbox{\tiny\sf LMP}},\eta_t)\nonumber\\
&:=&
\left\{
\begin{aligned}
&\Big[Q_{\mbox{\tiny\sf H}},(\eta_t Q_{\mbox{\tiny\sf R}} - Q_{\mbox{\tiny\sf H}})^+, (Q_{\mbox{\tiny\sf H}} - \eta_t Q_{\mbox{\tiny\sf R}})^+\Big],
\pi^{\mbox{\tiny\sf LMP}}_t \le \underline{\pi}^{\mbox{\tiny\sf LMP}};\\[2pt]
&\Big[0, \eta_t Q_{\mbox{\tiny\sf R}}, 0\Big],
\pi^{\mbox{\tiny\sf LMP}}_t \ge \overline{\pi}^{\mbox{\tiny\sf LMP}};\\[2pt]
&\Big[\min\{\eta_t Q_{\mbox{\tiny\sf R}}, Q_{\mbox{\tiny\sf H}}\},(\eta_t Q_{\mbox{\tiny\sf R}} - Q_{\mbox{\tiny\sf H}})^+, 0\Big], \mbox{otherwise.}
\end{aligned}
\right.
\end{eqnarray}
}
Under the standalone model M0 (Fig.~\ref{fig:Optimalplan_M2} (b)),
\begin{equation}\label{eq:optsol0}
\mathbf{P}^*_t=F^{\mbox{\tiny\sf M0}}_\theta(\pi_t^{\mbox{\tiny\sf LMP}},\eta_t):=\Big[\min\{\eta_t Q_{\mbox{\tiny\sf R}}, Q_{\mbox{\tiny\sf H}}\},\ 0,\ 0\Big].
\end{equation}
Under the producer model M1-p (Fig.~\ref{fig:Optimalplan_M2} (c)),
\begin{eqnarray}\label{eq:optsol01p}
\mathbf{P}^*_t&=&F^{\mbox{\tiny\sf M1p}}_\theta(\pi_t^{\mbox{\tiny\sf LMP}},\eta_t)\\
&:=&\left\{
\begin{aligned}&\Big[0,\ \eta_t Q_{\mbox{\tiny\sf R}},\ 0\Big], \pi^{\mbox{\tiny\sf LMP}}_t \ge \overline{\pi}^{\mbox{\tiny\sf LMP}};\\[2pt]
&\big(\min\{\eta_t Q_{\mbox{\tiny\sf R}}, Q_{\mbox{\tiny\sf H}}\},(\eta_t Q_{\mbox{\tiny\sf R}} - Q_{\mbox{\tiny\sf H}})^+, 0\big), \mbox{otherwise}.\\
\end{aligned}
\right. \nonumber
\end{eqnarray}
Under the producer model M1-c (Fig.~\ref{fig:Optimalplan_M2} (d)),
\begin{eqnarray}\label{eq:optsol01c}
\mathbf{P}^*_t&=&F^{\mbox{\tiny\sf M1c}}_\theta(\pi_t^{\mbox{\tiny\sf LMP}},\eta_t)\\
&:=&
\left\{
\begin{aligned}
&\Big[Q_{\mbox{\tiny\sf H}},\ 0,\ (Q_{\mbox{\tiny\sf H}} - \eta_t Q_{\mbox{\tiny\sf R}})^+\Big], \pi^{\mbox{\tiny\sf LMP}}_t \le \underline{\pi}^{\mbox{\tiny\sf LMP}};\\[2pt]
&\Big[\min\{\eta_t Q_{\mbox{\tiny\sf R}}, Q_{\mbox{\tiny\sf H}}\}, 0, 0\Big], \quad \text{otherwise}.
\end{aligned}
\right.\nonumber
\end{eqnarray}
\end{theorem}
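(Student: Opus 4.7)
The non-convexity in \eqref{eq:maxprofit} comes entirely from the complementarity constraint \eqref{eq:cons3}, so the plan is to split the feasible set along it: Subproblem A imposes $P^{\mbox{\tiny\sf IM}}_t=0$ and Subproblem B imposes $P^{\mbox{\tiny\sf EX}}_t=0$. Each is a linear program whose optimum is read off a simple marginal-value comparison, and the optimum of \eqref{eq:maxprofit} is the larger of the two subproblem optima. The M0, M1-p and M1-c statements will then follow by additionally fixing $P^{\mbox{\tiny\sf EX}}_t=P^{\mbox{\tiny\sf IM}}_t=0$, $P^{\mbox{\tiny\sf IM}}_t=0$, or $P^{\mbox{\tiny\sf EX}}_t=0$ in the same analysis.

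After substituting $H_t=\gamma P^{\mbox{\tiny\sf H}}_t$ and dropping the constant $\tau^{\mbox{\tiny\sf R}}\eta_tQ_{\mbox{\tiny\sf R}}$, the objective \eqref{eq:objwREC} is linear in $\Pbf_t$ with per-unit coefficients $c_{\mbox{\tiny\sf H}}:=\gamma(\pi^{\mbox{\tiny\sf H}}+\tau^{\mbox{\tiny\sf H}}-c^{\mbox{\tiny\sf W}})$ on $P^{\mbox{\tiny\sf H}}_t$, $c_{\mbox{\tiny\sf E}}:=\pi^{\mbox{\tiny\sf LMP}}_t+\tau^{\mbox{\tiny\sf EX}}_{\mbox{\tiny\sf REC}}$ on $P^{\mbox{\tiny\sf EX}}_t$, and $-c_{\mbox{\tiny\sf I}}:=-(\pi^{\mbox{\tiny\sf LMP}}_t+\tau^{\mbox{\tiny\sf IM}}_{\mbox{\tiny\sf REC}})$ on $P^{\mbox{\tiny\sf IM}}_t$. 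The thresholds \eqref{eq:thresholds} are exactly the LMPs at which $c_{\mbox{\tiny\sf H}}=c_{\mbox{\tiny\sf E}}$ (giving $\overline{\pi}^{\mbox{\tiny\sf LMP}}$) and $c_{\mbox{\tiny\sf H}}=c_{\mbox{\tiny\sf I}}$ (giving $\underline{\pi}^{\mbox{\tiny\sf LMP}}$). In Subproblem A the available renewable $\eta_tQ_{\mbox{\tiny\sf R}}$ is greedily split between hydrogen and export subject to $P^{\mbox{\tiny\sf H}}_t\le Q_{\mbox{\tiny\sf H}}$: exporting dominates when $\pi^{\mbox{\tiny\sf LMP}}_t\ge\overline{\pi}^{\mbox{\tiny\sf LMP}}$, otherwise the electrolyzer is filled to $\min\{\eta_tQ_{\mbox{\tiny\sf R}},Q_{\mbox{\tiny\sf H}}\}$ and only the surplus $(\eta_tQ_{\mbox{\tiny\sf R}}-Q_{\mbox{\tiny\sf H}})^+$ is exported. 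In Subproblem B all renewable feeds the electrolyzer (since $c_{\mbox{\tiny\sf H}}>0$), and importing is profitable iff $c_{\mbox{\tiny\sf H}}>c_{\mbox{\tiny\sf I}}$, i.e. $\pi^{\mbox{\tiny\sf LMP}}_t<\underline{\pi}^{\mbox{\tiny\sf LMP}}$, in which case $P^{\mbox{\tiny\sf H}}_t=Q_{\mbox{\tiny\sf H}}$ and $P^{\mbox{\tiny\sf IM}}_t=(Q_{\mbox{\tiny\sf H}}-\eta_tQ_{\mbox{\tiny\sf R}})^+$.

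Comparing the two subproblem optima across the three LMP regimes then produces \eqref{eq:optsol}: for $\pi^{\mbox{\tiny\sf LMP}}_t\ge\overline{\pi}^{\mbox{\tiny\sf LMP}}$ Subproblem A dominates because $c_{\mbox{\tiny\sf E}}\ge c_{\mbox{\tiny\sf H}}$ and $\eta_tQ_{\mbox{\tiny\sf R}}\ge\min\{\eta_tQ_{\mbox{\tiny\sf R}},Q_{\mbox{\tiny\sf H}}\}$; for $\underline{\pi}^{\mbox{\tiny\sf LMP}}\le\pi^{\mbox{\tiny\sf LMP}}_t\le\overline{\pi}^{\mbox{\tiny\sf LMP}}$ the two subproblems coincide on the hydrogen-first, no-import policy; and for $\pi^{\mbox{\tiny\sf LMP}}_t<\underline{\pi}^{\mbox{\tiny\sf LMP}}$ the import gap $(c_{\mbox{\tiny\sf H}}-c_{\mbox{\tiny\sf I}})(Q_{\mbox{\tiny\sf H}}-\eta_tQ_{\mbox{\tiny\sf R}})^+$ makes Subproblem B strictly win when $\eta_tQ_{\mbox{\tiny\sf R}}<Q_{\mbox{\tiny\sf H}}$ while A and B agree when $\eta_tQ_{\mbox{\tiny\sf R}}\ge Q_{\mbox{\tiny\sf H}}$. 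The restricted-model statements are immediate corollaries: \eqref{eq:optsol0} is the trivial LP in $P^{\mbox{\tiny\sf H}}_t$ alone, \eqref{eq:optsol01p} is exactly Subproblem A, and \eqref{eq:optsol01c} is exactly Subproblem B.

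The main obstacle is not conceptual but a matter of complete case enumeration: within each of the six combinations of the three LMP regimes and the two renewable-abundance cases $\eta_tQ_{\mbox{\tiny\sf R}}\gtrless Q_{\mbox{\tiny\sf H}}$, one must verify which of the electrolyzer bound $P^{\mbox{\tiny\sf H}}_t\le Q_{\mbox{\tiny\sf H}}$ and the power-balance bound is active so that the $(\cdot)^+$ expressions resolve consistently with \eqref{eq:optsol}. The positive-LMP assumption and the no-arbitrage ordering $\underline{\pi}^{\mbox{\tiny\sf LMP}}<\overline{\pi}^{\mbox{\tiny\sf LMP}}$ eliminate the remaining ambiguous branches, leaving precisely the piecewise formulas in \eqref{eq:optsol}--\eqref{eq:optsol01c}.
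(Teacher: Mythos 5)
Your proposal is correct and follows essentially the same route as the paper's own proof: both split the non-convex program along the complementarity constraint \eqref{eq:cons3} into the two linear programs with $P^{\mbox{\tiny\sf IM}}_t=0$ and $P^{\mbox{\tiny\sf EX}}_t=0$, solve each by a marginal-value (threshold) comparison, take the larger optimum, and obtain M0, M1-p, and M1-c as the corresponding restricted LPs. The only difference is that the paper also carries out the comparison for negative LMPs, using the additional thresholds $-\tau^{\mbox{\tiny\sf IM}}_{\mbox{\tiny\sf REC}}$ and $-\tau^{\mbox{\tiny\sf EX}}_{\mbox{\tiny\sf REC}}$, which is unnecessary under the theorem's positive-LMP hypothesis.
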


Here, the notation $(x)^+ := \max\{0, x\}$ denotes the standard positive part operator.
Note that the thresholds are computed a priori.  Utilizing the derived solution in Theorem~\ref{Thm:solution}, the RCHP can operate in real time by directly mapping the LMP and renewable generation to hydrogen production and grid participation decisions, thereby eliminating the need to repeatedly solve the optimization problem \eqref{eq:maxprofit}.

\section{Profitability and Capacity Matching}\label{sec:nameplate}
With the real-time operational scheduling established in Sec.~\ref{sec:optimalplan}, Sec.~\ref{sec:nameplate} addresses the system's profitability and capacity sizing. Evaluating these investment decisions requires considering operations over a multi-period horizon and accounting for the stochastic nature of LMPs and renewable generation. We specifically focus on the {\em expected  operating profit (OP)}, defined as the expected gross profit minus other expenses beyond the cost of goods sold, including the fixed operating cost defined in (\ref{eq:CF}). Since there is no temporal coupling across scheduling periods, the optimal solution to the single-period profit maximization (\ref{eq:maxprofit}) directly applies to each interval within the evaluation horizon. Consequently, the multi-period expected gross profit of the RCHP can be accurately captured by aggregating the expected values of these optimized single-period outcomes. This bridges the operational and investment timescales, allowing us to answer the following questions in this section:
\ben
\item Will the RCHP achieve a non-negative expected operating profit and therefore be deemed profitable?
\item How do the nameplate capacities of the renewable and the electrolyzer, $(Q_{\mbox{\tiny\sf R}},Q_{\mbox{\tiny\sf H}})$, affect profitability?
\item Given a fixed cost budget, what are the optimal capacities for the RCHP's renewable generation and electrolyzer, $(Q_{\mbox{\tiny\sf R}},Q_{\mbox{\tiny\sf H}})$?
\een
The last question is particularly relevant in practice, as the definition of green hydrogen may require that electrolyzers and renewables be invested jointly \cite{EUCMDR:23}.

\subsection{Stochastic Profit Maximization}
Because renewable generation and LMPs are random processes, we formulate the stochastic profit maximization over an $n$-period horizon by taking the expectation over their joint trajectories. Building upon the temporal decoupling discussed above, it follows naturally that the single-period optimal production plan derived in Sec.~\ref{sec:optimalplan} serves as the optimal decision under each realization. Let $\Pbf^*_t(\pi^{\mbox{\tiny\sf LMP}}_t,\eta_t;Q_{\mbox{\tiny\sf R}},Q_{\mbox{\tiny\sf H}})$ be the solution to (\ref{eq:maxprofit}) provided in Theorem~\ref{Thm:solution}. The expected $n$-period operating profit, expressed as a function of the respective nameplate capacities, is then given by the expected maximum gross profit accumulated over $n$ periods, minus the amortized fixed costs.
\begin{align} \label{eq:CFn}
J^{\mbox{\tiny\sf OP}}_n(Q_{\mbox{\tiny\sf R}},Q_{\mbox{\tiny\sf H}})
:=& \sum_{t=1}^n
\mbbE\Big[J^{\mbox{\tiny\sf GP}}_\theta\Big(\Pbf^*_t(\pi^{\mbox{\tiny\sf LMP}}_t,\eta_t;Q_{\mbox{\tiny\sf R}},Q_{\mbox{\tiny\sf H}})\Big)\Big]\nn\\
&-(\alpha^{\mbox{\tiny\sf R}}_nQ_{\mbox{\tiny\sf R}}+\alpha^{\mbox{\tiny\sf H}}_nQ_{\mbox{\tiny\sf H}}),
\end{align}
where $(\alpha_n^{\mbox{\tiny\sf R}},\alpha_n^{\mbox{\tiny\sf H}})$ are the $n$-period amortized per-unit fixed costs computed from $(\alpha^{\mbox{\tiny\sf R}},\alpha^{\mbox{\tiny\sf H}})$ in (\ref{eq:CF}), with the computation details provided in Sec.~\ref{sec:Appendix_AmortizedCost}.

The structure of the optimal production plan allows us to derive a closed-form expression for the $n$-period operating profit via conditioning $(\pi_t^{\mbox{\tiny\sf LMP}},\eta_t)$ in regions $\reg{1}$-$\reg{4}$ in Fig.~\ref{fig:Optimalplan_M2}.

\begin{proposition}[Expected Operating Profit]\label{Prop:linear}
Denote the capacity ratio as $\kappa := Q_{\mbox{\tiny\sf H}}/Q_{\mbox{\tiny\sf R}}$. Let $P_{t,\kappa}^{(i)}$ be the probability that $(\pi_t^{\mbox{\tiny\sf LMP}},\eta_t) \in \reg{i}$, and
$\E_{t,\kappa}^{(i)}[\cdot]$ the conditional expectation operator (on $\reg{i}$) in interval $t$. The expected $n$-period operating profit is given by
\begin{equation}\label{eq:nCF1}
J^{\mbox{\tiny\sf OP}}_n(Q_{\mbox{\tiny\sf R}},Q_{\mbox{\tiny\sf H}})
=\big(\sum_{t=1}^nA^{\mbox{\tiny\sf R}}_{t,\kappa}-\alpha_n^{\mbox{\tiny\sf R}}\big)Q_{\mbox{\tiny\sf R}}+\big(\sum_{t=1}^nA^{\mbox{\tiny\sf H}}_{t,\kappa}-\alpha_n^{\mbox{\tiny\sf H}}\big)Q_{\mbox{\tiny\sf H}},
\end{equation}
where
\begin{align}
\nonumber
& \hspace{-2em} A^{\mbox{\tiny\sf R}}_{t,\kappa}=P_{t,\kappa}^{(1)}\Big((\tau^{\mbox{\tiny\sf IM}}_{\mbox{\tiny\sf REC}}+\tau^{\mbox{\tiny\sf R}})\E_{t,\kappa}^{(1)}[\eta_t]+\E_{t,\kappa}^{(1)}[\eta_t\pi_{t}^{\mbox{\tiny\sf LMP}}]\Big)\\ \nonumber
& +P_{t,\kappa}^{(2)}\Big(\big(\gamma(\pi^{\mbox{\tiny\sf H}}+\tau^{\mbox{\tiny\sf H}}-c^{\mbox{\tiny\sf W}})+\tau^{\mbox{\tiny\sf R}}\big)\E_{t,\kappa}^{(2)}[\eta_t] \Big)\\ \nonumber
& +P_{t,\kappa}^{(3)}\Big((\tau^{\mbox{\tiny\sf EX}}_{\mbox{\tiny\sf REC}}+\tau^{\mbox{\tiny\sf R}})\E_{t,\kappa}^{(3)}[\eta_t]+\E_{t,\kappa}^{(3)}[\eta_t\pi_t^{\mbox{\tiny\sf LMP}}] \Big)\\ \nonumber
& +P_{t,\kappa}^{(4)}\Big((\tau^{\mbox{\tiny\sf EX}}_{\mbox{\tiny\sf REC}}+\tau^{\mbox{\tiny\sf R}})\E_{t,\kappa}^{(4)}[\eta_t]+\E_{t,\kappa}^{(4)}[\eta_t\pi_t^{\mbox{\tiny\sf LMP}}] \Big),\\ \nonumber
& \hspace{-2em} A^{\mbox{\tiny\sf H}}_{t,\kappa}=P_{t,\kappa}^{(1)}\Big(\underline{\pi}^{\mbox{\tiny\sf LMP}}- \E_{t,\kappa}^{(1)}[\pi_t^{\mbox{\tiny\sf LMP}}]\Big)+P_{t,\kappa}^{(4)}\Big(\overline{\pi}^{\mbox{\tiny\sf LMP}}- \E_{t,\kappa}^{(4)}[\pi_t^{\mbox{\tiny\sf LMP}}]\Big).
\end{align}
\end{proposition}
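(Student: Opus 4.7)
\textbf{Proof proposal for Proposition~\ref{Prop:linear}.}
The plan is to start from the definition \eqref{eq:CFn}, push the summation outside by linearity of expectation, and then, in each interval $t$, evaluate $\mbbE\bigl[J^{\mbox{\tiny\sf GP}}_\theta(\Pbf^*_t)\bigr]$ by the law of total expectation, conditioning on which of the regions $\boldsymbol{\mathscr{R}_1},\ldots,\boldsymbol{\mathscr{R}_4}$ contains the random pair $(\pi_t^{\mbox{\tiny\sf LMP}},\eta_t)$. The four regions are determined by the thresholds $\underline{\pi}^{\mbox{\tiny\sf LMP}},\overline{\pi}^{\mbox{\tiny\sf LMP}}$ on the price axis and by $\kappa = Q_{\mbox{\tiny\sf H}}/Q_{\mbox{\tiny\sf R}}$ on the renewable axis (so that $\eta_t Q_{\mbox{\tiny\sf R}} \gtrless Q_{\mbox{\tiny\sf H}}$ becomes $\eta_t \gtrless \kappa$). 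Once inside a given region, the closed-form solution in Theorem~\ref{Thm:solution} eliminates the $\min$ and $(\cdot)^+$ operators in $\Pbf^*_t$, yielding a power-dispatch tuple that is \emph{affine} in $(Q_{\mbox{\tiny\sf R}},Q_{\mbox{\tiny\sf H}})$ with deterministic coefficients depending only on $\eta_t$.

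The core step is to substitute these region-specific dispatch tuples into $J^{\mbox{\tiny\sf GP}}_\theta(\Pbf_t)$ in \eqref{eq:objwREC} and collect terms linear in $Q_{\mbox{\tiny\sf R}}$ and $Q_{\mbox{\tiny\sf H}}$. For $\boldsymbol{\mathscr{R}_1}$ (low LMP, scarce renewables: $\Pbf^*_t=[Q_{\mbox{\tiny\sf H}},0,Q_{\mbox{\tiny\sf H}}-\eta_tQ_{\mbox{\tiny\sf R}}]$), the $Q_{\mbox{\tiny\sf H}}$ coefficient simplifies to $\gamma(\pi^{\mbox{\tiny\sf H}}+\tau^{\mbox{\tiny\sf H}}-c^{\mbox{\tiny\sf W}})-\tau_{\mbox{\tiny\sf REC}}^{\mbox{\tiny\sf IM}}-\pi_t^{\mbox{\tiny\sf LMP}} = \underline{\pi}^{\mbox{\tiny\sf LMP}}-\pi_t^{\mbox{\tiny\sf LMP}}$ by the definition \eqref{eq:thresholds}, and the $Q_{\mbox{\tiny\sf R}}$ coefficient is $\eta_t(\tau^{\mbox{\tiny\sf R}}+\tau_{\mbox{\tiny\sf REC}}^{\mbox{\tiny\sf IM}}+\pi_t^{\mbox{\tiny\sf LMP}})$. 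For $\boldsymbol{\mathscr{R}_4}$ ($\Pbf^*_t=[Q_{\mbox{\tiny\sf H}},\eta_tQ_{\mbox{\tiny\sf R}}-Q_{\mbox{\tiny\sf H}},0]$), an analogous simplification yields the $Q_{\mbox{\tiny\sf H}}$ coefficient $\overline{\pi}^{\mbox{\tiny\sf LMP}}-\pi_t^{\mbox{\tiny\sf LMP}}$ and $Q_{\mbox{\tiny\sf R}}$ coefficient $\eta_t(\tau^{\mbox{\tiny\sf R}}+\tau_{\mbox{\tiny\sf REC}}^{\mbox{\tiny\sf EX}}+\pi_t^{\mbox{\tiny\sf LMP}})$. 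Regions $\boldsymbol{\mathscr{R}_2}$ and $\boldsymbol{\mathscr{R}_3}$ contribute only a $Q_{\mbox{\tiny\sf R}}$ term, since $P^{\mbox{\tiny\sf H}*}_t$ there is either $\eta_tQ_{\mbox{\tiny\sf R}}$ (net-zero case) or $0$ (pure export), while $P^{\mbox{\tiny\sf EX}*}_t$ and $P^{\mbox{\tiny\sf IM}*}_t$ have no $Q_{\mbox{\tiny\sf H}}$ dependence.

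Taking the conditional expectation on each region multiplies each coefficient by the region's probability $P^{(i)}_{t,\kappa}$ and replaces $\eta_t$, $\pi_t^{\mbox{\tiny\sf LMP}}$, $\eta_t\pi_t^{\mbox{\tiny\sf LMP}}$ by their conditional expectations $\E_{t,\kappa}^{(i)}[\cdot]$. Summing the four region-contributions gives the per-interval linear function of $(Q_{\mbox{\tiny\sf R}},Q_{\mbox{\tiny\sf H}})$ whose $Q_{\mbox{\tiny\sf R}}$ coefficient equals $A^{\mbox{\tiny\sf R}}_{t,\kappa}$ and whose $Q_{\mbox{\tiny\sf H}}$ coefficient equals $A^{\mbox{\tiny\sf H}}_{t,\kappa}$, matching exactly the expressions in the statement. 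Summing over $t=1,\ldots,n$ and subtracting the amortized fixed cost $\alpha_n^{\mbox{\tiny\sf R}}Q_{\mbox{\tiny\sf R}}+\alpha_n^{\mbox{\tiny\sf H}}Q_{\mbox{\tiny\sf H}}$ yields \eqref{eq:nCF1}.

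The main obstacle is bookkeeping rather than any deep inequality: one must be careful to (i) verify that the four regions $\boldsymbol{\mathscr{R}_1},\ldots,\boldsymbol{\mathscr{R}_4}$ induced by the thresholds $(\underline{\pi}^{\mbox{\tiny\sf LMP}},\overline{\pi}^{\mbox{\tiny\sf LMP}},\kappa)$ form a partition of the $(\pi_t^{\mbox{\tiny\sf LMP}},\eta_t)$ plane of positive LMPs that matches Fig.~\ref{fig:Optimalplan_M2}(a), treating boundary events as measure-zero; (ii) confirm that on each region the optimal $\Pbf^*_t$ in \eqref{eq:optsol} reduces the $\min$ and $(\cdot)^+$ operators consistently, so the gross profit becomes genuinely affine in $(Q_{\mbox{\tiny\sf R}},Q_{\mbox{\tiny\sf H}})$; and (iii) apply the threshold identities in \eqref{eq:thresholds} at exactly the right step so that the hydrogen-side coefficients collapse to $\underline{\pi}^{\mbox{\tiny\sf LMP}}-\pi_t^{\mbox{\tiny\sf LMP}}$ and $\overline{\pi}^{\mbox{\tiny\sf LMP}}-\pi_t^{\mbox{\tiny\sf LMP}}$. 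Once these three points are in place, the identity follows by direct algebra and linearity of conditional expectation.
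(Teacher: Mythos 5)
Your proposal is correct and follows essentially the same route as the paper's proof: condition on the four regions, substitute the region-specific closed-form dispatch from Theorem~\ref{Thm:solution} into \eqref{eq:objwREC}, use the threshold identities \eqref{eq:thresholds} to collapse the $Q_{\mbox{\tiny\sf H}}$-coefficients to $\underline{\pi}^{\mbox{\tiny\sf LMP}}-\pi_t^{\mbox{\tiny\sf LMP}}$ and $\overline{\pi}^{\mbox{\tiny\sf LMP}}-\pi_t^{\mbox{\tiny\sf LMP}}$, take conditional expectations weighted by $P^{(i)}_{t,\kappa}$, sum over $t$, and subtract the amortized fixed cost. Your per-region coefficients match the paper's $V^{(i)}_t$ expressions exactly, so no gap remains.
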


A particularly useful application of Proposition~\ref{Prop:linear} is revenue and operating profit  forecasting.  By replacing theoretical probabilities and expectations with their respective empirical forms, we can estimate future profits based on historical or forecasted LMP and renewable trajectories. An example is given in Appendix \ref{sec:empirical_example}. Our numerical evaluations indicate that the accuracy of operating profit forecasts is comparable to that of renewable generation forecasts.

It is noteworthy that, under the proposed optimal production plan, the RCHP yields a higher expected operating profit than the configuration in which the electrolyzer and renewable energy source operate independently, as formalized in Proposition~\ref{Prop:colocation}.
\begin{proposition}[Colocation Profit Advantage]\label{Prop:colocation}
The expected OP of an RCHP exceeds the sum of expected OPs from separate operation of the electrolyzer and renewable source at identical capacities.
\end{proposition}
The proofs for Proposition~\ref{Prop:linear} and Proposition~\ref{Prop:colocation} are provided in the Appendix.

\subsection{Profitability and Matching Capacities}
We call an RCHP  {\em profitable} in an $n$-period operation if its (expected) operating profit is positive, $J^{\mbox{\tiny\sf OP}}_n(Q_{\mbox{\tiny\sf R}},Q_{\mbox{\tiny\sf H}})>0$. It is {\em in deficit} if $J^{\mbox{\tiny\sf OP}}_n(Q_{\mbox{\tiny\sf R}},Q_{\mbox{\tiny\sf H}})<0$, and {\em break-even} if $J^{\mbox{\tiny\sf OP}}_n(Q_{\mbox{\tiny\sf R}},Q_{\mbox{\tiny\sf H}})=0$.  This section characterizes the profitable, deficit, and break-even regions on the $Q_{\mbox{\tiny\sf H}}$-$Q_{\mbox{\tiny\sf R}}$ plane.  We are also interested in the optimal matching of the electrolyzer capacity $Q^*_{\mbox{\tiny\sf H}}$ to a given renewable capacity $Q_{\mbox{\tiny\sf R}}$.

\begin{theorem}[Profitability Characterization]\label{Thm:profitability} The nameplate capacity plane $Q_{\mbox{\tiny\sf H}}$ vs $Q_{\mbox{\tiny\sf R}}$ is partitioned into profitable and deficit regions with linear break-even boundaries.
\ben
\item The profitable (deficit) regions are convex cones with linearly growing (decreasing) operating profit away from the origin.
\item The break-even region is a union of linear lines.
\item The optimal matching of electrolyzer capacity $Q^*_{\mbox{\tiny\sf H}}$ to a given renewable capacity $Q_{\mbox{\tiny\sf R}}$ is linear, \ie  $Q^*_{\mbox{\tiny\sf H}}=\kappa Q_{\mbox{\tiny\sf R}}$ for some constant $\kappa$.
\een
\end{theorem}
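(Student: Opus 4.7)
The plan is to exploit two structural properties of the expected $n$-period operating profit $J^{\mbox{\tiny\sf OP}}_n$: joint concavity in $(Q_{\mbox{\tiny\sf R}}, Q_{\mbox{\tiny\sf H}})$ and positive homogeneity of degree one. Once these are in hand, all three claims of the theorem follow from elementary convex-analytic reasoning, bypassing any direct manipulation of the lengthy expressions in Proposition~\ref{Prop:linear}.

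Positive homogeneity is read off from the per-interval program (\ref{eq:maxprofit}). Under the rescaling $(Q_{\mbox{\tiny\sf R}}, Q_{\mbox{\tiny\sf H}}) \mapsto (\lambda Q_{\mbox{\tiny\sf R}}, \lambda Q_{\mbox{\tiny\sf H}})$ with $\lambda > 0$, constraints (\ref{eq:cons2})--(\ref{eq:cons6}) are jointly homogeneous of degree one in $(\Pbf_t; Q_{\mbox{\tiny\sf R}}, Q_{\mbox{\tiny\sf H}})$, so the feasible set scales by $\lambda$; because $J^{\mbox{\tiny\sf GP}}_\theta(\Pbf_t)$ is linear in $(\Pbf_t, Q_{\mbox{\tiny\sf R}})$, the optimal value scales by $\lambda$ as well. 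Taking expectations and subtracting the linear fixed-cost term in (\ref{eq:CFn}) preserves this, so $J^{\mbox{\tiny\sf OP}}_n$ is positively homogeneous of degree one. For concavity, each realization of the optimal gross profit is the value of a linear program whose right-hand side is linear in $(Q_{\mbox{\tiny\sf R}}, Q_{\mbox{\tiny\sf H}})$; by standard LP sensitivity this value is concave in $(Q_{\mbox{\tiny\sf R}}, Q_{\mbox{\tiny\sf H}})$, and both expectation and subtraction of the linear fixed cost preserve concavity.

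Given both properties, introduce $g(\kappa) := J^{\mbox{\tiny\sf OP}}_n(1, \kappa)$, so that $J^{\mbox{\tiny\sf OP}}_n(Q_{\mbox{\tiny\sf R}}, Q_{\mbox{\tiny\sf H}}) = Q_{\mbox{\tiny\sf R}}\, g(Q_{\mbox{\tiny\sf H}}/Q_{\mbox{\tiny\sf R}})$ for $Q_{\mbox{\tiny\sf R}} > 0$. As the restriction of a concave function to a ray, $g$ is concave on $\Rset_+$, so its superlevel set $\{\kappa : g(\kappa) \ge 0\}$ is an interval $[\kappa_1, \kappa_2]$. Pulling back to the capacity plane yields: (i) the profitable region $\{\kappa_1 Q_{\mbox{\tiny\sf R}} \le Q_{\mbox{\tiny\sf H}} \le \kappa_2 Q_{\mbox{\tiny\sf R}}\}$, a convex cone on which $J^{\mbox{\tiny\sf OP}}_n = Q_{\mbox{\tiny\sf R}}\, g(\kappa)$ grows linearly with $Q_{\mbox{\tiny\sf R}}$ along each ray of constant $\kappa$; (ii) two deficit cones $\{Q_{\mbox{\tiny\sf H}} < \kappa_1 Q_{\mbox{\tiny\sf R}}\}$ and $\{Q_{\mbox{\tiny\sf H}} > \kappa_2 Q_{\mbox{\tiny\sf R}}\}$, each a convex half-cone bounded by a single ray through the origin and on which the profit decreases linearly with $Q_{\mbox{\tiny\sf R}}$ along each ray; and (iii) the break-even set, the union of the two lines $Q_{\mbox{\tiny\sf H}} = \kappa_i Q_{\mbox{\tiny\sf R}}$, $i = 1, 2$. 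For the optimal matching, fix $Q_{\mbox{\tiny\sf R}} > 0$ and note that $J^{\mbox{\tiny\sf OP}}_n(Q_{\mbox{\tiny\sf R}}, Q_{\mbox{\tiny\sf H}}) = Q_{\mbox{\tiny\sf R}}\, g(Q_{\mbox{\tiny\sf H}}/Q_{\mbox{\tiny\sf R}})$ is maximized over $Q_{\mbox{\tiny\sf H}}$ at $Q^*_{\mbox{\tiny\sf H}} = \kappa^* Q_{\mbox{\tiny\sf R}}$, where $\kappa^* := \arg\max_\kappa g(\kappa)$ is independent of $Q_{\mbox{\tiny\sf R}}$, delivering the claimed linear matching rule.

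The main obstacle is that the bilinear complementarity constraint (\ref{eq:cons3}) formally renders the per-interval program non-convex, so LP sensitivity does not apply verbatim. The fix is the observation already implicit in Theorem~\ref{Thm:solution}: (\ref{eq:cons3}) is never binding at an optimum, because simultaneously importing and exporting is strictly dominated by the REC spread $\tau^{\mbox{\tiny\sf IM}}_{\mbox{\tiny\sf REC}} - \tau^{\mbox{\tiny\sf EX}}_{\mbox{\tiny\sf REC}} > 0$, which penalizes any round trip. Dropping (\ref{eq:cons3}) therefore yields a true linear program with the same optimal value, legitimizing the concavity step. A more tedious alternative would verify both structural properties directly from the closed-form expression in Proposition~\ref{Prop:linear} by a region-by-region computation across $\boldsymbol{\mathscr{R}_1}$--$\boldsymbol{\mathscr{R}_4}$.
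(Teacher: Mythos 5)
Your proposal is correct, but it reaches the conclusion by a genuinely different route than the paper. The paper works entirely from the closed-form optimal policy: it invokes Proposition~\ref{Prop:linear} to write $J^{\mbox{\tiny\sf OP}}_n(Q_{\mbox{\tiny\sf R}},Q_{\mbox{\tiny\sf H}})=\big(\sum_t A^{\mbox{\tiny\sf R}}_{t,\kappa}-\alpha^{\mbox{\tiny\sf R}}_n\big)Q_{\mbox{\tiny\sf R}}+\big(\sum_t A^{\mbox{\tiny\sf H}}_{t,\kappa}-\alpha^{\mbox{\tiny\sf H}}_n\big)Q_{\mbox{\tiny\sf H}}$ with coefficients depending only on $\kappa$ (your homogeneity, obtained by explicit computation), then computes $\partial J^{\mbox{\tiny\sf OP}}_n/\partial Q_{\mbox{\tiny\sf H}}$ as a double integral against the joint density of $(\pi^{\mbox{\tiny\sf LMP}}_t,\eta_t)$ and shows by a direct difference calculation that it is non-increasing in $\kappa$ (your concavity, but only along the $Q_{\mbox{\tiny\sf H}}$ direction), finishing with a three-way case analysis on the sign of this derivative. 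You instead obtain degree-one homogeneity from scaling of the feasible set and concavity from LP value-function sensitivity, then let the superlevel-set argument for the concave section $g(\kappa)=J^{\mbox{\tiny\sf OP}}_n(1,\kappa)$ do all the work. Your key enabling observation --- that the complementarity constraint \eqref{eq:cons3} can be dropped because a simultaneous import/export of $\delta$ changes the objective by $(\tau^{\mbox{\tiny\sf IM}}_{\mbox{\tiny\sf REC}}-\tau^{\mbox{\tiny\sf EX}}_{\mbox{\tiny\sf REC}})\delta>0$ upon cancellation, so the relaxation is exact --- is sound and is indeed only implicit in the paper. What your approach buys is generality and brevity: it never uses Theorem~\ref{Thm:solution} or the region decomposition, so it extends verbatim to the piecewise-linear production model in the Appendix. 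What the paper's computation buys is the explicit integral formula for the break-even slopes and the optimal $\kappa$ (reused in the proof of Theorem~\ref{Thm:nameplate}), which your abstract argument does not produce. Two cosmetic points you should tighten: the superlevel set $\{g\ge 0\}$ need not be a bounded nonempty interval (it can be empty, a half-line, or all of $\Rset_+$, corresponding to the paper's first two cases), and $\argmax_\kappa g$ may be an interval rather than a point (the paper's special case where all $Q_{\mbox{\tiny\sf H}}\ge Q_{\mbox{\tiny\sf R}}$ tie); neither affects the validity of the three claims.
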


See Fig.~\ref{fig:Optimality} (b) for an illustration, where the expected operating profit heatmap is partitioned by the black break-even lines, and the green dashed line represents the optimal electrolyzer capacities matched to given $Q_{\mbox{\tiny\sf R}}$'s. Notably, the deficit region may not be connected, as shown in Fig.~\ref{fig:Optimality} (b). The upper deficit region corresponds to RCHPs with high electrolyzer capacity but insufficient renewable generation capacity, whereas RCHPs in the lower region have high renewable generation capacity but insufficient electrolyzer capacity (especially for M0 and M1-c).

The intuition behind Theorem~\ref{Thm:profitability} follows from Proposition~\ref{Prop:linear}: for a fixed capacity ratio $\kappa$, the expected operating profit $J^{\mbox{\tiny\sf OP}}_n(Q_{\mbox{\tiny\sf R}},Q_{\mbox{\tiny\sf H}})$ is a linear homogeneous function of $(Q_{\mbox{\tiny\sf R}},Q_{\mbox{\tiny\sf H}})$. This scale-invariant property dictates that both the profitable and deficit regions are convex cones, and that break-even and optimal matching lines are linear. See Appendix~\ref{appendix:thm2} for the proof.

While the linear fixed cost assumption provides the scale-invariant properties in Theorem~\ref{Thm:profitability}, real-world deployments may exhibit nonlinearities in the fixed costs. Our analytical framework can readily accommodate such complexities. The real-time optimal scheduling policy remains unchanged, as it relies solely on operational marginal costs and volatile market prices rather than sunk fixed costs; therefore, we only need to adjust the amortized cost calculations. In Appendix \ref{appendix:nonlinear_cost}, we extend this baseline by employing piecewise linear cost functions to model nonlinear capital expenditures. As demonstrated therein, while the linear break-even boundaries and capacity matching rule dynamically transition into piecewise linear frontiers, the underlying capacity optimization remains highly tractable and our core operational insights are completely robust.

\subsection{Optimal Nameplate Capacities}
Theorem~\ref{Thm:profitability} characterizes the impact of nameplate capacities on RCHP profitability, specifically addressing the optimal electrolyzer capacity matching for a given renewable source. This result is highly pertinent when a new electrolyzer is to be colocated with an existing renewable facility. Next, we consider the joint optimization of both hydrogen and renewable capacities---a problem that arises when electrolyzers are integrated with new renewable installations. 

We formulate the joint optimization of electrolyzer and renewable capacities as a budget-constrained optimization problem:
\begin{equation}\label{eq:stochastic}
\begin{array}{lll}
\underset{(Q_{\mbox{\tiny\sf R}},Q_{\mbox{\tiny\sf H}})}{\rm maximize}&
J^{\mbox{\tiny\sf OP}}_n(Q_{\mbox{\tiny\sf R}},Q_{\mbox{\tiny\sf H}})\\
    \mbox{subject to}  & \alpha^{\mbox{\tiny\sf R}}_nQ_{\mbox{\tiny\sf R}}+\alpha^{\mbox{\tiny\sf H}}_nQ_{\mbox{\tiny\sf H}}=B_n,\\
    & Q_{\mbox{\tiny\sf R}},Q_{\mbox{\tiny\sf H}} \geq 0,\\
\end{array}
\end{equation}
where $B_n$ is the budget for the RCHP's amortized fixed cost over $n$ periods.

Theorem~\ref{Thm:nameplate} states the necessary condition for the optimality of \eqref{eq:stochastic}, and its proof can be found in Appendix~\ref{appendix_thm3}.

\begin{theorem}[Optimal Nameplate Capacity]\label{Thm:nameplate}
The optimal RCHP nameplate capacity values $(Q^*_{\mbox{\tiny\sf R}},Q^*_{\mbox{\tiny\sf H}})$ satisfy
\begin{equation}\label{eq:optcond}
    \frac{\sum_{t=1}^n A^{\mbox{\tiny\sf H}}_{t,Q^*_{\mbox{\tiny\sf H}}/Q^*_{\mbox{\tiny\sf R}}}}{\sum_{t=1}^n  A^{\mbox{\tiny\sf R}}_{t,Q^*_{\mbox{\tiny\sf H}}/Q^*_{\mbox{\tiny\sf R}}}}=\frac{\alpha_n^{\mbox{\tiny\sf H}}}{\alpha_n^{\mbox{\tiny\sf R}}},\quad \alpha^{\mbox{\tiny\sf R}}_nQ^*_{\mbox{\tiny\sf R}}+\alpha^{\mbox{\tiny\sf H}}_nQ^*_{\mbox{\tiny\sf H}}=B_n.
\end{equation}
\end{theorem}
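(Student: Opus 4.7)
The plan is to derive \eqref{eq:optcond} as the first-order necessary (KKT) conditions for the equality-constrained program \eqref{eq:stochastic}, using the closed-form expression from Proposition~\ref{Prop:linear}. Define $S^{\mbox{\tiny\sf R}}(\kappa):=\sum_{t=1}^n A^{\mbox{\tiny\sf R}}_{t,\kappa}$ and $S^{\mbox{\tiny\sf H}}(\kappa):=\sum_{t=1}^n A^{\mbox{\tiny\sf H}}_{t,\kappa}$ so that Proposition~\ref{Prop:linear} reads $J^{\mbox{\tiny\sf OP}}_n=(S^{\mbox{\tiny\sf R}}(\kappa)-\alpha_n^{\mbox{\tiny\sf R}})Q_{\mbox{\tiny\sf R}}+(S^{\mbox{\tiny\sf H}}(\kappa)-\alpha_n^{\mbox{\tiny\sf H}})Q_{\mbox{\tiny\sf H}}$ with $\kappa=Q_{\mbox{\tiny\sf H}}/Q_{\mbox{\tiny\sf R}}$. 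I would form the Lagrangian $\mathcal{L}=J^{\mbox{\tiny\sf OP}}_n-\lambda(\alpha^{\mbox{\tiny\sf R}}_nQ_{\mbox{\tiny\sf R}}+\alpha^{\mbox{\tiny\sf H}}_nQ_{\mbox{\tiny\sf H}}-B_n)$ and write the stationarity conditions $\partial J^{\mbox{\tiny\sf OP}}_n/\partial Q_{\mbox{\tiny\sf R}}=\lambda\alpha_n^{\mbox{\tiny\sf R}}$ and $\partial J^{\mbox{\tiny\sf OP}}_n/\partial Q_{\mbox{\tiny\sf H}}=\lambda\alpha_n^{\mbox{\tiny\sf H}}$. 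The target gradient identities are $\partial_{Q_{\mbox{\tiny\sf R}}}J^{\mbox{\tiny\sf OP}}_n=S^{\mbox{\tiny\sf R}}(\kappa)-\alpha_n^{\mbox{\tiny\sf R}}$ and $\partial_{Q_{\mbox{\tiny\sf H}}}J^{\mbox{\tiny\sf OP}}_n=S^{\mbox{\tiny\sf H}}(\kappa)-\alpha_n^{\mbox{\tiny\sf H}}$; once these are in hand, substituting into the two KKT equations and eliminating $\lambda$ yields $S^{\mbox{\tiny\sf H}}(\kappa^*)/S^{\mbox{\tiny\sf R}}(\kappa^*)=\alpha_n^{\mbox{\tiny\sf H}}/\alpha_n^{\mbox{\tiny\sf R}}$, which is the first equation of \eqref{eq:optcond}; the second equation is primal feasibility.

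The core technical step is to establish these gradient identities, i.e., to show that the implicit $\kappa$-dependence of $S^{\mbox{\tiny\sf R}}$ and $S^{\mbox{\tiny\sf H}}$ contributes nothing to the partial derivatives. I would invoke the envelope (Danskin) theorem applied to the inner per-period program \eqref{eq:maxprofit}. For almost every realization of $(\pi^{\mbox{\tiny\sf LMP}}_t,\eta_t)$, Theorem~\ref{Thm:solution} supplies a unique maximizer lying in the interior of exactly one region $\boldsymbol{\mathscr{R}_1}(\kappa),\ldots,\boldsymbol{\mathscr{R}_4}(\kappa)$, so the per-sample value function $V_t(Q_{\mbox{\tiny\sf R}},Q_{\mbox{\tiny\sf H}};\pi^{\mbox{\tiny\sf LMP}}_t,\eta_t)$ is differentiable there, with $\partial V_t/\partial Q_{\mbox{\tiny\sf R}}$ and $\partial V_t/\partial Q_{\mbox{\tiny\sf H}}$ equal to the direct parameter derivatives of the objective \eqref{eq:objwREC} at $\Pbf^*_t$ plus the shadow-price contributions of whichever capacity constraints in \eqref{eq:cons4}--\eqref{eq:cons6} are active. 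A region-by-region accounting---for instance, in $\boldsymbol{\mathscr{R}_1}$ one gets $\partial V_t/\partial Q_{\mbox{\tiny\sf R}}=(\tau^{\mbox{\tiny\sf R}}+\pi^{\mbox{\tiny\sf LMP}}_t+\tau_{\mbox{\tiny\sf REC}}^{\mbox{\tiny\sf IM}})\eta_t$ and $\partial V_t/\partial Q_{\mbox{\tiny\sf H}}=\underline{\pi}^{\mbox{\tiny\sf LMP}}-\pi^{\mbox{\tiny\sf LMP}}_t$, and analogously for $\boldsymbol{\mathscr{R}_2},\boldsymbol{\mathscr{R}_3},\boldsymbol{\mathscr{R}_4}$---reproduces termwise the summands defining $A^{\mbox{\tiny\sf R}}_{t,\kappa}$ and $A^{\mbox{\tiny\sf H}}_{t,\kappa}$ in Proposition~\ref{Prop:linear}. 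Taking expectation and exchanging with differentiation, which is permitted since the region boundaries carry zero mass under any absolutely continuous joint law of $(\pi^{\mbox{\tiny\sf LMP}}_t,\eta_t)$, delivers the claimed expressions for the partial derivatives of $J^{\mbox{\tiny\sf OP}}_n$ and closes the KKT argument.

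The principal obstacle is the rigorous justification of the differentiation/expectation exchange, because the region boundaries themselves deform with $\kappa$ and a naive chain-rule differentiation of Proposition~\ref{Prop:linear}'s formula would generate Leibniz-rule boundary terms that must be shown to cancel. The envelope theorem asserts this cancellation abstractly; a self-contained verification can be provided by noting that $V_t$ is continuous across adjacent regions thanks to the threshold identities in \eqref{eq:thresholds}. A cleaner shortcut, equivalent in content, is to exploit the degree-one positive homogeneity of the gross-revenue function $S^{\mbox{\tiny\sf R}}(\kappa)Q_{\mbox{\tiny\sf R}}+S^{\mbox{\tiny\sf H}}(\kappa)Q_{\mbox{\tiny\sf H}}$ in $(Q_{\mbox{\tiny\sf R}},Q_{\mbox{\tiny\sf H}})$ (a direct consequence of the convex-cone structure established in Theorem~\ref{Thm:profitability}); Euler's identity applied to this function, combined with the KKT stationarity conditions, forces the same ratio relation $S^{\mbox{\tiny\sf H}}(\kappa^*)/S^{\mbox{\tiny\sf R}}(\kappa^*)=\alpha_n^{\mbox{\tiny\sf H}}/\alpha_n^{\mbox{\tiny\sf R}}$ while bypassing the delicate boundary computation.
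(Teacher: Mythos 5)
Your main argument is essentially the paper's own proof: the paper likewise forms the Lagrangian for \eqref{eq:stochastic}, computes $\partial J^{\mbox{\tiny\sf OP}}_n/\partial Q_{\mbox{\tiny\sf R}}=\sum_t A^{\mbox{\tiny\sf R}}_{t,\kappa}-\alpha_n^{\mbox{\tiny\sf R}}$ and $\partial J^{\mbox{\tiny\sf OP}}_n/\partial Q_{\mbox{\tiny\sf H}}=\sum_t A^{\mbox{\tiny\sf H}}_{t,\kappa}-\alpha_n^{\mbox{\tiny\sf H}}$ by region-wise integration (the envelope-type cancellation of the moving-boundary terms that you justify via continuity of $V_t$ across region boundaries), and then divides the two stationarity conditions to eliminate $\lambda$. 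One caution on your proposed ``cleaner shortcut'': Euler's identity for the degree-one homogeneous revenue $S^{\mbox{\tiny\sf R}}(\kappa)Q_{\mbox{\tiny\sf R}}+S^{\mbox{\tiny\sf H}}(\kappa)Q_{\mbox{\tiny\sf H}}$ only yields the single relation $g(\kappa)+\kappa h(\kappa)=S^{\mbox{\tiny\sf R}}(\kappa)+\kappa S^{\mbox{\tiny\sf H}}(\kappa)$ for the (degree-zero) partials $g,h$, which does not by itself identify $g=S^{\mbox{\tiny\sf R}}$ and $h=S^{\mbox{\tiny\sf H}}$ separately, so it cannot replace the envelope computation; your primary route is the one that actually closes the argument.
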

Within the set of RCHP nameplate capacity values $(Q_{\mbox{\tiny\sf R}},Q_{\mbox{\tiny\sf H}})$ that satisfy the budget constraint, we seek a solution where the corresponding ratio $\sum_{t=1}^n  A^{\mbox{\tiny\sf H}}_{t,\kappa}/\sum_{t=1}^n  A^{\mbox{\tiny\sf R}}_{t,\kappa}$ matches $\alpha_n^{\mbox{\tiny\sf H}}/\alpha_n^{\mbox{\tiny\sf R}}$. Since this ratio monotonically decreases as $\kappa$ increases, the optimal nameplate capacities $(Q^*_{\mbox{\tiny\sf R}},Q^*_{\mbox{\tiny\sf H}})$ can be efficiently determined using a bisection search algorithm.

\section{Numerical Study}\label{sec:simulation}
\subsection{RCHP Profitability Evaluation}
We considered an RCHP  in the Central Zone (Zone C) of New York State. The renewable energy capacity factor profile utilized was derived from the 2023-2042 System \& Resource Outlook Data Document, which provided simulated hourly production profiles for land-based wind and solar resources across NYISO zones \cite{NYISO:TheOutlook}. The real-time electricity price data were collected from NYISO's Decision Support System \cite{NYISO:Data} with a 5-minute resolution. Due to the hourly granularity of the renewable generation dataset, RCHP operational decisions were modeled on an hourly basis, and the hourly LMPs were computed as the mean of the 5-minute intervals. Missing values in both datasets were addressed using linear interpolation. Other parameters, including credits, investment costs, and RCHP operational characteristics, are provided in Table \ref{tab:para}.
\begin{table}[!t]
\caption{\small Model parameters \cite{Glenk_Reichelstein_2019, itc_ptc_cheat_sheet, NRELreport, NREL91775, nyserda2024, DOE2024, DOEpvcost}. \label{tab:para}}
\centering
\scriptsize
\begin{tabular}{l l}
\toprule
Electrolyzer efficiency factor, $\gamma$ &  0.019 kg/kWh\\[0.5pt]
Fixed annual operating cost for electrolyzer, $\alpha^{\mbox{\tiny\sf H}}$ & 101.25 \$/kW\\[0.5pt]
Fixed annual operating cost for renewable plant, $\alpha^{\mbox{\tiny\sf R}}$ & 85.50 \$/kW\\[0.5pt]
Green hydrogen credit, $\tau^{\mbox{\tiny\sf H}}$& 3.00 \$/kg\\[0.5pt]
Renewable production tax credit, $\tau^{\mbox{\tiny\sf R}}$ & 27.50 \$/MWh\\[0.5pt]
REC price for exported renewable, $\tau^{\mbox{\tiny\sf EX}}_{\mbox{\tiny\sf REC}}$ & 10.00 \$/MWh\\[0.5pt]
REC price for imported renewable, $\tau^{\mbox{\tiny\sf IM}}_{\mbox{\tiny\sf REC}}$ & 31.80 \$/MWh\\[0.5pt]
Non-electricity marginal cost  of hydrogen, $c^{\mbox{\tiny\sf W}}$ & 0.10 \$/kg\\[0.5pt]
\bottomrule
\end{tabular}
\vspace{-1em}
\end{table}

We compared the annual operating profit of an RCHP with $Q_{\mbox{\tiny\sf H}}=20$ MW and $Q_{\mbox{\tiny\sf R}}=45$ MW—calculated using the proposed M2 model and its corresponding optimal operation plan—with results obtained from the models in \cite{Glenk_Reichelstein_2019} and \cite{lesniak2024}.
To evaluate the performance against input uncertainties, all profits were evaluated using 11 years of historical data (2012–2022) across varying hydrogen prices and renewable sources (solar and wind), as summarized in Table \ref{tab:comp}. By reporting the mean annual operating profit alongside its standard deviation, the table illustrates the inter-annual variability of each method. Furthermore, paired t-tests confirm that the economic advantage of our proposed model is highly statistically significant across all tested scenarios.
Under the prosumer model and with both renewable and green hydrogen credits integrated into the optimization, our approach yielded the highest operating profit through market participation. In contrast, other studies do not account for the bidirectional electricity market participation of the RCHP or overlook revenues from environmental subsidies. Moreover, the optimization of RCHP operation in \cite{lesniak2024} neglects the variable costs associated with hydrogen production, resulting in operational decisions that are suboptimal in practice.

\begin{table}[!t]
\caption{\small Cross-model comparison of annual operating profit (2012--2022). All values are in \$$10^6$.} \label{tab:comp}
\centering
\footnotesize
\begin{tabular}{c c c c}
\toprule
Method & Ref. \cite{Glenk_Reichelstein_2019} & Ref. \cite{lesniak2024} & This work \\
\midrule
$\pi^{\mbox{\tiny\sf H}}=\$1$/kg, solar & $2.58\pm 0.64$ & $1.30\pm 0.76$ & $4.60\pm 0.43$\rlap{\textsuperscript{*}} \\
$\pi^{\mbox{\tiny\sf H}}=\$1$/kg, wind & $5.73\pm 0.84$ & $3.89\pm 0.89$ & $6.93\pm 0.66$\rlap{\textsuperscript{*}}  \\
$\pi^{\mbox{\tiny\sf H}}=\$4$/kg, solar & $5.93\pm 0.52$ & $5.87\pm 0.46$ & $14.05\pm 0.68$\rlap{\textsuperscript{*}} \\
$\pi^{\mbox{\tiny\sf H}}=\$4$/kg, wind & $11.06\pm 0.82$ & $11.00\pm 0.78$ & $16.39\pm 0.70$\rlap{\textsuperscript{*}}  \\
\bottomrule
\end{tabular}
{\raggedright \small \textsuperscript{*} Indicates statistical significance ($p < 0.001$) compared to both reference models via a paired t-test.\par}
\vspace{-1.5em}
\end{table}

\subsection{Effects of Renewable Generation}
To illustrate the operational and economic characteristics of the RCHP under the proposed method, Table \ref{tab:breakdown} presents the yearly revenue breakdown for the (45 MW, 20 MW) RCHP across different market models, all under the same 2022 electricity price and solar/wind generation realizations. The hydrogen price was set at \$4/kg.

Our experiment demonstrated the significance of using grid-imported renewable.  From Table \ref{tab:breakdown}, the standalone and consumer models had the identical colocated renewable utilization. M1-c was more profitable due to its ability to use grid-imported renewable. Likewise, the producer and prosumer models also had identical colocated renewable utilization. Again, M2 was more profitable because M2 used grid-imported renewable.

\begin{table*}[!t]
    \centering
    \footnotesize
    \renewcommand{\arraystretch}{1.5}
    \setlength{\tabcolsep}{2.5pt}
    \caption{\small RCHP revenue breakdown in 2022.}
    \label{tab:breakdown}
    \begin{tabular}{l c c c c c c c c }
        \toprule
        & \multicolumn{2}{c}{M0: Standalone}
        & \multicolumn{2}{c}{M1-p: Producer}  & \multicolumn{2}{c}{M1-c: Consumer} & \multicolumn{2}{c}{M2: Prosumer} \\
        \midrule
        Renewable Type & Solar & Wind & Solar & Wind & Solar & Wind & Solar & Wind \\
        Total renewable generation ( MWh) & 0.9043$\times 10^5$ & 1.2203$\times 10^5$ & 0.9043$\times 10^5$ & 1.2203$\times 10^5$ & 0.9043$\times 10^5$ & 1.2203$\times 10^5$ & 0.9043$\times 10^5$ & 1.2203$\times 10^5$\\
        Renewable in hydrogen production (\%) & 68.34 & 79.90 & 63.55 & 75.72 & 68.34 & 79.90 & 63.55 & 75.72 \\
        Hydrogen produced (kg) & 1.1741$\times 10^6$ & 1.8525$\times 10^6$ & 1.0918$\times 10^6$ & 1.7556$\times 10^6$ & 3.1349$\times 10^6$ & 3.1381$\times 10^6$ & 3.0525$\times 10^6$ & 3.0412$\times 10^6$\\
        Revenue from hydrogen sales (\$) & 8.2187$\times 10^6$ & 1.2968$\times 10^7$ & 7.6425$\times 10^6$ & 1.2289$\times 10^7$ & 2.1944$\times 10^7$ & 2.1967$\times 10^7$ & 2.1368$\times 10^7$ & 2.1288$\times 10^7$\\
        Renewable sold in the market (\%) & 0 & 0 & 36.45 & 24.28 & 0 & 0 & 36.45 & 24.28 \\
        Revenue from renewable sales (\$) & 0 & 0 & 2.7820$\times 10^6$ & 2.1301$\times 10^6$ & 0 & 0 & 2.7820$\times 10^6$ & 2.1301$\times 10^6$ \\
        Renewable curtailed (\%) & 31.66 & 20.10 & 0 & 0 & 31.66 & 20.10 & 0 & 0 \\
        Revenue lost due to curtailment (\$) & 1.9479$\times 10^6$ & 1.1113$\times 10^6$ & 0 & 0 & 1.9479$\times 10^6$ & 1.1113$\times 10^6$ & 0 & 0 \\
        Annual operating profit (\$) & 4.7155$\times 10^6$ & 1.0266$\times 10^7$ & 6.9295$\times 10^6$ & 1.1727$\times 10^7$ & 1.0662$\times 10^7$ & 1.3597$\times 10^7$ & 1.2876$\times 10^7$ & 1.5058$\times 10^7$ \\
        \bottomrule
    \end{tabular}
\end{table*}
\begin{figure*}[!htb]
    \centering
    \includegraphics[scale=0.56]{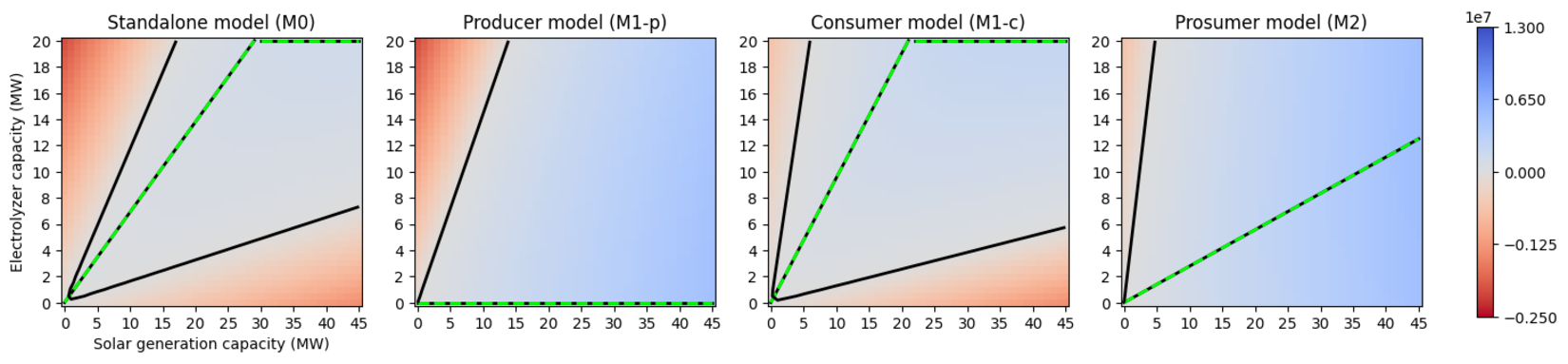}
    \vspace{-0.5em}
    \includegraphics[scale=0.56]{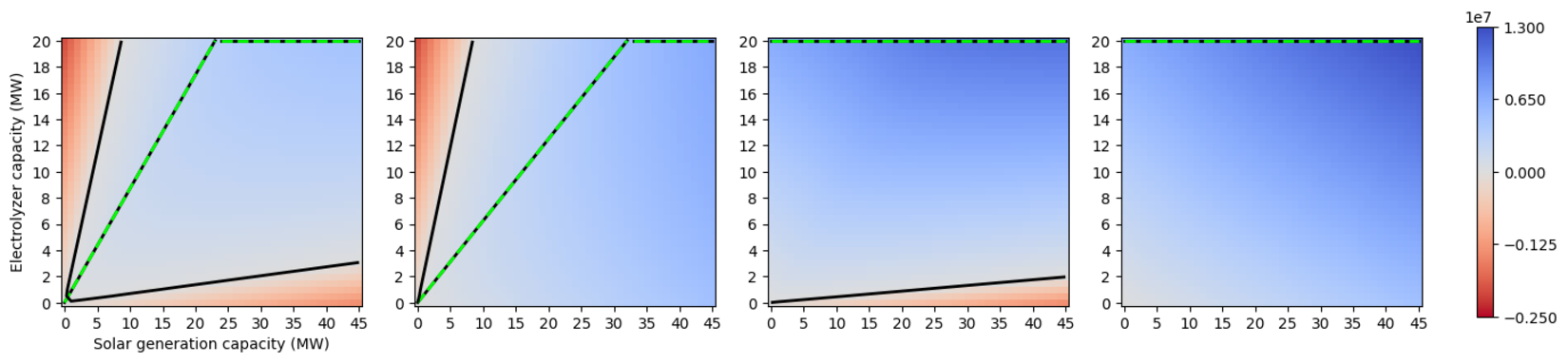}
    \caption{\scriptsize Annual operating profit in 2022 as a function of solar generation nameplate capacity (x-axis) and electrolyzer nameplate capacity (y-axis). Solid black: Break-even line. Green dashed: Optimal electrolyzer nameplate capacity as a function of solar generation nameplate capacity. (Top: hydrogen price of \$1/kg; bottom: \$4/kg.)}
    \label{fig:Heatmap}
    \vspace{-2em}
\end{figure*}

The discrepancy between the two colocation cases primarily arose from differences in renewable generation profiles. The wind RCHP had a higher average capacity factor of 0.310 compared to 0.229 for the solar RCHP, resulting in greater revenue from both hydrogen production and renewable electricity sales. However, as shown in Table \ref{tab:breakdown}, this advantage was less pronounced under M1-c and M2.

Under M0, the concentrated output peaks of solar generation frequently exceeded the hydrogen production capacity, leading to more frequent and severe curtailment compared to wind generation. Similarly, under M1-p, a greater portion of solar electricity exceeding electrolyzer capacity was sold during high-solar generation intervals, whereas the wind-colocated producer had greater potential to produce hydrogen across different periods. 
In contrast, under M1-c and M2, grid electricity imports compensated for the solar renewable shortage, making the revenue from hydrogen sales relatively similar between solar and wind setups. Besides, the covariance between renewable generation and electricity prices indicates that solar generation peaks aligned more closely with high electricity price intervals, allowing the solar RCHP to generate higher revenue from renewable sales.

\subsection{Effects of Nameplate Capacities}
We analyzed the effects of renewable and electrolyzer nameplate values $(Q_{\mbox{\tiny\sf R}},Q_{\mbox{\tiny\sf H}})$ on RCHP profitability. Fig. \ref{fig:Heatmap} illustrates the annual operating profit in 2022 as a function of the renewable plant (solar) capapcity $Q_{\mbox{\tiny\sf R}}$ (MW) on the x-axis and the electrolyzer capacity $Q_{\mbox{\tiny\sf H}}$ (MW) on the y-axis. The heatmaps depict outcomes for the four RCHP models at two hydrogen selling prices: a low price of \$1/kg (top row), and the prevailing price of \$4/kg (bottom row). The results for higher hydrogen prices closely resemble the \$4/kg case.

When the capacity parameter pair $(Q_{\mbox{\tiny\sf R}},Q_{\mbox{\tiny\sf H}})$ was set in the orange-red regions, the RCHP operated at a deficit due to mismatches between renewable and electrolyzer capacities. For instance, in the orange-red triangles on the upper left side of the heatmaps, where the renewable capacity was low, we observe that as the electrolyzer capacity increased, the fixed operating cost rose, and the mismatch became more pronounced, leading to a larger deficit.

In the blue regions, bounded by black break-even lines, the RCHP annual operation profit was non-negative. As shown in Fig. \ref{fig:Heatmap}, higher hydrogen prices expanded the profitable region across all four market participation models.

The green dashed lines in the blue regions represent the optimal electrolyzer capacities for the given renewable nameplate values. The slope of each green dashed line is influenced by market parameters, including the hydrogen price, credits, and variable cost, as well as the distribution of electricity prices and renewable capacity factors. From the top to the bottom row, the slope of the optimal electrolyzer capacity lines increased for each model, as higher hydrogen price made hydrogen sales more profitable, justifying investment in a larger electrolyzer.

Note that the slope of the green dashed line in the top row of Fig. \ref{fig:Heatmap} under M1-p is zero. At the hydrogen price of \$1/kg, the zero optimal electrolyzer capacity implies that investing in an electrolyzer and producing hydrogen was less profitable than exporting all renewables to the grid.

\subsection{Effects of Hydrogen Price}
We examined the impact of hydrogen prices $\pi^{\mbox{\tiny\sf H}}$ on RCHP's operating profit under different participation models using data from 2012-2022. The left panel of Fig. \ref{fig:Profit_PH} represents wind RCHP, while the right panel corresponds to solar RCHP. Both configurations employ the same electrolyzer capacity ($Q_{\mbox{\tiny\sf H}}=20$ MW) and renewable capacity ($Q_{\mbox{\tiny\sf R}}=45$ MW), thus the performance differences between the wind and solar RCHPs only came from the statistical characteristics of the respective renewable sources.
\begin{figure}[!htb]
    \centering
    \includegraphics[scale=0.28]{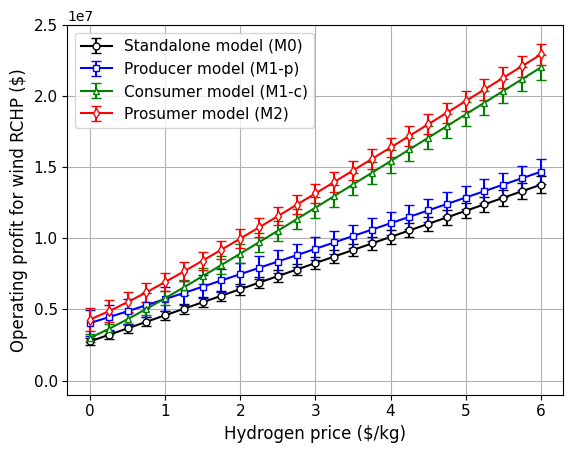}
    \includegraphics[scale=0.28]{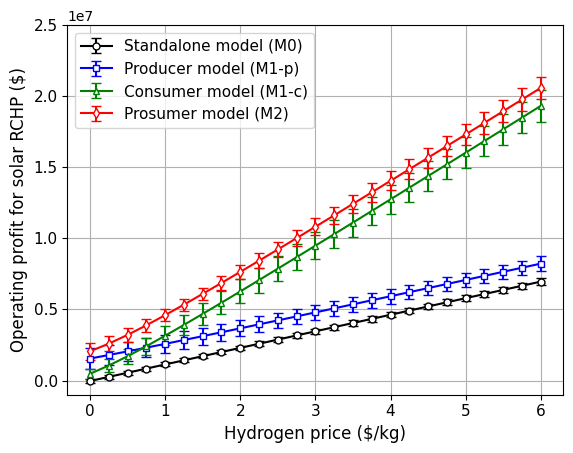}
    \caption{\scriptsize Mean annual operating profit of the RCHP under varying hydrogen prices (2012-2022), with error bars indicating inter-annual variability. (Left: (45 MW, 20 MW) wind-colocated hydrogen producer; right: (45 MW, 20 MW) solar-colocated hydrogen producer.)}
    \label{fig:Profit_PH}
    \vspace{-2em}
\end{figure}

The annual operating profit for the two types of renewable had similar characteristics.  First, the prosumer model M2 yielded the highest operating profit, and the standalone model M0 the lowest.
At the prevailing hydrogen price range of \$3-4.5/kg, the percentage gains of M1-p, M1-c, and M2 over M0 were significant. In wind colocation scenarios, these models achieved gains of up to 11.12\%, 57.51\%, and 65.47\%, respectively. The gains from solar colocation were more substantial, reaching 35.19\%, 181.26\%, and 215.95\%, respectively.

Second, both figures showed opposite trends for the producer and consumer models. As hydrogen price increased, M1-p trended away from the prosumer model M2 toward the standalone model M0, whereas M1-c trended away from the standalone model M0 to the prosumer model, which has simple explanations. As hydrogen price decreased toward zero, the economic value of hydrogen was diminishing. Both M1-p and M2 exported and profited from renewable the same way, while M0 and M1-c similarly suffered from the inability to export renewable.  As the economic value of hydrogen grew with its price,  M2 and M1-c benefited from grid-imported renewable while M0 and M1-p could not. The profit gaps between M2 and M1-c, and between M0 and M1-p, were due to high renewable cases where M1-c and M0 had to curtail renewable beyond the electrolyzer capacity, while M2 and M1-p could export the surplus renewable to the grid.

\subsection{Effects of Colocation and Subsidies}
To address the influence of the policy environment on economic outcomes, we conducted a sensitivity analysis by varying the environmental subsidy factor from 1.0 (current favorable conditions) down to 0.0. This factor proportionally scales all environmental credit values, including REC prices. A factor of 0 simulates a complete phase-out of subsidies, such as the expiration of green hydrogen credits (\eg the 45V tax credit introduced by the Inflation Reduction Act \cite{DOE2023}) and the absence of REC values.

Fig.~\ref{fig:Profit_Subsidy} illustrates the annual operating profit of the RCHP under varying environmental subsidy factors. We compared the operating profit achieved under the prosumer model with that in the non-colocation configuration, where the electrolyzer and the renewable generator operated independently without co-optimization. When the subsidy factor was zero, no financial incentive was provided for renewable electricity or green hydrogen, making green hydrogen production economically equivalent to purchasing grid electricity for hydrogen production while selling renewable output to the grid independently. As the subsidy factor increased, the profit gap between the prosumer model and the non-colocation model widened. This trend highlighted the proposed RCHP production plan's capacity to leverage environmental subsidies through co-optimization, leading to significantly higher profitability under generous policy support.

Given that overall profitability understandably declines as subsidies phase out, we further examined the relative performance among different market participation models. Fig.~\ref{fig:M2_additional_Subsidy} illustrates the additional operating profit generated by the M2 model compared to the M1-c and M1-p models under less favorable policy conditions. Notably, the M2 model consistently maintained a clear and strictly positive economic advantage over both M1-c and M1-p across all subsidy levels.

To identify the conditions under which the system remains economically viable in the complete absence of subsidies, \ie subsidy factor = 0, we evaluated the minimum hydrogen selling price required to achieve break-even. 
Our results indicated that the M2 prosumer model could sustain profitability without any policy support at a hydrogen price of only \$2.42/kg for the wind-colocated system and  \$2.45/kg for the solar-colocated system. In contrast, the M1 models demanded higher prices to survive: the producer model (M1-p) required \$3.22/kg for wind colocation and \$4.74/kg for solar, while the consumer model (M1-c) required \$2.89/kg for wind and \$3.25/kg for solar. These price thresholds provided essential insights for evaluating long-term project viability and risk management in a post-subsidy era.
\vspace{-0.5em}
\begin{figure}[!htb]
    \centering
    \includegraphics[scale=0.25]{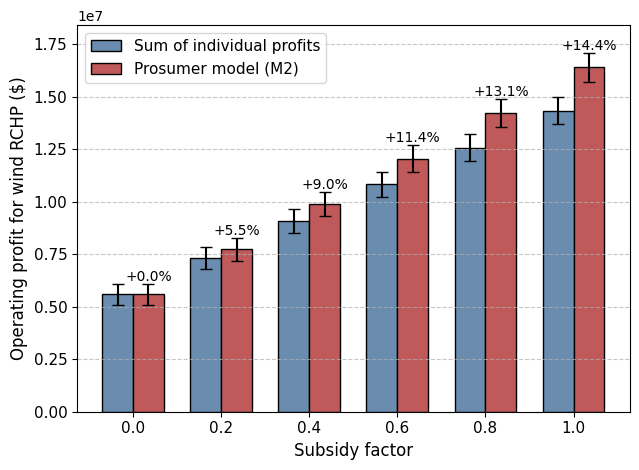}
    \includegraphics[scale=0.25]{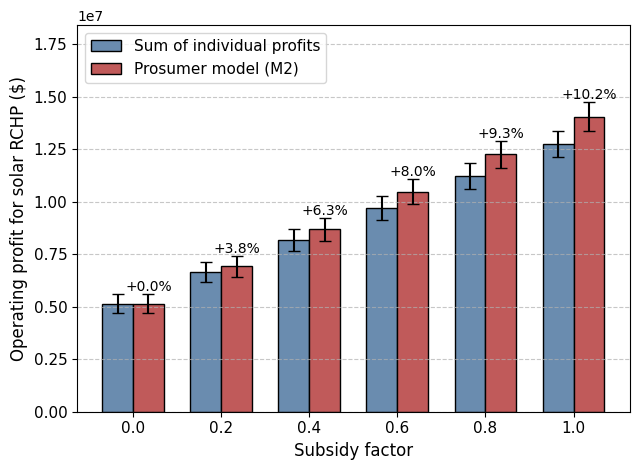}
    \caption{\scriptsize Mean annual operating profit of the RCHP under varying environmental subsidy factors (2012-2022), with error bars indicating inter-annual variability. (Left: (45 MW, 20 MW) wind-colocated hydrogen producer; right: (45 MW, 20 MW) solar-colocated hydrogen producer.)}
    \label{fig:Profit_Subsidy}
    \vspace{-1em}
\end{figure}
\begin{figure}[!htb]
    \centering
    \includegraphics[scale=0.25]{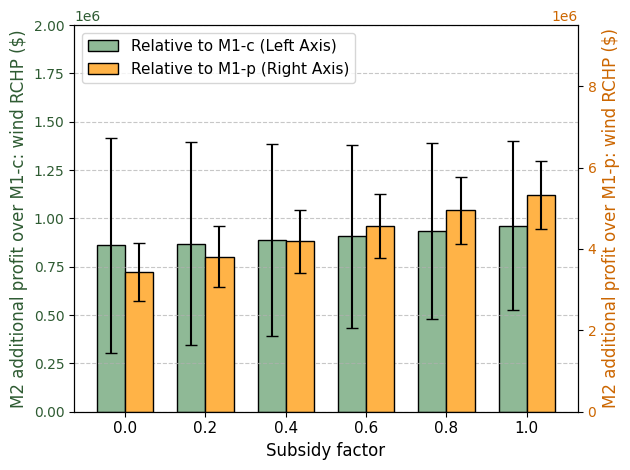}
    \includegraphics[scale=0.25]{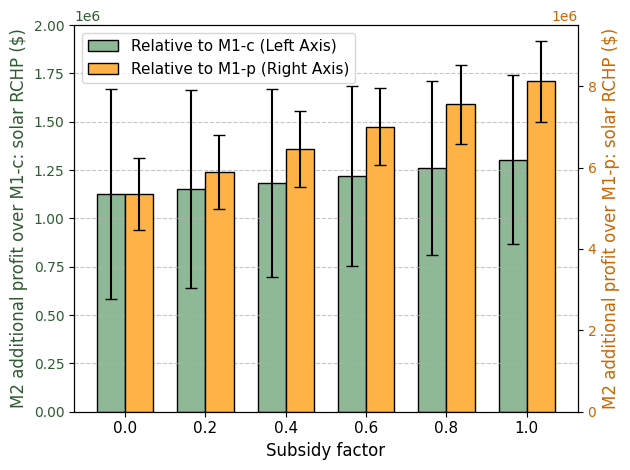}
    \caption{\scriptsize Mean additional annual operating profit of the M2 model relative to M1-c and M1-p under varying environmental subsidy factors (2012-2022), with error bars indicating inter-annual variability. (Left: (45 MW, 20 MW) wind-colocated hydrogen producer; right: (45 MW, 20 MW) solar-colocated hydrogen producer.)}
    \label{fig:M2_additional_Subsidy}
    \vspace{-1.5em}
\end{figure}

Beyond policy shifts, we also evaluated the robustness of our proposed models against environmental market uncertainties—specifically, REC price volatility. To achieve this, we conducted a Monte Carlo simulation. To isolate the impact of REC price volatility and avoid compounding uncertainties, we fixed the renewable generation profiles and real-time LMP inputs using data from the representative year of 2022, and performed a large-scale sampling of REC prices.

Recognizing that REC markets typically operate on longer trading cycles, we introduced a monthly variance to both the REC import and export prices. Specifically, we modeled the monthly REC prices as random variables drawn from a normal distribution, where the means ($\mu$) were set to the baseline REC import and export prices, as listed in Table~\ref{tab:para}. We scaled the relative standard deviation (RSD, defined as $\sigma/\mu$) to represent escalating market volatility. To preserve real-world market mechanics, we enforced strict boundary conditions: all generated REC prices were bounded to be non-negative, and the REC import price was constrained to be strictly greater than the export price at any given month. Under each volatility scenario, we generated 500 independent REC price paths and re-optimized the operations for both wind- and solar-colocated systems.
\begin{figure}[!htb]
    \centering
    \includegraphics[scale=0.25]{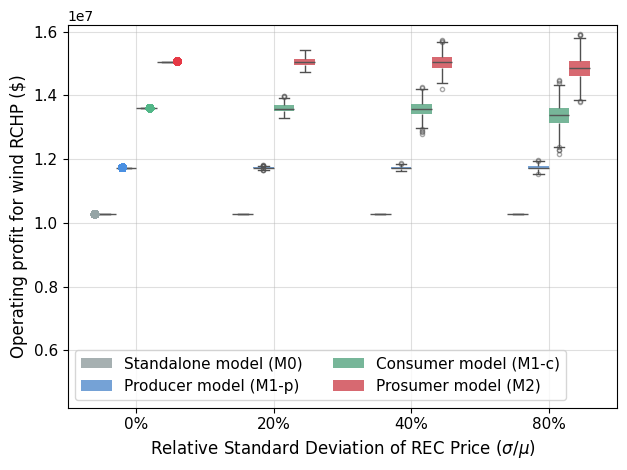}
    \includegraphics[scale=0.25]{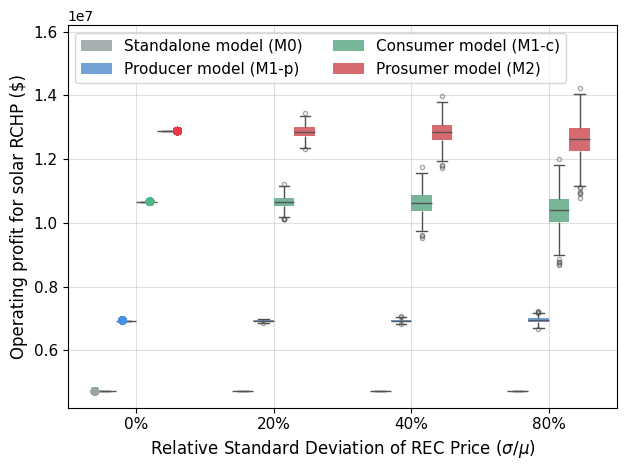}
    \caption{\scriptsize Annual operating profit distribution of the RCHP under varying REC price volatility based on 2022 data. (Left: (45 MW, 20 MW) wind-colocated hydrogen producer; right: (45 MW, 20 MW) solar-colocated hydrogen producer.)}
    \label{fig:Profit_REC}
    \vspace{-1.5em}
\end{figure}

As shown in Fig.~\ref{fig:Profit_REC}, the boxplots illustrate the empirical distribution of the RCHP's operating profits across different market participation models. The boxes demonstrate the core interquartile range (25th to 75th percentiles) of the profits, while the whiskers (representing standard statistical bounds) bound the robust operating profits under over 99\% of typical market conditions, isolating extreme tail-risk events as outliers.

Regarding the performance of the proposed models, while the standalone model M0 was completely unaffected by REC price variations, the grid-interactive models exhibited greater profit uncertainties as the RSD increased. For the consumer model M1-c, the overall expected profit experienced a slight downward trend due to the system's exposure to occasional spikes in REC purchasing costs. Conversely, any potential profit upside for the producer model M1-p driven by higher REC sales remained marginal. Despite the escalating market volatility, the prosumer model M2 consistently maintained the highest profitability among all configurations.

\subsection{Multi-ISO Simulations}\label{sec:multi-ISO}
To assess the operation and profitability of the RCHP across different regions, we conducted multi-ISO simulations. In addition to NYISO, we incorporated LMPs and renewable generation data from CAISO and MISO to determine RCHP's optimal real-time operational decisions and corresponding profits \cite{EIA_CAISO, EIA_MISO}. Fig. \ref{fig:MultiISO-CF} presents the operating profits in 2022 under a hydrogen price of \$4/kg, for deployments in these three regions and colocated with either solar or wind generation. Detailed revenue breakdowns are provided in the Appendix (Tables~\ref{tab:breakdown_CAISO}-\ref{tab:breakdown_MISO}).

Across all regions, model M2 achieved the highest operating profit among all market participation models, while also reducing profitability disparities between resources and regions. For an RCHP with fixed capacity, the greatest economic benefit was observed in MISO, where the average renewable generation level was the highest. The substantial revenue generated from selling abundant renewable energy in MISO also explains why, in this region, the RCHP earned higher profits under model M1-p than under M1-c. In contrast, the opposite trend was observed in NYISO and CAISO.

Although expected solar generation was higher in CAISO, the RCHP colocated with solar was more profitable in NYISO than in CAISO under models M1-c and M2. This is because the average electricity price in NYISO was significantly lower, allowing for cost-effective grid electricity purchases, which in turn reduced the cost of hydrogen production.

Fig. \ref{fig:MultiISO-renewable} illustrates the percentage allocations of onsite renewables generated by the RCHP across different regions and resources. As shown, wind resources generally exhibited higher utilization rates for hydrogen production across all regions compared to solar. This is because the concentrated output peaks of solar generation resulted in a higher proportion of curtailment and market sales of surplus renewable electricity.
\begin{figure}[htbp]
    \vspace{-0.5em}
    \centering
    \includegraphics[scale=0.26]{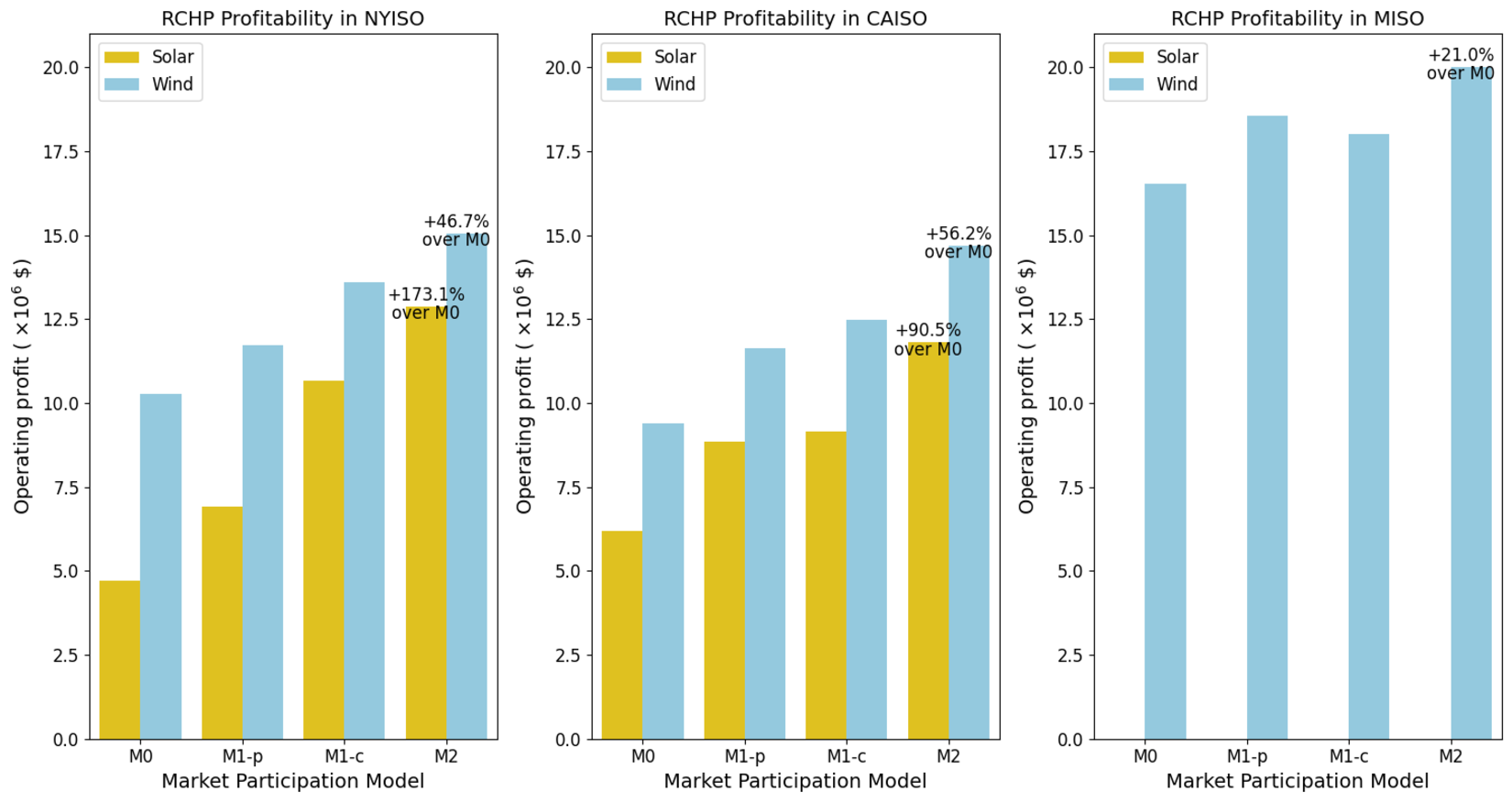}
    \caption{\scriptsize Annual operating profit of the (45MW, 20MW) RCHP in different regions in 2022. In NYISO, the mean capacity factors were 0.229 for solar and 0.310 for wind, while in CAISO, they were 0.252 for solar and 0.287 for wind. MISO had a mean wind capacity factor of 0.423. The mean electricity prices were \$0.055/kWh in NYISO, \$0.073/kWh in CAISO, and \$0.057/kWh in MISO.}
    \label{fig:MultiISO-CF}
    \vspace{-1.5em}
\end{figure}
\begin{figure}[htbp]
    \centering
    \includegraphics[scale=0.25]{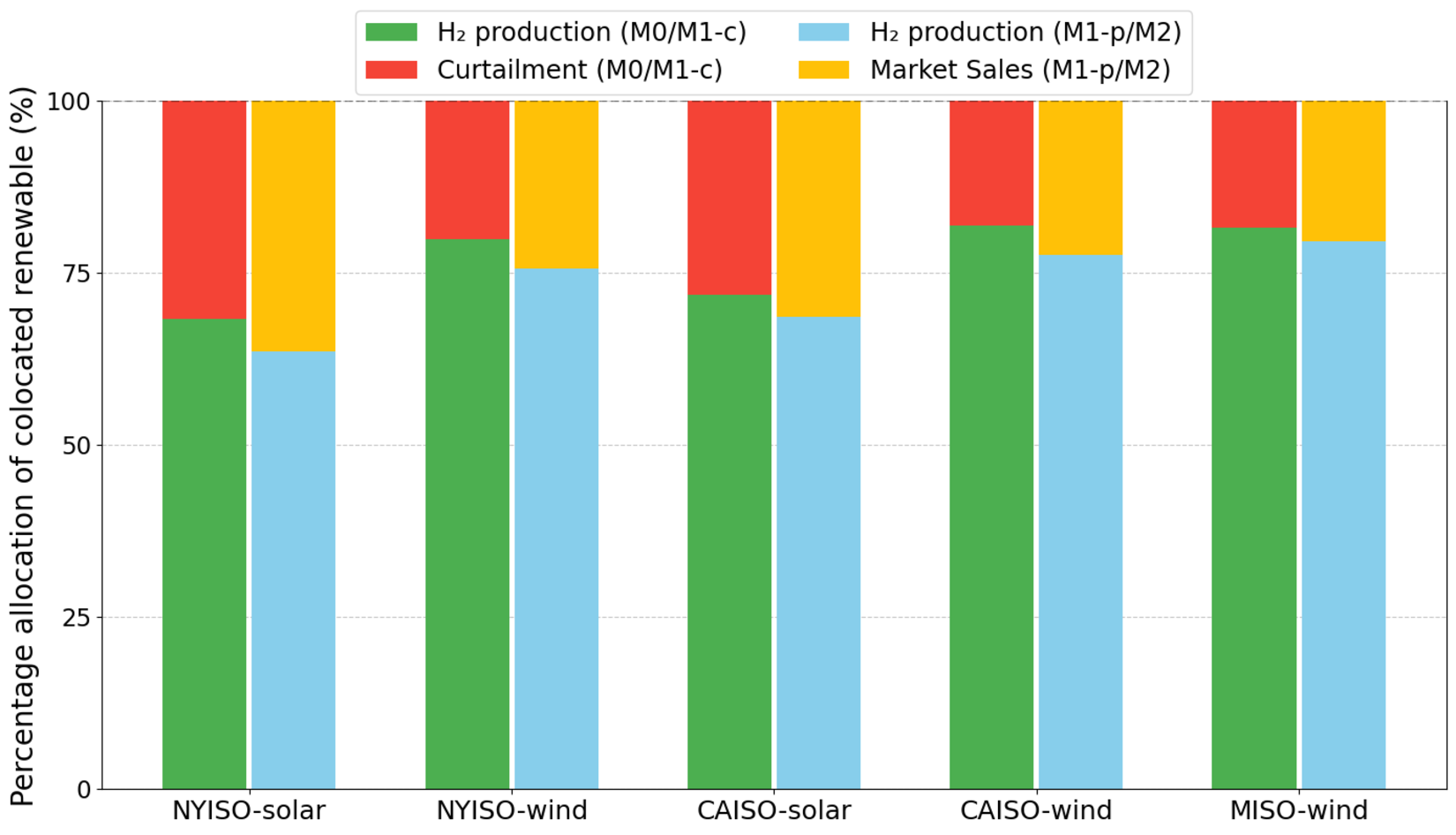}
    \caption{\scriptsize Percentage allocations of colocated renewables across different models, regions, and resources.}
    \label{fig:MultiISO-renewable}
\end{figure}

\section{Conclusion}\label{sec:conclusion}
The main contribution of this work is the methodology developed to analyze RCHP's operation and profitability, which is applicable to broader contexts, including integrated production and energy use in manufacturing, scheduling and energy management in data centers, as well as hydrogen production colocated with other generation assets. Specifically, we derive closed-form solutions for the RCHP's optimal production plan and provide analytical expressions for its operating profit. These results facilitate the rapid implementation of operational strategies and enable the assessment of RCHP profitability and optimal capacity sizing.
Empirical studies based on data from multiple ISOs show that RCHP profitability is sensitive to market prices, renewable generation profiles, and policy incentives. Optimal design choices, including electrolyzer and renewable capacity sizing, can enhance RCHP profitability. Furthermore, our cross-regional analyses reveal that different market characteristics favor different participation models, with wind generally achieving higher hydrogen utilization than solar. 

Future research will build upon this framework to explore several promising extensions. First, examining the impact of the hydrogen delivery system on RCHP profitability is important and warrants careful future study. Second, as fuel cell technologies advance toward higher efficiencies and lower costs, their integration can be considered to provide generation capability from stored hydrogen. Finally, incorporating risk-aversion metrics and relaxing the price-taker assumption to capture strategic market behavior will contribute to a more comprehensive profitability analysis.








\bibliographystyle{elsarticle-num}

\begingroup
\small
\begin{spacing}{0.9}
\bibliography{refs}
\end{spacing}
\endgroup

\appendix
\vspace{-0.5em}
\section{Appendix}

\subsection{Numerical Example of Fuel Cell Energy Arbitrage}\label{appendix:fuel_cell}
We excluded fuel cells from the current real-time energy market framework primarily due to the prevailing economic realities of energy arbitrage. The round-trip efficiency of P2H2P systems is relatively low, making them economically uncompetitive for frequent cycling \cite{Zhang&etal:20FER}.

To justify deploying a fuel cell in the real-time energy market, the LMP must be exceptionally high to offset the opportunity cost of not selling the hydrogen directly to the market. Following the methodology in \cite{Campana&etal:25AE}, we assume a fuel cell system efficiency of 45\% and utilize the High Heating Value (HHV) of hydrogen (39.41 kWh/kg). Under these parameters, 1 kg of H$_2$ yields approximately 0.0177 MWh of electricity. If we consider a net profit of \$3/kg from direct hydrogen sales to external off-takers, the LMP would need to exceed \$169.2/MWh just to equalize the revenue from power generation with the lost profit from direct hydrogen sales. Historical ISO data indicates that such high-price events are too infrequent to justify the substantial capital investment and degradation costs associated with fuel cell arbitrage in the energy market. For instance, in the 2022 NYISO Central zone data, the LMP met or exceeded \$169.2/MWh during only 226 hours throughout the year, representing just 2.58\% of all time intervals.

\subsection{Ex-Post Validation of Finite Storage}\label{appendix:sto_limits}
To quantitatively justify our assumption on hydrogen storage and demonstrate the robustness of our model, we conducted an ex-post numerical validation. We simulated a practical operational scenario where the local hydrogen storage tank had a finite capacity. Consistent with standard industrial practices, the tank capacity was sized to accommodate 12 to 24 hours of the electrolyzer's maximum nominal production. In recent literature, a 24-hour maximum hydrogen production volume is widely considered a standard storage capacity baseline for hydrogen facilities \cite{Lu&etal:23, Saars&etal:24}.

Furthermore, in our numerical validation, the downstream hydrogen demand was modeled as a continuous, steady-state off-take. This formulation aligns with the stringent operational requirements of heavy industrial off-takers, which represent the primary consumers of large-scale green hydrogen \cite{ArmijoPhilibert:20IHE}. Specifically, the hourly demand was generated from a Gaussian distribution, where the mean was set to 95\% of the electrolyzer's maximum hourly production capacity, and the standard deviation was constrained to 10 kg/h to reflect the micro-fluctuations of a continuous base-load chemical process.

In this simulation, our derived threshold policy served as the baseline dispatch signal, subject only to an ex-post physical override, \ie halting hydrogen production solely when the tank hit maximum capacity.
\begin{table}[!htb]
\caption{\small Ex-post Annual Profit Loss due to Finite Storage Constraints (Based on 2022 NYISO Data)} \label{tab:profit_loss}
    \centering
    \begin{tabular}{c c c c c}
    \toprule
     Scenario & M0& M1-c & M1-p& M2 \\
    \midrule
    12h storage, wind & 0.00\% & 2.30\% & 0.00\% & 1.41\% \\
    12h storage, solar & 0.00\% & 3.08\% & 0.00\% & 1.61\% \\
    24h storage, wind & 0.00\% & 1.86\% & 0.00\% & 1.07\% \\
    24h storage, solar & 0.00\% & 2.61\% & 0.00\% & 1.25\% \\
    \bottomrule
    \end{tabular}
\vspace{-1em}
\end{table}

As shown in Table \ref{tab:profit_loss}, deviations from our optimal schedule due to storage bottlenecks were rare, and the resulting operating profit loss compared to the unconstrained ideal case was marginal across all configurations.
This confirms that while storage limits introduce operational clipping, they do not invalidate our economic findings. The fundamental structure of our real-time threshold policy remains a highly robust and near-optimal dispatch baseline for practical engineering applications.

\subsection{Derivation of Amortized Per-Unit Fixed Costs}\label{sec:Appendix_AmortizedCost}
As discussed in Section~\ref{sec:formulation}, the fixed operating cost $C^{\mbox{\tiny\sf F}}$ of an RCHP is assumed to be a linear function of the renewable capacity $Q_{\mbox{\tiny\sf R}}$ and the electrolyzer capacity $Q_{\mbox{\tiny\sf H}}$, where the factors $\alpha^{\mbox{\tiny\sf R}}$ and $\alpha^{\mbox{\tiny\sf H}}$ represent the annual fixed operating costs per unit capacity of renewable and electrolyzer facilities, respectively.

To evaluate the RCHP's operating profit over $n$ periods, we define the amortized fixed costs $C_n^{\mbox{\tiny\sf F}}$ for the evaluation period.
Let $N$ denote the number of RCHP scheduling intervals per year. Then, the amortized fixed cost is given by
\begin{align}
    C_n^{\mbox{\tiny\sf F}}(Q_{\mbox{\tiny\sf R}}, Q_{\mbox{\tiny\sf H}}) &=  \frac{n}{N} \big( \alpha^{\mbox{\tiny\sf R}}Q_{\mbox{\tiny\sf R}} + \alpha^{\mbox{\tiny\sf H}}Q_{\mbox{\tiny\sf H}} \big) \nonumber \\[2pt]
    &= \alpha_n^{\mbox{\tiny\sf R}} Q_{\mbox{\tiny\sf R}} + \alpha_n^{\mbox{\tiny\sf H}} \kappa Q_{\mbox{\tiny\sf R}}.
\end{align}

\subsection{Sketch of the proof for Remark~\ref{rmk:convexity}}\label{sec:proof_rmk}
\begin{proof}
In the optimization problem \eqref{eq:maxprofit}, nonconvexity arises due to the bilinear constraint \eqref{eq:cons3}, which prevents simultaneous export and import of electricity. However, under the market condition where the REC purchase price strictly exceeds the REC selling price ($\tau_{\mbox{\tiny\sf REC}}^{\mbox{\tiny\sf IM}} > \tau_{\mbox{\tiny\sf REC}}^{\mbox{\tiny\sf EX}}$), any simultaneous power injection and withdrawal is economically suboptimal compared to its net equivalent. Specifically, replacing any simultaneous import and export with the equivalent net power exchange yields a strictly higher profit due to the REC price spread. Therefore, any optimal solution to the relaxed problem---where constraint \eqref{eq:cons3} is removed---will inherently satisfy $P^{\mbox{\tiny\sf IM}}_tP^{\mbox{\tiny\sf EX}}_t=0$. Consequently, the original non-convex problem can be exactly relaxed and solved as a convex program.
\end{proof}

\subsection{Proof of Theorem~\ref{Thm:solution} and Optimal Production Plan Including Negative LMP Scenarios}\label{sec:Appendix_OptimalPlan}
\begin{proof}[Proof of Theorem~\ref{Thm:solution}]
In the optimization problem \eqref{eq:maxprofit}, nonconvexity arises due to the bilinear constraint \eqref{eq:cons3}, which prevents simultaneous export and import of electricity. To address this, we decompose the problem into two cases: (1) $ P^{\mbox{\tiny\sf EX}}_{t}=0$ (no renewable electricity export), (2) $ P^{\mbox{\tiny\sf IM}}_{t}=0$ (no grid electricity import). Our approach involves solving the optimization problem separately for each case, formulating two linear programs (LPs). We then compare the optimal solutions from both cases to determine the globally optimal solution for \eqref{eq:maxprofit}.

(1) \underline{$ P^{\mbox{\tiny\sf EX}}_{t}=0$}: In this case, we assume that no renewable electricity generated by the RCHP is exported to the grid in time interval $t$. Substituting this condition into \eqref{eq:maxprofit} and excluding the term $\tau^{\mbox{\tiny\sf R}}\eta_tQ_{\mbox{\tiny\sf R}}$, which does not affect the operational decision, results in the following optimization.
\begin{align}\label{LP1} \nonumber
&\underset{\Pbf_t=(P^{\mbox{\tiny\sf H}}_{t}, P^{\mbox{\tiny\sf IM}}_t)}{\rm maximize}&& (\pi^{\mbox{\tiny\sf H}}+\tau^{\mbox{\tiny\sf H}}-c^{\mbox{\tiny\sf W}})(\gamma P^{\mbox{\tiny\sf H}}_{t})- (\pi^{\mbox{\tiny\sf LMP}}_t+\tau^{\mbox{\tiny\sf IM}}_{\mbox{\tiny\sf REC}})P_t^{\mbox{\tiny\sf IM}}\\ \nonumber
&\mbox{subject to}  && 0\le P^{\mbox{\tiny\sf H}}_t-P^{\mbox{\tiny\sf IM}}_t\le \eta_t Q_{\mbox{\tiny\sf R}},\\ \nonumber
&&& 0 \leq P^{\mbox{\tiny\sf H}}_{t} \leq Q_{\mbox{\tiny\sf H}},\\
& && 0 \leq  P^{\mbox{\tiny\sf IM}}_{t}  \leq Q_{\mbox{\tiny\sf H}}.
\end{align}
This LP yields the optimal solution $\mathbf{P}^{1*}_t=\Big[P^{\mbox{\tiny\sf H}*}_t,\ 0,\ P^{\mbox{\tiny\sf IM}*}_t\Big]$, subject to the constraint of no electricity export.
\begin{equation}\label{eq:sol_LP1}
\mathbf{P}^{1*}_t =
\begin{cases}
\Big[Q_{\mbox{\tiny\sf H}},\ 0,\ Q_{\mbox{\tiny\sf H}}\Big], & \pi^{\mbox{\tiny\sf LMP}}_t \le -\tau_{\mbox{\tiny\sf REC}}^{\mbox{\tiny\sf IM}};\\[4pt]
\Big[\min\{\eta_t Q_{\mbox{\tiny\sf R}}, Q_{\mbox{\tiny\sf H}}\},\ 0,\ 0\Big], & \pi^{\mbox{\tiny\sf LMP}}_t \ge \underline{\pi}^{\mbox{\tiny\sf LMP}};\\[4pt]
\Big[Q_{\mbox{\tiny\sf H}},\ 0,\ (Q_{\mbox{\tiny\sf H}}-\eta_t Q_{\mbox{\tiny\sf R}})^+\Big], & \text{otherwise.}
\end{cases}
\end{equation}
Furthermore, the corresponding objective value $V_t^{1*}$, is given by
\begin{equation}
V^{1*}_t =
\begin{cases}
(\underline{\pi}^{\mbox{\tiny\sf LMP}}-\pi^{\mbox{\tiny\sf LMP}}_t) Q_{\mbox{\tiny\sf H}},
\pi^{\mbox{\tiny\sf LMP}}_t \le -\tau_{\mbox{\tiny\sf REC}}^{\mbox{\tiny\sf IM}};\\[2pt]
\gamma(\pi^{\mbox{\tiny\sf H}}+\tau^{\mbox{\tiny\sf H}}-c^{\mbox{\tiny\sf W}}) Q_{\mbox{\tiny\sf H}},
\pi^{\mbox{\tiny\sf LMP}}_t >  -\tau_{\mbox{\tiny\sf REC}}^{\mbox{\tiny\sf IM}} ~\text{and}~  Q_{\mbox{\tiny\sf H}}\le \eta_t Q_{\mbox{\tiny\sf R}};\\[2pt]
\gamma(\pi^{\mbox{\tiny\sf H}}+\tau^{\mbox{\tiny\sf H}}-c^{\mbox{\tiny\sf W}}) \eta_t Q_{\mbox{\tiny\sf R}},
\pi^{\mbox{\tiny\sf LMP}}_t \ge \underline{\pi}^{\mbox{\tiny\sf LMP}} ~\text{and}~  Q_{\mbox{\tiny\sf H}}> \eta_t Q_{\mbox{\tiny\sf R}};\\[2pt]
(\underline{\pi}^{\mbox{\tiny\sf LMP}}-\pi^{\mbox{\tiny\sf LMP}}_t) Q_{\mbox{\tiny\sf H}}+(\pi^{\mbox{\tiny\sf LMP}}_t+\tau^{\mbox{\tiny\sf IM}}_{\mbox{\tiny\sf REC}})\eta_t Q_{\mbox{\tiny\sf R}}, \mbox{otherwise.}
\end{cases}
\end{equation}

(2) \underline{$P^{\mbox{\tiny\sf IM}}_t=0$}: In this case, we assume that electricity is not imported from the grid during time interval $t$. Similarly, we obtain the following LP.
\begin{align}\label{LP2} \nonumber
    &\underset{\Pbf_t=(P^{\mbox{\tiny\sf H}}_{t}, P^{\mbox{\tiny\sf EX}}_t)}{\rm maximize}&& (\pi^{\mbox{\tiny\sf H}}+\tau^{\mbox{\tiny\sf H}}-c^{\mbox{\tiny\sf W}})(\gamma P^{\mbox{\tiny\sf H}}_{t})+ (\pi^{\mbox{\tiny\sf LMP}}_t+\tau^{\mbox{\tiny\sf EX}}_{\mbox{\tiny\sf REC}})P_t^{\mbox{\tiny\sf EX}}\\ \nonumber
    &\mbox{subject to}  && 0\le P^{\mbox{\tiny\sf H}}_t+P^{\mbox{\tiny\sf EX}}_t\le \eta_t Q_{\mbox{\tiny\sf R}},\\ \nonumber
    &&& 0 \leq P^{\mbox{\tiny\sf H}}_{t} \leq Q_{\mbox{\tiny\sf H}},\\
    & && 0 \leq  P^{\mbox{\tiny\sf EX}}_{t}  \leq \eta_tQ_{\mbox{\tiny\sf R}}.
    \end{align}
The optimal solution, $\mathbf{P}^{2*}_t=\Big[P^{\mbox{\tiny\sf H}*}_t,\ P^{\mbox{\tiny\sf EX}*}_t,\ 0\Big]$, and the corresponding optimal value, $V_t^{2*}$, are also determined.
\begin{equation}\label{eq:sol_LP2}
\mathbf{P}^{2*}_t =
\begin{cases}
\Big[\min\{\eta_t Q_{\mbox{\tiny\sf R}}, Q_{\mbox{\tiny\sf H}}\},\ 0,\ 0\Big], &  \pi^{\mbox{\tiny\sf LMP}}_t \le -\tau_{\mbox{\tiny\sf REC}}^{\mbox{\tiny\sf EX}};\\[2pt]
\Big[0,\ \eta_t Q_{\mbox{\tiny\sf R}},\ 0\Big],& \pi^{\mbox{\tiny\sf LMP}}_t \ge \overline{\pi}^{\mbox{\tiny\sf LMP}};\\[2pt]
\Big[\min\{\eta_t Q_{\mbox{\tiny\sf R}}, Q_{\mbox{\tiny\sf H}}\},\ (\eta_t Q_{\mbox{\tiny\sf R}}-Q_{\mbox{\tiny\sf H}})^+,\ 0\Big], & \mbox{otherwise.}
 \end{cases}
\end{equation}
\begin{equation}
V^{2*}_t =
\begin{cases}
(\pi^{\mbox{\tiny\sf LMP}}_t+\tau^{\mbox{\tiny\sf EX}}_{\mbox{\tiny\sf REC}}) \eta_t Q_{\mbox{\tiny\sf R}},
\pi^{\mbox{\tiny\sf LMP}}_t \ge  \overline{\pi}^{\mbox{\tiny\sf LMP}};\\[2pt]
\gamma(\pi^{\mbox{\tiny\sf H}}+\tau^{\mbox{\tiny\sf H}}-c^{\mbox{\tiny\sf W}}) \eta_t Q_{\mbox{\tiny\sf R}},
\pi^{\mbox{\tiny\sf LMP}}_t < \overline{\pi}^{\mbox{\tiny\sf LMP}} ~\text{and}~Q_{\mbox{\tiny\sf H}}> \eta_t Q_{\mbox{\tiny\sf R}};\\[2pt]
\gamma(\pi^{\mbox{\tiny\sf H}}+\tau^{\mbox{\tiny\sf H}}-c^{\mbox{\tiny\sf W}}) Q_{\mbox{\tiny\sf H}},
\pi^{\mbox{\tiny\sf LMP}}_t \le -\tau_{\mbox{\tiny\sf REC}}^{\mbox{\tiny\sf EX}} ~\text{and}~  Q_{\mbox{\tiny\sf H}}\le \eta_t Q_{\mbox{\tiny\sf R}};\\[2pt]
(\overline{\pi}^{\mbox{\tiny\sf LMP}}-\pi^{\mbox{\tiny\sf LMP}}_t) Q_{\mbox{\tiny\sf H}}+(\pi^{\mbox{\tiny\sf LMP}}_t+\tau^{\mbox{\tiny\sf EX}}_{\mbox{\tiny\sf REC}})\eta_t Q_{\mbox{\tiny\sf R}}, \mbox{otherwise.}
\end{cases}
\end{equation}

Note that four electricity price thresholds determine the optimal solution: $-\tau_{\mbox{\tiny\sf REC}}^{\mbox{\tiny\sf IM}}$, $-\tau_{\mbox{\tiny\sf REC}}^{\mbox{\tiny\sf EX}}$, $\underline{\pi}^{\mbox{\tiny\sf LMP}}$, and $\overline{\pi}^{\mbox{\tiny\sf LMP}}$. Typically, the first two thresholds are negative, while the last two are positive, satisfying the ordering $-\tau_{\mbox{\tiny\sf REC}}^{\mbox{\tiny\sf IM}}<-\tau_{\mbox{\tiny\sf REC}}^{\mbox{\tiny\sf EX}}<\underline{\pi}^{\mbox{\tiny\sf LMP}}<\overline{\pi}^{\mbox{\tiny\sf LMP}}$. We derive the optimal solution under this assumption. Solutions for special parameter settings that result in a different ordering of these thresholds can be obtained by similar arguments.

If $\pi^{\mbox{\tiny\sf LMP}}_t\le -\tau_{\mbox{\tiny\sf REC}}^{\mbox{\tiny\sf IM}}$, then
\begin{align}\nn
V^{1*}_t=&(\underline{\pi}^{\mbox{\tiny\sf LMP}}-\pi^{\mbox{\tiny\sf LMP}}_t) Q_{\mbox{\tiny\sf H}}\\ \nn
=&\gamma(\pi^{\mbox{\tiny\sf H}}+\tau^{\mbox{\tiny\sf H}}-c^{\mbox{\tiny\sf w}})Q_{\mbox{\tiny\sf H}}-(\pi^{\mbox{\tiny\sf LMP}}_t+\tau_{\mbox{\tiny\sf REC}}^{\mbox{\tiny\sf IM}})Q_{\mbox{\tiny\sf H}}\\
\geq &\gamma(\pi^{\mbox{\tiny\sf H}}+\tau^{\mbox{\tiny\sf H}}-c^{\mbox{\tiny\sf w}})\min\{\eta_t Q_{\mbox{\tiny\sf R}}, Q_{\mbox{\tiny\sf H}}\} =V^{2*}_t,
\end{align}
indicating that the optimal solution for \eqref{eq:maxprofit} is given by $\mathbf{P}^{*}_t=\mathbf{P}^{1*}_t=\Big[Q_{\mbox{\tiny\sf H}},\ 0,\ Q_{\mbox{\tiny\sf H}}\Big]$.

If $-\tau_{\mbox{\tiny\sf REC}}^{\mbox{\tiny\sf IM}}<\pi^{\mbox{\tiny\sf LMP}}_t \le -\tau_{\mbox{\tiny\sf REC}}^{\mbox{\tiny\sf EX}}$, then for the case $Q_{\mbox{\tiny\sf H}}\le \eta_t Q_{\mbox{\tiny\sf R}}$, both \eqref{LP1} and \eqref{LP2} yield the same solution: $\mathbf{P}^{*}_t=\Big[Q_{\mbox{\tiny\sf H}},\ 0,\ 0\Big]$. However, when $Q_{\mbox{\tiny\sf H}}> \eta_t Q_{\mbox{\tiny\sf R}}$, we have
\begin{align}\nn
V^{1*}_t-V^{2*}_t=&(\underline{\pi}^{\mbox{\tiny\sf LMP}}-\pi^{\mbox{\tiny\sf LMP}}_t) Q_{\mbox{\tiny\sf H}}+(\pi^{\mbox{\tiny\sf LMP}}_t-\underline{\pi}^{\mbox{\tiny\sf LMP}})\eta_t Q_{\mbox{\tiny\sf R}}\\ \nn
=&(\underline{\pi}^{\mbox{\tiny\sf LMP}}-\pi^{\mbox{\tiny\sf LMP}}_t)(Q_{\mbox{\tiny\sf H}}-\eta_t Q_{\mbox{\tiny\sf R}})>0.
\end{align}
Thus, the optimal solution is $\mathbf{P}^{*}_t=\Big[Q_{\mbox{\tiny\sf H}},\ 0,\ Q_{\mbox{\tiny\sf H}}-\eta_t Q_{\mbox{\tiny\sf R}}\Big]$.

If $-\tau_{\mbox{\tiny\sf REC}}^{\mbox{\tiny\sf EX}}<\pi^{\mbox{\tiny\sf LMP}}_t \le \underline{\pi}^{\mbox{\tiny\sf LMP}}$, then for the case $Q_{\mbox{\tiny\sf H}}\le \eta_t Q_{\mbox{\tiny\sf R}}$,
\begin{align}\nn
    V^{1*}_t-V^{2*}_t=&-(\pi^{\mbox{\tiny\sf LMP}}_{t}+\tau^{\mbox{\tiny\sf EX}}_{\mbox{\tiny\sf REC}})(\eta_t Q_{\mbox{\tiny\sf R}}-Q_{\mbox{\tiny\sf H}})\le 0,
\end{align}
and we should adopt the optimal solution $\mathbf{P}^{*}_t$ as described in $\mathbf{P}^{2*}_t=\Big[Q_{\mbox{\tiny\sf H}},\ \eta_t Q_{\mbox{\tiny\sf R}}-Q_{\mbox{\tiny\sf H}},\ 0\Big]$. For the case $Q_{\mbox{\tiny\sf H}}> \eta_t Q_{\mbox{\tiny\sf R}}$, we obtain
\begin{align}\nn
    V^{1*}_t-V^{2*}_t=&(\underline{\pi}^{\mbox{\tiny\sf LMP}}-\pi^{\mbox{\tiny\sf LMP}}_t)(Q_{\mbox{\tiny\sf H}}-\eta_t Q_{\mbox{\tiny\sf R}})\ge 0.
\end{align}
Therefore, the optimal solution is given when $ P^{\mbox{\tiny\sf EX}}_{t}=0$, \ie $\mathbf{P}^{*}_t=\Big[Q_{\mbox{\tiny\sf H}},\ 0,\ Q_{\mbox{\tiny\sf H}}-\eta_t Q_{\mbox{\tiny\sf R}}\Big]$.

If $\underline{\pi}^{\mbox{\tiny\sf LMP}}<\pi^{\mbox{\tiny\sf LMP}}_t< \overline{\pi}^{\mbox{\tiny\sf LMP}}$, then for the case $Q_{\mbox{\tiny\sf H}}> \eta_t Q_{\mbox{\tiny\sf R}}$, both \eqref{LP1} and \eqref{LP2} yield the same solution: $\mathbf{P}^{*}_t=\Big[\eta_t Q_{\mbox{\tiny\sf R}},\ 0,\ 0\Big]$. Conversely, when $Q_{\mbox{\tiny\sf H}}\le \eta_t Q_{\mbox{\tiny\sf R}}$,
\begin{align}\nn
V^{1*}_t-V^{2*}_t=&-(\pi^{\mbox{\tiny\sf LMP}}_{t}+\tau^{\mbox{\tiny\sf EX}}_{\mbox{\tiny\sf REC}})(\eta_t Q_{\mbox{\tiny\sf R}}-Q_{\mbox{\tiny\sf H}})\le 0.
\end{align}
It follows that the optimal solution for \eqref{eq:maxprofit} is expressed as $\mathbf{P}^{*}_t=\Big[Q_{\mbox{\tiny\sf H}},\ \eta_t Q_{\mbox{\tiny\sf R}}-Q_{\mbox{\tiny\sf H}},\ 0\Big]$.

If $\pi^{\mbox{\tiny\sf LMP}}_t\ge \overline{\pi}^{\mbox{\tiny\sf LMP}}$, then
\begin{align}\nn
V^{2*}_t=&(\pi^{\mbox{\tiny\sf LMP}}_{t}+\tau^{\mbox{\tiny\sf EX}}_{\mbox{\tiny\sf REC}})\eta_t Q_{\mbox{\tiny\sf R}}\\ \nn
=&\gamma(\pi^{\mbox{\tiny\sf H}}+\tau^{\mbox{\tiny\sf H}}-c^{\mbox{\tiny\sf W}})\eta_t Q_{\mbox{\tiny\sf R}}+(\pi^{\mbox{\tiny\sf LMP}}_t-\overline{\pi}^{\mbox{\tiny\sf LMP}})\eta_t Q_{\mbox{\tiny\sf R}}\\
\ge&\gamma(\pi^{\mbox{\tiny\sf H}}+\tau^{\mbox{\tiny\sf H}}-c^{\mbox{\tiny\sf W}})\min\{\eta_t Q_{\mbox{\tiny\sf R}}, Q_{\mbox{\tiny\sf H}}\}=V^{1*}_t,
\end{align}
implying that $\mathbf{P}^{*}_t=\Big[0,\ \eta_t Q_{\mbox{\tiny\sf R}},\ 0\Big]$.

Combining all these results, we derive the closed-form solution for the original optimization problem \eqref{eq:maxprofit}. The solution under positive LMPs is expressed in the form of \eqref{eq:optsol}, while the complete solution, including negative LMP cases, is visualized in Fig. \ref{fig:Optimalplan_negQH} (a).
\end{proof}

In determining the RCHP's optimal operational decision as a standalone hydrogen producer (M0), renewable producer (M1-p), or price-elastic consumer (M1-c), the optimization model \eqref{eq:maxprofit} simplifies to a LP due to the implicit constraints imposed by market participation rules. Therefore, the optimal solution can be readily obtained, as illustrated in Fig. \ref{fig:Optimalplan_negQH} (b)-(d). We also present the optimal production plans under different market models when $Q_{\mbox{\tiny\sf H}}>Q_{\mbox{\tiny\sf R}}$, as shown in Fig. \ref{fig:Optimalplan_negQR}.
\begin{figure}[!htb]
    \centering
    \includegraphics[scale=0.3]{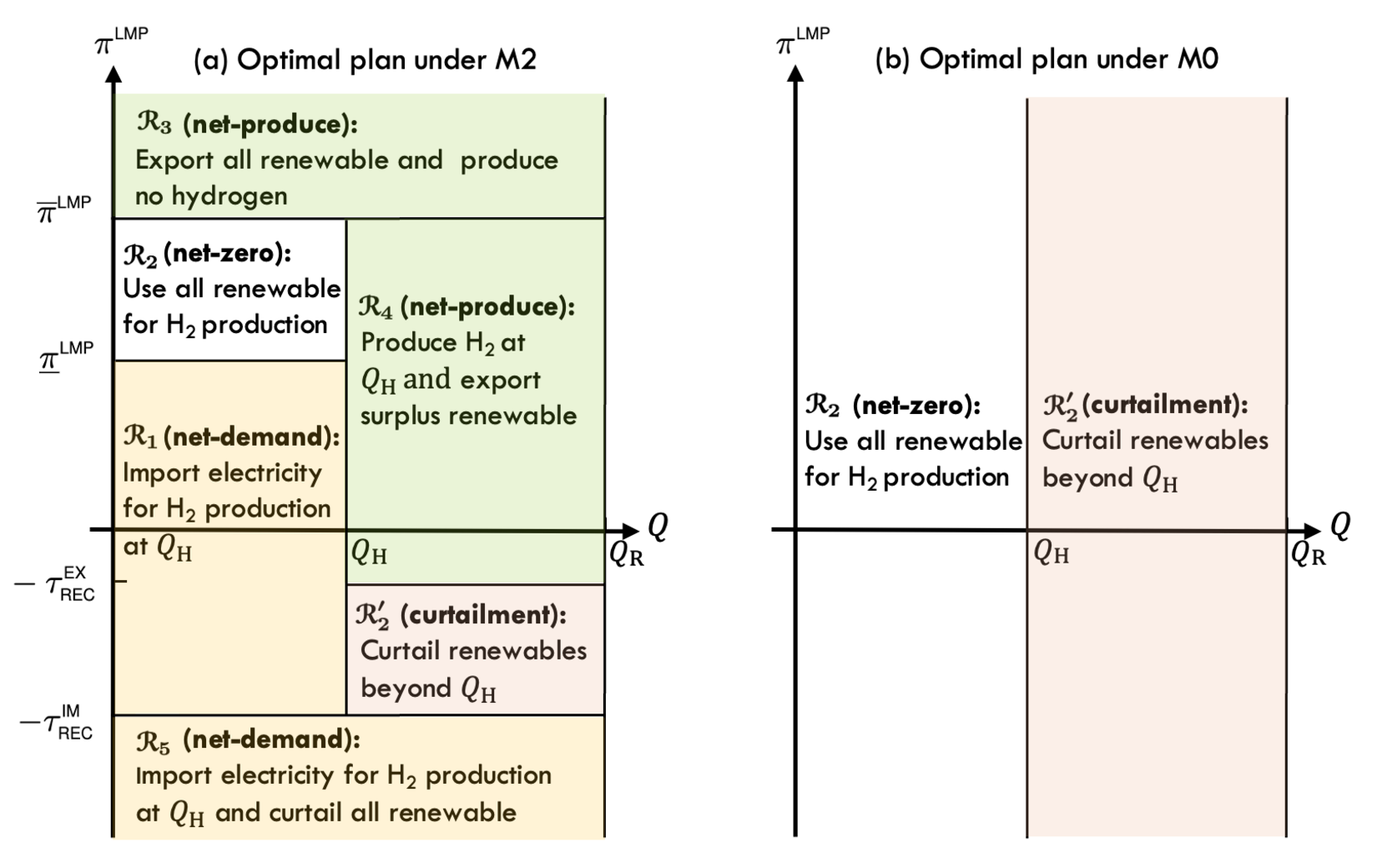}
    \includegraphics[scale=0.3]{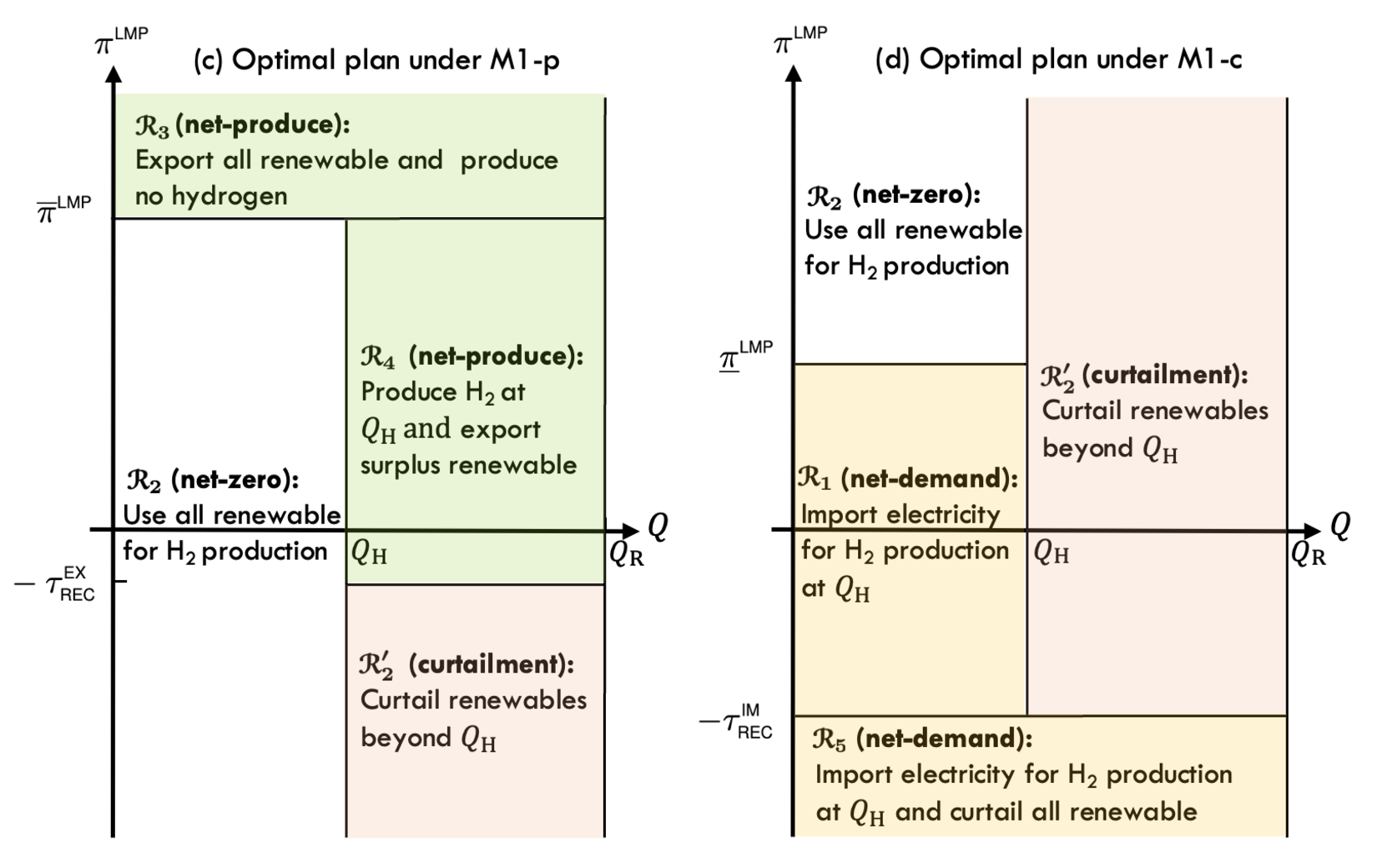}
    \caption{\scriptsize Optimal production plans for RCHP when $Q_{\mbox{\tiny\sf H}}<Q_{\mbox{\tiny\sf R}}$ (including negative LMP cases).}
    \label{fig:Optimalplan_negQH}
\end{figure}
\begin{figure}[!htb]
    \centering
    \includegraphics[scale=0.3]{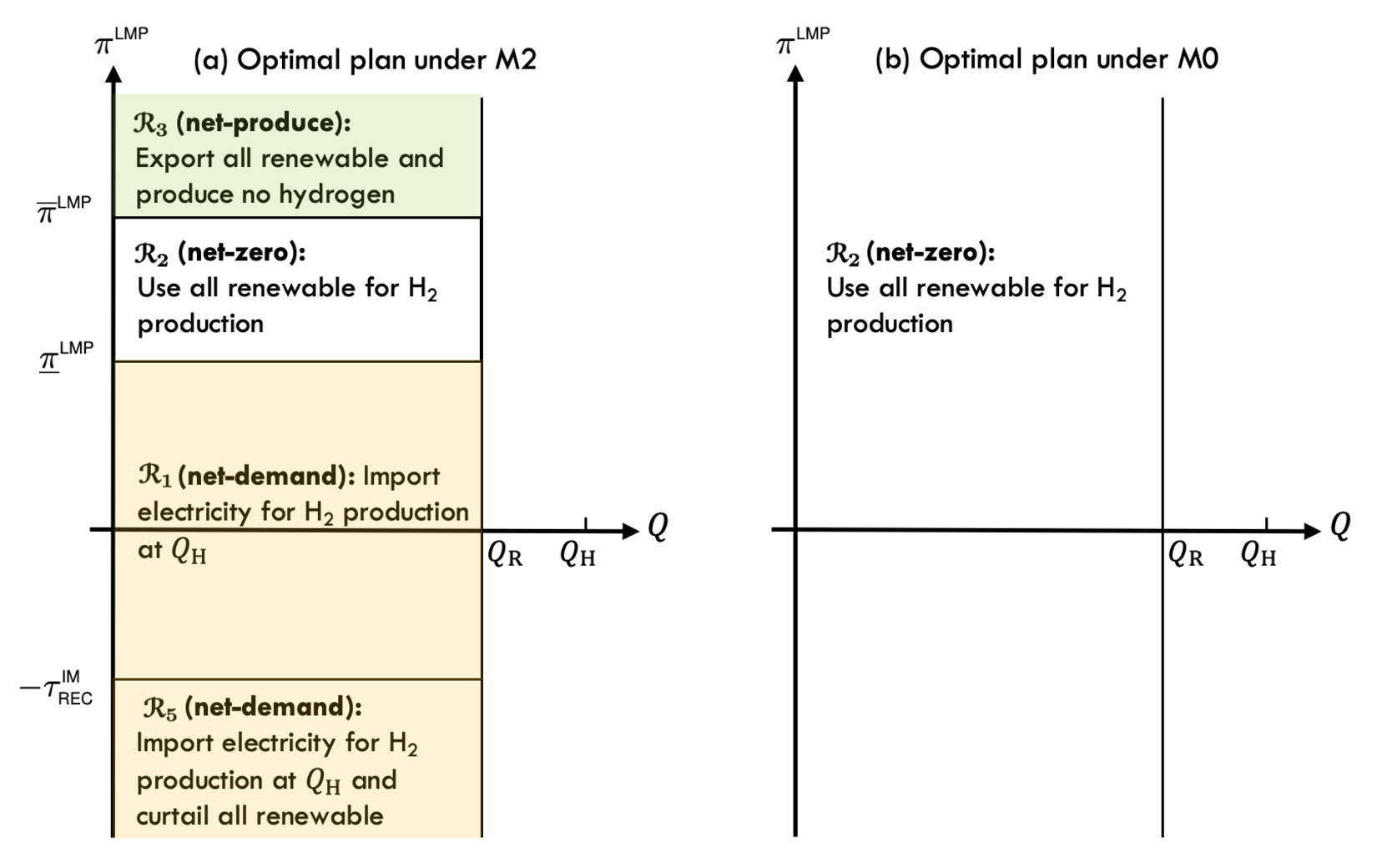}
    \includegraphics[scale=0.3]{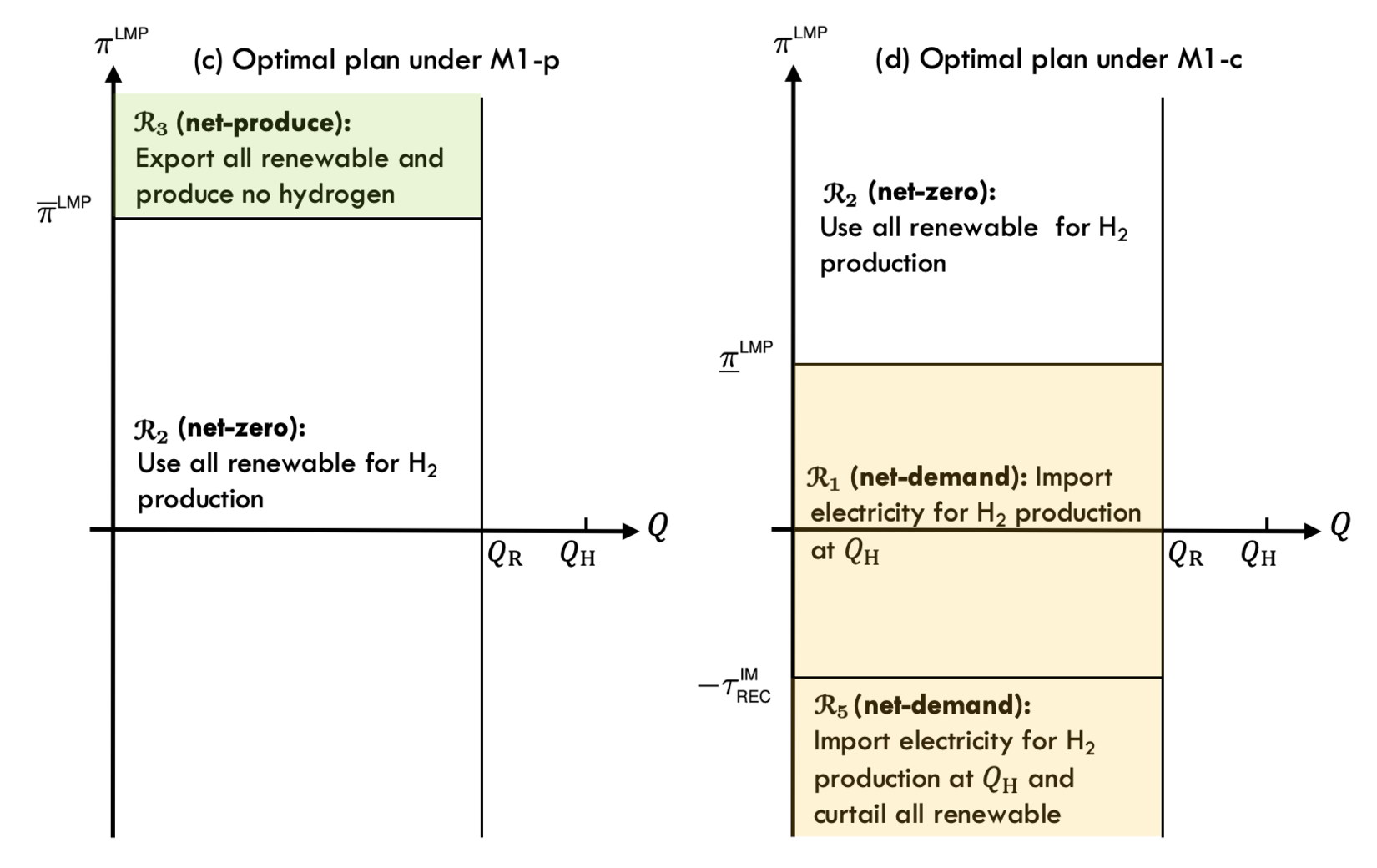}
    \caption{\scriptsize Optimal production plans for RCHP when $Q_{\mbox{\tiny\sf H}}>Q_{\mbox{\tiny\sf R}}$ (including negative LMP cases).}
    \label{fig:Optimalplan_negQR}
\end{figure}

\subsection{Optimal Production Plan with Piecewise Linear Hydrogen Production Function}\label{appendix:piecewise_prod}
\begin{figure}[!htb]
    \centering
    \includegraphics[scale=0.4]{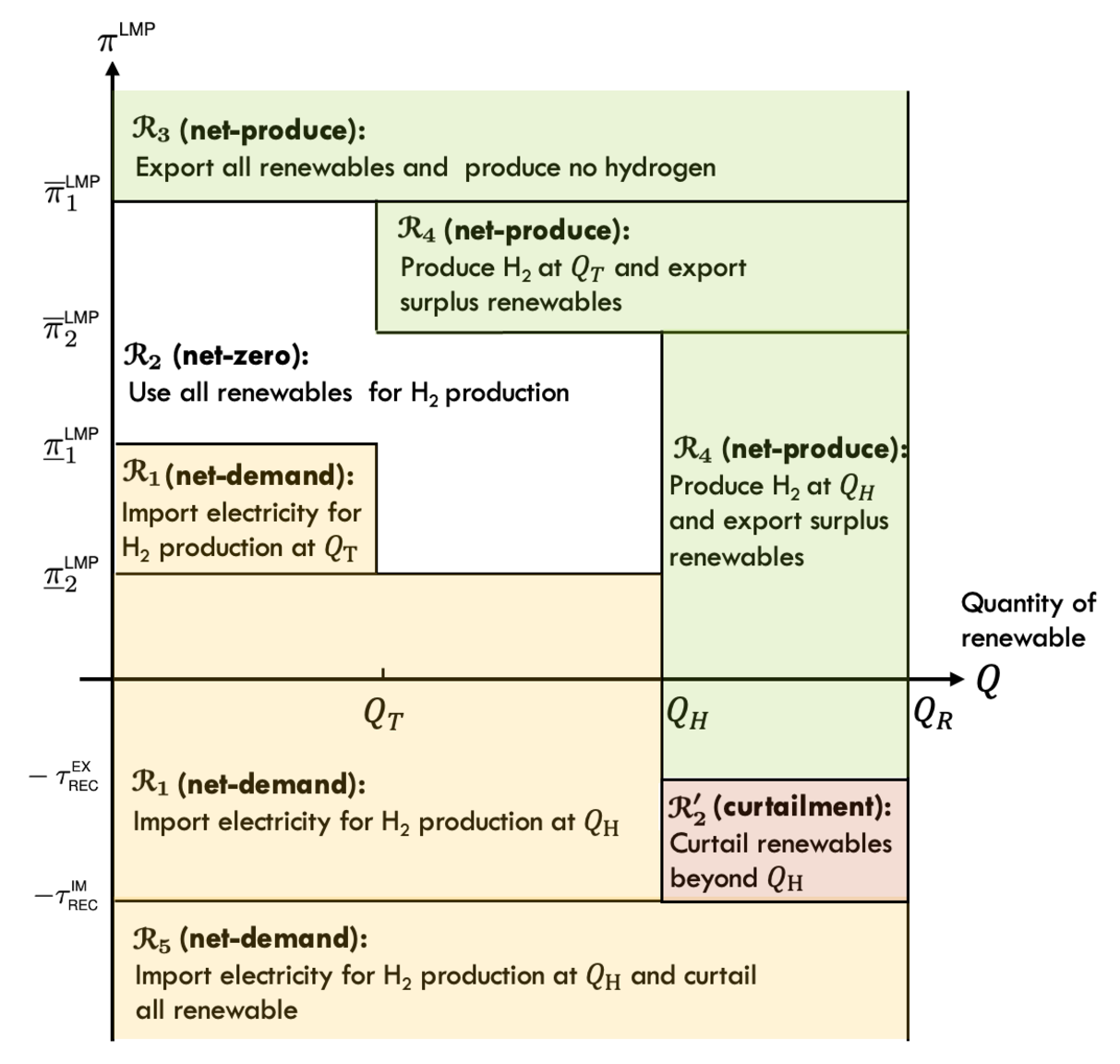}
    \includegraphics[scale=0.4]{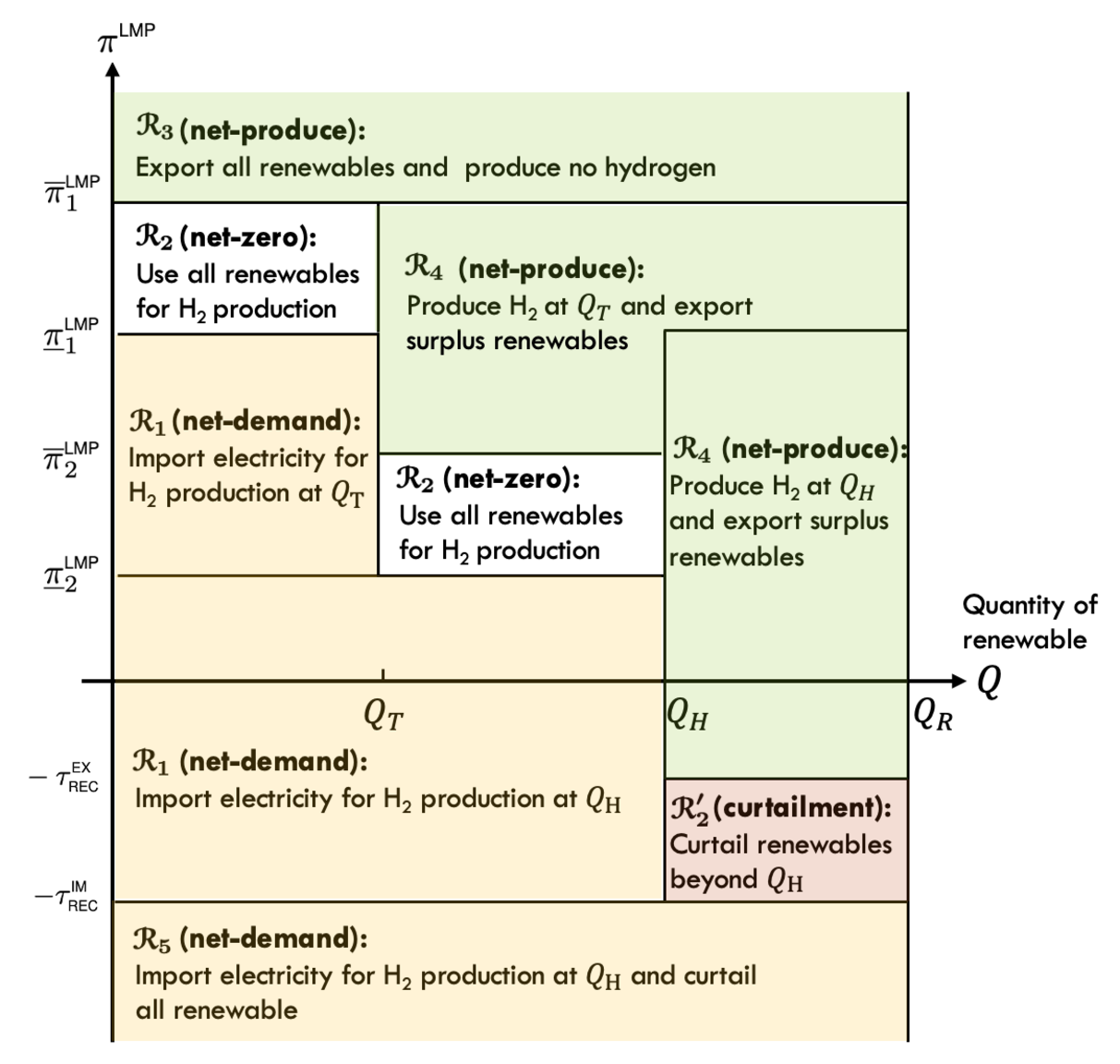}
    \caption{\scriptsize Optimal production plan for RCHP with a two-segment piecewise linear production function (prosumer model M2).}
    \label{fig:piecewise}
\end{figure}
A more precise modeling of the electrolyzer production function can be achieved using a piecewise linear approximation of the hydrogen production curve \cite{DhankarChenTong:24PESGM,lesniak2024}.

Consider the electrolyzer's hydrogen production at time $t$, which is constrained by a concave piecewise linear production function with $K$ segments, as follows:
\begin{equation}\label{eq:piecewise}
H_t\leq (\alpha_k P^{\mbox{\tiny\sf H}}_t+\beta_k)\Delta T \quad \forall k\in \{1,\cdots, K\},
\end{equation}
where $\alpha_k$ and $\beta_k$ are the slope and intercept of the $k$-th linear segment, respectively.
For simplicity and consistent with the main context, we assume $\Delta T=1$ and $\beta_1=0$.
The optimization problem \eqref{eq:maxprofit}, with the objective function \eqref{eq:objwREC}, can be reformulated by replacing the hydrogen production function \eqref{eq:linear_prod} with \eqref{eq:piecewise}.

Following the approach outlined in the proof of Theorem~\ref{Thm:solution}, the optimal solution for the RCHP's real-time operation can be derived. Despite the increased complexity, a threshold-based closed-form solution remains attainable, although it involves more thresholds than in the case of a linear production function.
Fig. \ref{fig:piecewise} illustrates the optimal production plan for an RCHP prosumer with a two-segment piecewise linear production function. The model yields six electricity price thresholds, which can be pre-determined based on system parameters. Depending on these parameters, the ordering of certain thresholds may vary. In particular, we present two distinct production plans where either $\overline{\pi}^{\mbox{\tiny\sf LMP}}_{\mbox{\tiny 2}}>\underline{\pi}^{\mbox{\tiny\sf LMP}}_{\mbox{\tiny 1}}$ or $\overline{\pi}^{\mbox{\tiny\sf LMP}}_{\mbox{\tiny 2}}<\underline{\pi}^{\mbox{\tiny\sf LMP}}_{\mbox{\tiny 1}}$.
\begin{equation}\label{eq:thresholds-piecewise}
\begin{array}{cl}
    \underline{\pi}^{\mbox{\tiny\sf LMP}}_{\mbox{\tiny 1}} &= \alpha_{\mbox{\tiny 1}}(\pi^{\mbox{\tiny\sf H}}+\tau^{\mbox{\tiny\sf H}}-c^{\mbox{\tiny\sf w}})-\tau_{\mbox{\tiny\sf REC}}^{\mbox{\tiny\sf IM}},\\[1pt]
    \underline{\pi}^{\mbox{\tiny\sf LMP}}_{\mbox{\tiny 2}} &= \alpha_{\mbox{\tiny 2}}(\pi^{\mbox{\tiny\sf H}}+\tau^{\mbox{\tiny\sf H}}-c^{\mbox{\tiny\sf w}})-\tau_{\mbox{\tiny\sf REC}}^{\mbox{\tiny\sf IM}},\\[1pt]
    \overline{\pi}^{\mbox{\tiny\sf LMP}}_{\mbox{\tiny 1}} &= \alpha_{\mbox{\tiny 1}}(\pi^{\mbox{\tiny\sf H}}+\tau^{\mbox{\tiny\sf H}}-c^{\mbox{\tiny\sf w}})-\tau_{\mbox{\tiny\sf REC}}^{\mbox{\tiny\sf EX}},\\[1pt]
    \overline{\pi}^{\mbox{\tiny\sf LMP}}_{\mbox{\tiny 2}} &= \alpha_{\mbox{\tiny 2}}(\pi^{\mbox{\tiny\sf H}}+\tau^{\mbox{\tiny\sf H}}-c^{\mbox{\tiny\sf w}})-\tau_{\mbox{\tiny\sf REC}}^{\mbox{\tiny\sf EX}}.
\end{array}
\end{equation}

Additionally, the renewable generation thresholds, denoted as $(Q_{\mbox{\tiny\sf T}}, Q_{\mbox{\tiny\sf H}}, Q_{\mbox{\tiny\sf R}})$, correspond to the threshold where the electrolyzer's efficiency changes, \ie $Q_{\mbox{\tiny\sf T}}=(\beta_{\mbox{\tiny 2}}-\beta_{\mbox{\tiny 1}})/(\alpha_{\mbox{\tiny 1}}-\alpha_{\mbox{\tiny 2}})$, the electrolyzer's capacity, and the renewable generation capacity, respectively.

\subsection{Proofs of Proposition~\ref{Prop:linear} and Theorem~\ref{Thm:profitability}}\label{appendix:thm2}
\begin{proof}[Proof of Proposition~\ref{Prop:linear}]
According to the optimal production plan, which includes four operational regions, if the RCHP prosumer operates optimally in a certain region during time interval $t$, the gross profit corresponding to  $\reg{1}$- $\reg{4}$ can be calculated as follows.\footnote{The inclusion of  $\reg{5}$ and  $\reg{6}$ in cases with negative LMP is straightforward. The conclusions and the logic of the proof remain unchanged.}
\begin{flalign}\nn
V_t^{(1)}&=(\underline{\pi}^{\mbox{\tiny\sf LMP}}-\pi_t^{\mbox{\tiny\sf LMP}})Q_{\mbox{\tiny\sf H}}+(\pi_t^{\mbox{\tiny\sf LMP}}+\tau_{\mbox{\tiny\sf REC}}^{\mbox{\tiny\sf IM}}+\tau^{\mbox{\tiny\sf R}})\eta_t Q_{\mbox{\tiny\sf R}}, &\\ \nn
V_t^{(2)}&=\big(\gamma(\pi^{\mbox{\tiny\sf H}}+\tau^{\mbox{\tiny\sf H}}-c^{\mbox{\tiny\sf W}})+\tau^{\mbox{\tiny\sf R}}\big)\eta_t Q_{\mbox{\tiny\sf R}}, &\\ \nn
V_t^{(3)}&=(\pi_t^{\mbox{\tiny\sf LMP}}+\tau_{\mbox{\tiny\sf REC}}^{\mbox{\tiny\sf EX}}+\tau^{\mbox{\tiny\sf R}})\eta_t Q_{\mbox{\tiny\sf R}},&\\
V_t^{(4)}&=(\overline{\pi}^{\mbox{\tiny\sf LMP}}-\pi_t^{\mbox{\tiny\sf LMP}})Q_{\mbox{\tiny\sf H}}+(\pi_t^{\mbox{\tiny\sf LMP}}+\tau_{\mbox{\tiny\sf REC}}^{\mbox{\tiny\sf EX}}+\tau^{\mbox{\tiny\sf R}})\eta_t Q_{\mbox{\tiny\sf R}}.&
\end{flalign}
Here, we denote $V_t^{(i)}$ without explicitly listing its arguments. The full expression is $V_t^{(i)}(\pi_t^{\mbox{\tiny\sf LMP}},\eta_t; Q_{\mbox{\tiny\sf R}},Q_{\mbox{\tiny\sf H}})$.

By taking the conditional expectation of the gross profit in each region, we obtain the conditional expected gross profit $\Pi_{t,\kappa}^{(i)}(Q_{\mbox{\tiny\sf R}},Q_{\mbox{\tiny\sf H}})$ for  $\reg{1}$- $\reg{4}$, where $\E_{t,\kappa}^{(i)}[\cdot]$ denotes the conditional expectation operator in interval $t$ on  $\reg{i}$, and $\kappa=Q_{\mbox{\tiny\sf H}}/ Q_{\mbox{\tiny\sf R}}$.
\begin{flalign}\nn
\Pi_{t,\kappa}^{(1)}(Q_{\mbox{\tiny\sf R}},Q_{\mbox{\tiny\sf H}})&=\Big((\tau^{\mbox{\tiny\sf IM}}_{\mbox{\tiny\sf REC}}+\tau^{\mbox{\tiny\sf R}})\E_{t,\kappa}^{(1)}[\eta_t]+\E_{t,\kappa}^{(1)}[\eta_t\pi_{t}^{\mbox{\tiny\sf LMP}}]\Big)Q_{\mbox{\tiny\sf R}}&\\ \nn
&+\Big(\underline{\pi}^{\mbox{\tiny\sf LMP}}- \E_{t,\kappa}^{(1)}[\pi_t^{\mbox{\tiny\sf LMP}}]\Big) Q_{\mbox{\tiny\sf H}}, &\\ \nn
\Pi_{t,\kappa}^{(2)}(Q_{\mbox{\tiny\sf R}},Q_{\mbox{\tiny\sf H}})&=\Big(\big(\gamma(\pi^{\mbox{\tiny\sf H}}+\tau^{\mbox{\tiny\sf H}}-c^{\mbox{\tiny\sf W}})+\tau^{\mbox{\tiny\sf R}}\big)\E_{t,\kappa}^{(2)}[\eta_t] \Big)Q_{\mbox{\tiny\sf R}}, &\\ \nn
\Pi_{t,\kappa}^{(3)}(Q_{\mbox{\tiny\sf R}},Q_{\mbox{\tiny\sf H}})&=\Big((\tau^{\mbox{\tiny\sf EX}}_{\mbox{\tiny\sf REC}}+\tau^{\mbox{\tiny\sf R}})\E_{t,\kappa}^{(3)}[\eta_t]+\E_{t,\kappa}^{(3)}[\eta_t\pi_t^{\mbox{\tiny\sf LMP}}] \Big)Q_{\mbox{\tiny\sf R}}, &\\ \nn
\Pi_{t,\kappa}^{(4)}(Q_{\mbox{\tiny\sf R}},Q_{\mbox{\tiny\sf H}})&=\Big((\tau^{\mbox{\tiny\sf EX}}_{\mbox{\tiny\sf REC}}+\tau^{\mbox{\tiny\sf R}})\E_{t,\kappa}^{(4)}[\eta_t]+\E_{t,\kappa}^{(4)}[\eta_t\pi_{t}^{\mbox{\tiny\sf LMP}}]\Big)Q_{\mbox{\tiny\sf R}}&\\ 
&+\Big(\overline{\pi}^{\mbox{\tiny\sf LMP}}- \E_{t,\kappa}^{(4)}[\pi_t^{\mbox{\tiny\sf LMP}}]\Big) Q_{\mbox{\tiny\sf H}}. &
\end{flalign}

Summing the conditional expected gross profit over all regions, weighted by the probabilities of each region $P_{t,\kappa}^{(i)}$, we derive the expected gross profit for the RCHP prosumer as a function of the nameplate capacities of the electrolyzer and renewable plant, given by
\beq\label{eq:Expect_NR}
\Pi_t(Q_{\mbox{\tiny\sf R}},Q_{\mbox{\tiny\sf H}})= \sum_{i=1}^4 P_{t,\kappa}^{(i)}\Pi_{t,\kappa}^{(i)}(Q_{\mbox{\tiny\sf R}},Q_{\mbox{\tiny\sf H}})=A^{\mbox{\tiny\sf R}}_{t,\kappa}Q_{\mbox{\tiny\sf R}}+A^{\mbox{\tiny\sf H}}_{t,\kappa}Q_{\mbox{\tiny\sf H}}.
\eeq

Subtracting the amortized fixed costs from the expected gross profit over $n$ periods, the expected $n$-period operating profit is
\begin{flalign}\nn
J^{\mbox{\tiny\sf OP}}_n(Q_{\mbox{\tiny\sf R}},Q_{\mbox{\tiny\sf H}})&=\sum_{t=1}^n \Pi_t(Q_{\mbox{\tiny\sf R}},Q_{\mbox{\tiny\sf H}})-(\alpha^{\mbox{\tiny\sf R}}_nQ_{\mbox{\tiny\sf R}}+\alpha^{\mbox{\tiny\sf H}}_nQ_{\mbox{\tiny\sf H}})&\\ \nn
&=\big(\sum_{t=1}^nA^{\mbox{\tiny\sf R}}_{t,\kappa}-\alpha_n^{\mbox{\tiny\sf R}}\big)Q_{\mbox{\tiny\sf R}}+\big(\sum_{t=1}^nA^{\mbox{\tiny\sf H}}_{t,\kappa}-\alpha_n^{\mbox{\tiny\sf H}}\big)Q_{\mbox{\tiny\sf H}}.&
\end{flalign}
\end{proof}

\begin{proof}[Proof of Theorem~\ref{Thm:profitability}]
From the expected gross profit expression \eqref{eq:Expect_NR}, we observe that for RCHPs with different capacity pairs $(Q_{\mbox{\tiny\sf H}}, Q_{\mbox{\tiny\sf R}})$, if the capacity ratio $\kappa=Q_{\mbox{\tiny\sf H}}/ Q_{\mbox{\tiny\sf R}}$ remains constant, the values of $A^{\mbox{\tiny\sf R}}_{t,\kappa}$ and $A^{\mbox{\tiny\sf H}}_{t,\kappa}$ are constants. Consequently, the expected gross profit $\Pi_t(Q_{\mbox{\tiny\sf R}},Q_{\mbox{\tiny\sf H}})$ is linear with respect to the capacity pair $(Q_{\mbox{\tiny\sf H}}, Q_{\mbox{\tiny\sf R}})$.

Furthermore, the amortized fixed cost is also linear with respect to the electrolyzer and renewable plant capapcities. Therefore, as shown in \eqref{eq:nCF1}, if $\kappa$ is fixed, the expected $n$-period operating profit $J^{\mbox{\tiny\sf OP}}_n(Q_{\mbox{\tiny\sf R}},Q_{\mbox{\tiny\sf H}})$ is a linear function of the nameplate capacities $(Q_{\mbox{\tiny\sf H}}, Q_{\mbox{\tiny\sf R}})$.

In the nameplate capacity plane $Q_{\mbox{\tiny\sf H}}$ vs $Q_{\mbox{\tiny\sf R}}$, for any capacity pair lying on a line with a fixed slope $\kappa$, the expected operating profit either increases or decreases linearly away from the origin. This implies that the break-even points of the RCHP capacity form a union of linear lines in the nameplate capacity plane, determined by the slope $\kappa^0$ satisfying $J^{\mbox{\tiny\sf OP}}_n(Q_{\mbox{\tiny\sf R}},\kappa^0Q_{\mbox{\tiny\sf R}})=0$.

Now, considering the RCHP operation under fixed system parameters, we fix the renewable capcacity $Q_{\mbox{\tiny\sf R}}$ and examine the impact of the electrolyzer capacity $Q_{\mbox{\tiny\sf H}}=\kappa Q_{\mbox{\tiny\sf R}}$ on the expected operating profit. Taking the partial derivative of $J^{\mbox{\tiny\sf OP}}_n(Q_{\mbox{\tiny\sf R}},Q_{\mbox{\tiny\sf H}})$ with respect to $Q_{\mbox{\tiny\sf H}}$, we obtain
\begin{flalign}
\nn
\frac{\partial J^{\mbox{\tiny\sf OP}}_n(Q_{\mbox{\tiny\sf R}},Q_{\mbox{\tiny\sf H}})}{\partial Q_{\mbox{\tiny\sf H}}}&=\sum_{t=1}^n \!\Big(\!\int_{0}^{\underline{\pi}^{\mbox{\tiny\sf LMP}}}\!\!\!\!\!\!\int_{0}^{\kappa}\!\!\!(\underline{\pi}^{\mbox{\tiny\sf LMP}}\!\!-\!\!\pi_t^{\mbox{\tiny\sf LMP}})\rho_t(\pi_t^{\mbox{\tiny\sf LMP}}\!,\eta_t)d\eta_t d\pi_t^{\mbox{\tiny\sf LMP}}&\\ \nn
&+\!\! \int_{0}^{\overline{\pi}^{\mbox{\tiny\sf LMP}}}\!\!\!\!\!\!\int_{\kappa}^{1}\!\!\!(\overline{\pi}^{\mbox{\tiny\sf LMP}}\!\!-\!\!\pi_t^{\mbox{\tiny\sf LMP}})\rho_t(\pi_t^{\mbox{\tiny\sf LMP}}\!,\eta_t)d\eta_t d\pi_t^{\mbox{\tiny\sf LMP}} \Big)\!\!-\!\alpha_n^{\mbox{\tiny\sf H}}&\\
&=\sum_{t=1}^n A_{t,\kappa}^{\mbox{\tiny\sf H}}-\alpha_n^{\mbox{\tiny\sf H}},&
\end{flalign}
where $\rho_t(\pi_t^{\mbox{\tiny\sf LMP}},\eta_t)$ is the joint probability density function of distribution $(\pi_t^{\mbox{\tiny\sf LMP}},\eta_t)$.

Consider two electrolyzer capacity values, $\tilde{Q}_{\mbox{\tiny\sf H}}=\tilde{\kappa}Q_{\mbox{\tiny\sf R}}$ and $\tilde{Q}'_{\mbox{\tiny\sf H}}=(\tilde{\kappa}+\delta)Q_{\mbox{\tiny\sf R}}$, where $0\leq \tilde{Q}_{\mbox{\tiny\sf H}}< \tilde{Q}'_{\mbox{\tiny\sf H}}$.
\begin{flalign}
\nn
&\frac{\partial J^{\mbox{\tiny\sf OP}}_n}{\partial Q_{\mbox{\tiny\sf H}}}(Q_{\mbox{\tiny\sf R}},\tilde{Q}'_{\mbox{\tiny\sf H}})-\frac{\partial J^{\mbox{\tiny\sf OP}}_n}{\partial Q_{\mbox{\tiny\sf H}}}(Q_{\mbox{\tiny\sf R}},\tilde{Q}_{\mbox{\tiny\sf H}})&\\ \nn
=&\sum_{t=1}^n \Big( \int_{0}^{\underline{\pi}^{\mbox{\tiny\sf LMP}}}\!\!\!\!\!\!\int_{\tilde{\kappa}}^{\tilde{\kappa}+\delta}(\underline{\pi}^{\mbox{\tiny\sf LMP}}-\pi_t^{\mbox{\tiny\sf LMP}})\rho_t(\pi_t^{\mbox{\tiny\sf LMP}},\eta_t)d\eta_t d\pi_t^{\mbox{\tiny\sf LMP}}&\\ \nn
-& \int_{0}^{\overline{\pi}^{\mbox{\tiny\sf LMP}}}\!\!\!\!\!\!\int_{\tilde{\kappa}}^{\tilde{\kappa}+\delta}(\overline{\pi}^{\mbox{\tiny\sf LMP}}-\pi_t^{\mbox{\tiny\sf LMP}})\rho_t(\pi_t^{\mbox{\tiny\sf LMP}},\eta_t)d\eta_t d\pi_t^{\mbox{\tiny\sf LMP}} \Big)&\\ \nn
=&\sum_{t=1}^n\Big( \int_{0}^{\underline{\pi}^{\mbox{\tiny\sf LMP}}}\!\!\!\!\!\!\int_{\tilde{\kappa}}^{\tilde{\kappa}+\delta}(\underline{\pi}^{\mbox{\tiny\sf LMP}}-\overline{\pi}^{\mbox{\tiny\sf LMP}})\rho_t(\pi_t^{\mbox{\tiny\sf LMP}},\eta_t)d\eta_t d\pi_t^{\mbox{\tiny\sf LMP}}&\\
-& \int_{\underline{\pi}^{\mbox{\tiny\sf LMP}}}^{\overline{\pi}^{\mbox{\tiny\sf LMP}}}\!\!\!\!\!\!\int_{\tilde{\kappa}}^{\tilde{\kappa}+\delta} (\overline{\pi}^{\mbox{\tiny\sf LMP}}-\pi_t^{\mbox{\tiny\sf LMP}})\rho_t(\pi_t^{\mbox{\tiny\sf LMP}},\eta_t)d\eta_t d\pi_t^{\mbox{\tiny\sf LMP}} \Big)\le 0.&
\end{flalign}
By analyzing the difference between their partial derivatives, we establish that the partial derivative $\partial J^{\mbox{\tiny\sf OP}}_n(Q_{\mbox{\tiny\sf R}},Q_{\mbox{\tiny\sf H}})/\partial Q_{\mbox{\tiny\sf H}}$ decreases as $Q_{\mbox{\tiny\sf H}}$ (or $\kappa$) increases for $0 \leq \kappa < 1$
\footnote{Under the mild assumption that $\underline{\pi}^{\mbox{\tiny\sf LMP}} > 0$, and for any $\eta\in [0,1]$, there exists a probability density function $\rho_t(\pi_t^{\mbox{\tiny\sf LMP}}\in [0, \underline{\pi}^{\mbox{\tiny\sf LMP}}],\eta_t=\eta)>0$, this monotonicity is strict.}.
For $\kappa \geq 1$, this derivative remains constant.

From this, three cases arise:
\begin{itemize}
    \item If $\partial J^{\mbox{\tiny\sf OP}}_n(Q_{\mbox{\tiny\sf R}},Q_{\mbox{\tiny\sf H}})/\partial Q_{\mbox{\tiny\sf H}}< 0$ for all $Q_{\mbox{\tiny\sf H}}=\kappa Q_{\mbox{\tiny\sf R}}$ with $\kappa\ge 0$, then the expected operating profit decreases with increasing electrolyzer capacity. The RCHP is profitable for all electrolyzer capacities with $0\le \kappa<\kappa^0$ and in deficit for $\kappa>\kappa^0$.

    \item If $\partial J^{\mbox{\tiny\sf OP}}_n(Q_{\mbox{\tiny\sf R}},Q_{\mbox{\tiny\sf H}})/\partial Q_{\mbox{\tiny\sf H}}> 0$ for all $Q_{\mbox{\tiny\sf H}}=\kappa Q_{\mbox{\tiny\sf R}}$ with $\kappa\ge 0$, then the expected operating profit increases with electrolyzer capacity. The RCHP is in deficit for $0 \leq \kappa < \kappa^0$ and profitable for $\kappa > \kappa^0$.

    \item If there exists $\kappa^*$ such that $\frac{\partial J^{\mbox{\tiny\sf OP}}_n}{\partial Q_{\mbox{\tiny\sf H}}}(Q_{\mbox{\tiny\sf R}},Q^*_{\mbox{\tiny\sf H}})= 0$ at $Q^*_{\mbox{\tiny\sf H}}=\kappa^*Q_{\mbox{\tiny\sf R}}$, then the expected operating profit is maximized at $Q^*_{\mbox{\tiny\sf H}}$. For $0\le \kappa<\kappa^*$, the expected operating profit increases with respect to $\kappa$, whereas for $\kappa>\kappa^*$, it decreases. A special case arises when $\partial J^{\mbox{\tiny\sf OP}}_n(Q_{\mbox{\tiny\sf R}},Q_{\mbox{\tiny\sf H}})/\partial Q_{\mbox{\tiny\sf H}}= 0$ for all $Q_{\mbox{\tiny\sf H}}\ge Q_{\mbox{\tiny\sf R}}$, \ie when $\kappa\ge 1$. In this case, any electrolyzer capacity $Q_{\mbox{\tiny\sf H}}\ge Q_{\mbox{\tiny\sf R}}$ yields the same maximum operating profit.
\end{itemize}

Since the renewable capacity $Q_{\mbox{\tiny\sf R}}$ is arbitrarily chosen, the conclusions hold for all capacity pairs $(Q_{\mbox{\tiny\sf R}}, Q_{\mbox{\tiny\sf H}})$ in the nameplate capacity plane. Therefore, both the profitable and deficit regions form convex cones bounded by the break-even lines, and the optimal matching electrolyzer capacity $Q^*_{\mbox{\tiny\sf H}}$ is linearly proportional to $ Q_{\mbox{\tiny\sf R}}$, as stated in Theorem \ref{Thm:profitability}.
\end{proof}

\subsection{Sketch of the proof for Proposition~\ref{Prop:colocation}}
\begin{proof}
We establish this by comparing the optimal operation of a colocated prosumer (M2) to a strictly constrained subset where the renewable generator and electrolyzer operate independently. In the separated configuration, the electrolyzer must satisfy all its energy demand via grid imports, while the renewable must export all its generation.

The colocated M2 model, however, can directly utilize onsite renewable power for the electrolyzer, which bypasses the grid entirely, thereby avoiding the non-negative REC price spread on the self-consumed power. Because the objective function of the colocated system is identical to the separated system plus these non-negative cost savings, the expected operating profit of the colocated RCHP must be greater than or equal to the sum of the separated operations.
\end{proof}

\subsection{Adaptability to Nonlinear Fixed Costs}\label{appendix:nonlinear_cost}
To demonstrate that our framework can flexibly accommodate nonlinear fixed costs, we extended the linear cost assumption used in Sec.~\ref{sec:simulation} by applying piecewise linear cost functions for both the renewable plant and the electrolyzer, as shown in Fig.~\ref{fig:capacity_cost}. Using the consumer model (M1-c) as an illustrative example, we calculated the RCHP's annual operating profit in 2022 as a function of the solar generation and electrolyzer capacities. The comparative results under linear and piecewise linear fixed costs are illustrated in Fig.~\ref{fig:heatmap_piecewise}. Under the nonlinear cost structure (right panel), the break-even boundaries and the optimal capacity matching line naturally transitioned into piecewise linear frontiers. Specifically, we observed that the optimal capacity ratio dynamically shifted to favor the technology exhibiting steeper cost reductions in the corresponding capacity ranges.
\begin{figure}[!htb]
        \centering
        \includegraphics[scale=0.32]{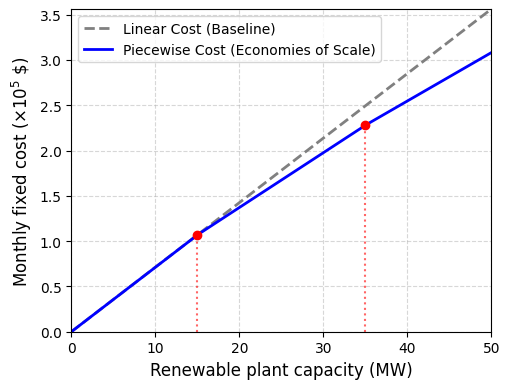}
        \includegraphics[scale=0.32]{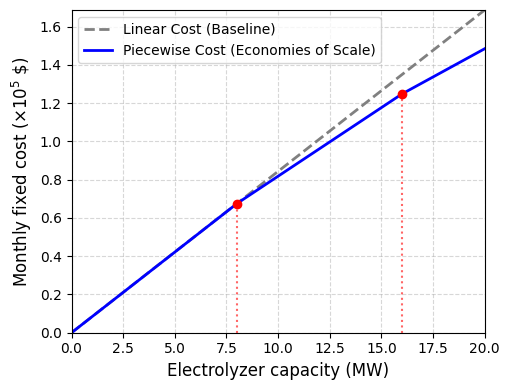}
        \caption{\scriptsize Monthly fixed cost of facilities at different capacity levels. (Left: renewable plant fixed cost; right: electrolyzer fixed cost.)}
        \label{fig:capacity_cost}
        \vspace{-2em}   
\end{figure}   
\begin{figure}[!htb]
    \centering
    \includegraphics[scale=0.38]{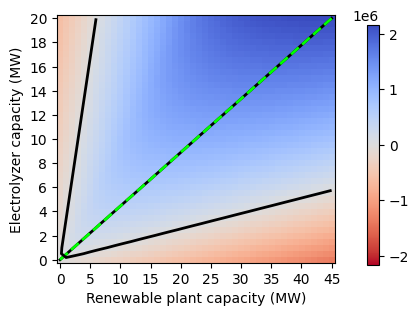}
    \includegraphics[scale=0.38]{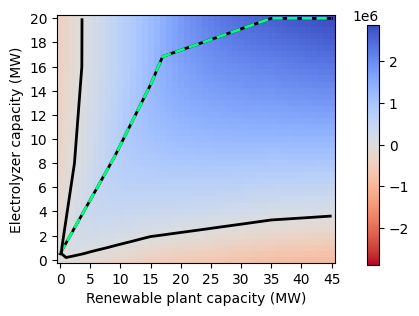}
    \caption{\scriptsize Annual operating profit in 2022 as a function of solar generation nameplate capacity (x-axis) and electrolyzer nameplate capacity (y-axis). Solid black: Break-even line. Green dashed: Optimal electrolyzer nameplate capacity as a function of solar generation nameplate capacity. (Left: linear fixed cost function; right: piecewise linear fixed cost function.)} \label{fig:heatmap_piecewise}
\vspace{-1em}     
\end{figure}

These results confirm that while sizing configurations adjust to nonlinear cost trends, the underlying optimization remains highly tractable and our fundamental operational insights are completely robust.

\subsection{Proof of Theorem~\ref{Thm:nameplate}}\label{appendix_thm3}
\begin{proof}
We apply the Lagrangian method to the optimization problem \eqref{eq:stochastic}. The Lagrangian function is defined as
\begin{flalign}
\mathcal{L}(Q_{\mbox{\tiny\sf R}},Q_{\mbox{\tiny\sf H}},\lambda)=&-J^{\mbox{\tiny\sf OP}}_n(Q_{\mbox{\tiny\sf R}},Q_{\mbox{\tiny\sf H}})+\lambda(\alpha^{\mbox{\tiny\sf R}}_nQ_{\mbox{\tiny\sf R}}+\alpha^{\mbox{\tiny\sf H}}_nQ_{\mbox{\tiny\sf H}}-B_n).&
\end{flalign}

Following the proof of Theorem~\ref{Thm:profitability}, we compute the partial derivative of $J^{\mbox{\tiny\sf OP}}_n(Q_{\mbox{\tiny\sf R}},Q_{\mbox{\tiny\sf H}})$ with respect to $Q_{\mbox{\tiny\sf R}}$:
\begin{flalign}
\nn
\frac{\partial J^{\mbox{\tiny\sf OP}}_n(Q_{\mbox{\tiny\sf R}},\!Q_{\mbox{\tiny\sf H}})}{\partial Q_{\mbox{\tiny\sf R}}}&\!=\!\!\sum_{t=1}^n \!\Big(\!\!\int_{0}^{\underline{\pi}^{\mbox{\tiny\sf LMP}}}\!\!\!\!\!\!\!\int_{0}^{\kappa}\!\!(\pi_t^{\mbox{\tiny\sf LMP}}\!\!+\!\!\tau^{\mbox{\tiny\sf IM}}_{\mbox{\tiny\sf REC}})\eta_t\rho_t(\pi_t^{\mbox{\tiny\sf LMP}}\!,\eta_t)d\eta_t d\pi_t^{\mbox{\tiny\sf LMP}}&\\ \nn
&\!+\!\! \int_{\underline{\pi}^{\mbox{\tiny\sf LMP}}}^{\overline{\pi}^{\mbox{\tiny\sf LMP}}}\!\!\!\!\!\!\int_{0}^{\kappa}\!\!\!\gamma(\pi^{\mbox{\tiny\sf H}}+\tau^{\mbox{\tiny\sf H}}-c^{\mbox{\tiny\sf w}})\eta_t\rho_t(\pi_t^{\mbox{\tiny\sf LMP}}\!,\eta_t)d\eta_t d\pi_t^{\mbox{\tiny\sf LMP}}&\\ \nn
&\!+\!\!\int_{\overline{\pi}^{\mbox{\tiny\sf LMP}}}^{+\infty}\!\!\!\int_{0}^{1}\!\!(\pi_t^{\mbox{\tiny\sf LMP}}\!\!+\!\!\tau^{\mbox{\tiny\sf EX}}_{\mbox{\tiny\sf REC}})\eta_t\rho_t(\pi_t^{\mbox{\tiny\sf LMP}}\!,\eta_t)d\eta_t d\pi_t^{\mbox{\tiny\sf LMP}}&\\ \nn
&\!+\!\!\int_{0}^{\overline{\pi}^{\mbox{\tiny\sf LMP}}}\!\!\!\!\!\!\int_{\kappa}^{1}\!\!(\pi_t^{\mbox{\tiny\sf LMP}}\!\!+\!\!\tau^{\mbox{\tiny\sf EX}}_{\mbox{\tiny\sf REC}})\eta_t\rho_t(\pi_t^{\mbox{\tiny\sf LMP}}\!,\eta_t)d\eta_t d\pi_t^{\mbox{\tiny\sf LMP}}&\\ \nn
&\!+\!\!\int_{0}^{+\infty}\!\!\!\int_{0}^{1} \tau^{\mbox{\tiny\sf R}} \eta_t\rho_t(\pi_t^{\mbox{\tiny\sf LMP}}\!,\eta_t)d\eta_t d\pi_t^{\mbox{\tiny\sf LMP}}\Big)\!\!-\!\alpha_n^{\mbox{\tiny\sf R}} &\\ \nn
&\!=\sum_{t=1}^n A_{t,\kappa}^{\mbox{\tiny\sf R}}-\alpha_n^{\mbox{\tiny\sf R}}.&
\end{flalign}

The necessary conditions for optimality are obtained by setting the gradients of $\mathcal{L}$ with respect to $Q_{\mbox{\tiny\sf R}}$ and $Q_{\mbox{\tiny\sf H}}$ to zero:
\begin{flalign}\nn
-\frac{\partial J^{\mbox{\tiny\sf OP}}_n}{\partial Q_{\mbox{\tiny\sf R}}}(Q^*_{\mbox{\tiny\sf R}},Q^*_{\mbox{\tiny\sf H}})+\lambda^* \alpha^{\mbox{\tiny\sf R}}_n &=-\sum_{t=1}^n A_{t,\kappa^*}^{\mbox{\tiny\sf R}}+(1+\lambda^*)\alpha_n^{\mbox{\tiny\sf R}}=0, &\\
-\frac{\partial J^{\mbox{\tiny\sf OP}}_n}{\partial Q_{\mbox{\tiny\sf H}}}(Q^*_{\mbox{\tiny\sf R}},Q^*_{\mbox{\tiny\sf H}})+\lambda^* \alpha^{\mbox{\tiny\sf H}}_n &=-\sum_{t=1}^n A_{t,\kappa^*}^{\mbox{\tiny\sf H}}+(1+\lambda^*)\alpha_n^{\mbox{\tiny\sf H}}=0, &
\end{flalign}
where $\kappa^*=Q^*_{\mbox{\tiny\sf H}}/Q^*_{\mbox{\tiny\sf R}}$.

Dividing these two equations, we obtain
\begin{align}\label{eq:ratio}
\frac{\sum_{t=1}^n  A^{\mbox{\tiny\sf H}}_{t,\kappa^*}}{\sum_{t=1}^n  A^{\mbox{\tiny\sf R}}_{t,\kappa^*}}=\frac{\alpha_n^{\mbox{\tiny\sf H}}}{\alpha_n^{\mbox{\tiny\sf R}}}.
\end{align}

The optimal values of $Q^*_{\mbox{\tiny\sf R}}$ and $Q^*_{\mbox{\tiny\sf H}}$ can be determined by solving \eqref{eq:ratio} together with the budget constraint, as described in \eqref{eq:optcond}.

We have proved that $\partial J^{\mbox{\tiny\sf OP}}_n(Q_{\mbox{\tiny\sf R}},Q_{\mbox{\tiny\sf H}})/\partial Q_{\mbox{\tiny\sf H}}$ decreases as $Q_{\mbox{\tiny\sf H}}$ (or $\kappa$) increases. Similarly, we fix the electrolyzer capacity and consider two renewable capacity values: $\tilde{Q}_{\mbox{\tiny\sf R}}=Q_{\mbox{\tiny\sf H}}/\tilde{\kappa}$ and $\tilde{Q}'_{\mbox{\tiny\sf R}}=Q_{\mbox{\tiny\sf H}}/(\tilde{\kappa}+\delta)$ for a small $\delta>0$.
\begin{flalign}
\nn
&\frac{\partial J^{\mbox{\tiny\sf OP}}_n}{\partial Q_{\mbox{\tiny\sf R}}}(\tilde{Q}'_{\mbox{\tiny\sf R}},Q_{\mbox{\tiny\sf H}})-\frac{\partial J^{\mbox{\tiny\sf OP}}_n}{\partial Q_{\mbox{\tiny\sf R}}}(\tilde{Q}_{\mbox{\tiny\sf R}},Q_{\mbox{\tiny\sf H}})&\\ \nn
=&\sum_{t=1}^n \Big( \int_{0}^{\underline{\pi}^{\mbox{\tiny\sf LMP}}}\!\!\!\!\!\!\int_{\tilde{\kappa}}^{\tilde{\kappa}+\delta}(\pi_t^{\mbox{\tiny\sf LMP}}+\tau^{\mbox{\tiny\sf IM}}_{\mbox{\tiny\sf REC}})\eta_t\rho_t(\pi_t^{\mbox{\tiny\sf LMP}},\eta_t)d\eta_t d\pi_t^{\mbox{\tiny\sf LMP}}&\\ \nn
+&\int_{\underline{\pi}^{\mbox{\tiny\sf LMP}}}^{\overline{\pi}^{\mbox{\tiny\sf LMP}}}\!\!\!\!\!\!\int_{\tilde{\kappa}}^{\tilde{\kappa}+\delta}\gamma(\pi^{\mbox{\tiny\sf H}}+\tau^{\mbox{\tiny\sf H}}-c^{\mbox{\tiny\sf w}})\eta_t\rho_t(\pi_t^{\mbox{\tiny\sf LMP}},\eta_t)d\eta_t d\pi_t^{\mbox{\tiny\sf LMP}}&\\ \nn
-& \int_{0}^{\overline{\pi}^{\mbox{\tiny\sf LMP}}}\!\!\!\!\!\!\int_{\tilde{\kappa}}^{\tilde{\kappa}+\delta}(\pi_t^{\mbox{\tiny\sf LMP}}\!\!+\!\!\tau^{\mbox{\tiny\sf EX}}_{\mbox{\tiny\sf REC}})\eta_t\rho_t(\pi_t^{\mbox{\tiny\sf LMP}},\eta_t)d\eta_t d\pi_t^{\mbox{\tiny\sf LMP}} \Big)&\\ \nn
=&\sum_{t=1}^n\Big( \int_{0}^{\underline{\pi}^{\mbox{\tiny\sf LMP}}}\!\!\!\!\!\!\int_{\tilde{\kappa}}^{\tilde{\kappa}+\delta}(\tau^{\mbox{\tiny\sf IM}}_{\mbox{\tiny\sf REC}}-\tau^{\mbox{\tiny\sf EX}}_{\mbox{\tiny\sf REC}})\eta_t\rho_t(\pi_t^{\mbox{\tiny\sf LMP}},\eta_t)d\eta_t d\pi_t^{\mbox{\tiny\sf LMP}}&\\
+& \int_{\underline{\pi}^{\mbox{\tiny\sf LMP}}}^{\overline{\pi}^{\mbox{\tiny\sf LMP}}}\!\!\!\!\!\!\int_{\tilde{\kappa}}^{\tilde{\kappa}+\delta} (\overline{\pi}^{\mbox{\tiny\sf LMP}}-\pi_t^{\mbox{\tiny\sf LMP}})\eta_t\rho_t(\pi_t^{\mbox{\tiny\sf LMP}},\eta_t)d\eta_t d\pi_t^{\mbox{\tiny\sf LMP}} \Big)\ge 0.&
\end{flalign}
This allows us to conclude that $\partial J^{\mbox{\tiny\sf OP}}_n(Q_{\mbox{\tiny\sf R}},Q_{\mbox{\tiny\sf H}})/\partial Q_{\mbox{\tiny\sf R}}$ is an increasing function of $\kappa$ for $0 \leq \kappa < 1$. For $\kappa \geq 1$, this derivative remains constant.

Since $\alpha_n^{\mbox{\tiny\sf R}}$ and $\alpha_n^{\mbox{\tiny\sf H}}$ are constants, it follows that $\sum_{t=1}^n A^{\mbox{\tiny\sf R}}_{t,\kappa}$ and $\sum_{t=1}^n A^{\mbox{\tiny\sf H}}_{t,\kappa}$ are increasing and decreasing functions of $\kappa$, respectively. Then, the ratio $\sum_{t=1}^n  A^{\mbox{\tiny\sf H}}_{t,\kappa}/\sum_{t=1}^n  A^{\mbox{\tiny\sf R}}_{t,\kappa}$ appearing on the left hand-side of \eqref{eq:ratio} is a monotonically decreasing function of $\kappa$. For each capacity pair $(Q_{\mbox{\tiny\sf R}}, Q_{\mbox{\tiny\sf H}})$ that satisfies the budget constraint, there corresponds a unique capacity ratio $\kappa$. This monotonicity implies that searching for the nameplate capacity values along the budget constraint provides an efficient approach for guiding the RCHP to the optimal nameplate capacity pair, if an optimal solution exists.

Finally, we briefly discuss the existence of the optimal nameplate capacity pair. The budget constraint is a linear equation in the nameplate capacity plane, subject to the non-negativity constraints $Q_{\mbox{\tiny\sf R}}\ge 0$ and $Q_{\mbox{\tiny\sf H}}\ge 0$, which define a compact feasible set. By the Weierstrass theorem, the existence of an optimal solution is guaranteed if the expected operating profit function, $J^{\mbox{\tiny\sf OP}}_n(Q_{\mbox{\tiny\sf R}},Q_{\mbox{\tiny\sf H}})$, is continuous, which in turn requires the continuity of the probability density function $\rho_t(\pi_t^{\mbox{\tiny\sf LMP}},\eta_t)$. However, if $\rho_t(\pi_t^{\mbox{\tiny\sf LMP}},\eta_t)$ is discontinuous, the supremum of the expected operating profit function may not be attained. In such case, one can construct a sequence of nameplate capacity pairs that approach the optimal solution arbitrarily closely.
\end{proof}

\subsection{Empirical Example for Operating Profit Forecasting} \label{sec:empirical_example}
This example illustrates how Proposition~\ref{Prop:linear} can be applied to estimate the RCHP's expected operating profit.
\begin{figure}[htbp]
    \centering
    \includegraphics[scale=0.4]{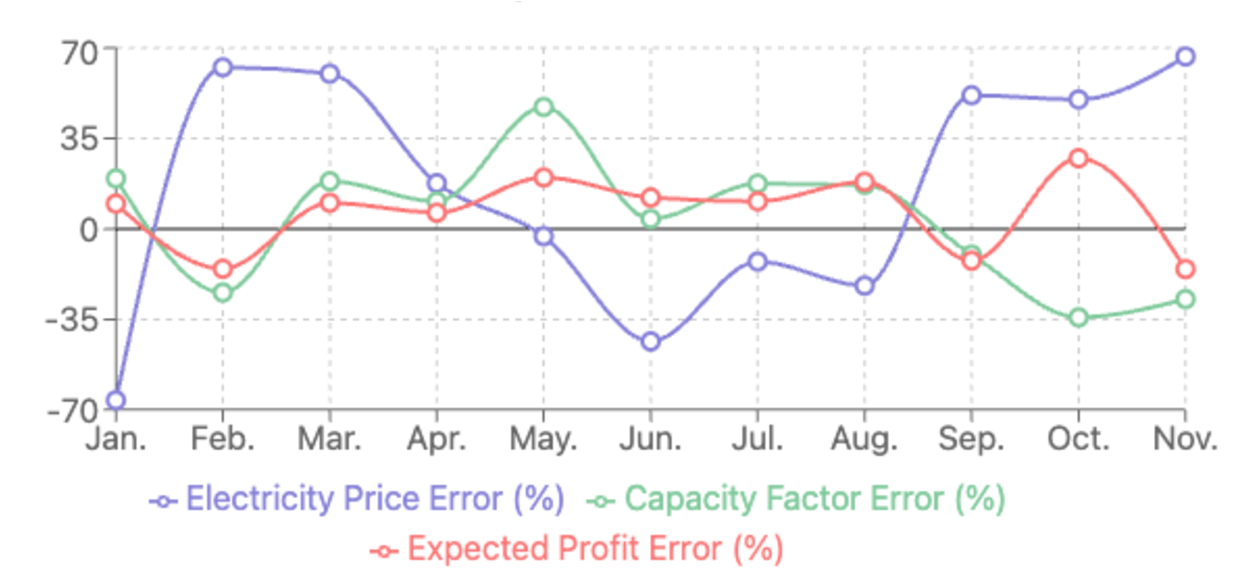}
    \vspace{-1em}  
    \caption{\scriptsize Prediction errors of LMP, capacity factor, and expected operating profit.}
    \label{fig:error}
\end{figure}

In practice, theoretical probabilities and expectations can be replaced with empirical counterparts derived from forecasted LMP and renewable trajectories. Since our focus is not on forecasting methodology, we do not discuss trajectory generation in detail; standard techniques such as Monte Carlo simulation or historical bootstrapping can be employed.

As a concrete example, for the 2022 New York case, we used a naive approach: historical LMP and renewable generation trajectories from 2021 were used to construct empirical probability distributions and conditional expectations as required by Proposition~\ref{Prop:linear}. These empirical values were then substituted into \eqref{eq:nCF1} to compute the expected operating profit for 2022.

Fig.~\ref{fig:error} shows the monthly prediction errors for this example, indicating that the accuracy of operating profit forecasts is comparable to that of renewable generation forecasts.

\subsection{Additional Numerical Results}\label{appendix:breakdown}
Table~\ref{tab:breakdown_CAISO} and Table~\ref{tab:breakdown_MISO} present the detailed annual revenue breakdowns for RCHPs deployed in CAISO and MISO under a hydrogen price of \$4/kg. These results complement the multi-ISO simulations discussed in Sec.~\ref{sec:multi-ISO}.
\begin{table*}[!t]
    \centering
    \footnotesize
    \renewcommand{\arraystretch}{1.5}
    \setlength{\tabcolsep}{2.5pt}
    \caption{\small RCHP Revenue Breakdown in 2022 (CAISO)}
    \label{tab:breakdown_CAISO}
    \begin{tabular}{l c c c c c c c c }
        \toprule
        & \multicolumn{2}{c}{M0: Standalone}
        & \multicolumn{2}{c}{M1-p: Producer}  & \multicolumn{2}{c}{M1-c: Consumer} & \multicolumn{2}{c}{M2: Prosumer} \\
        \midrule
        Renewable Type & Solar & Wind & Solar & Wind & Solar & Wind & Solar & Wind \\
        Total renewable generation ( MWh) & 0.9916$\times 10^5$ & 1.1332$\times 10^5$ & 0.9916$\times 10^5$ & 1.1332$\times 10^5$ & 0.9916$\times 10^5$ & 1.1332$\times 10^5$ & 0.9916$\times 10^5$ & 1.1332$\times 10^5$\\
        Renewable in hydrogen production (\%) & 71.91 & 81.86 & 68.65 & 77.62 & 71.91 & 81.86 & 68.65 & 77.62 \\
        Hydrogen produced (kg) & 1.3547$\times 10^6$ & 1.7626$\times 10^6$ & 1.2934$\times 10^6$ & 1.6713$\times 10^6$ & 3.0588$\times 10^6$ & 3.1612$\times 10^6$ & 2.9974$\times 10^6$ & 3.0699$\times 10^6$\\
        Revenue from hydrogen sales (\$) & 9.4832$\times 10^6$ & 1.2338$\times 10^7$ & 9.0537$\times 10^6$ & 1.1699$\times 10^7$ & 2.1411$\times 10^7$ & 2.2129$\times 10^7$ & 2.0982$\times 10^7$ & 2.1489$\times 10^7$\\
        Renewable sold in the market (\%) & 0 & 0 & 31.35 & 22.38 & 0 & 0 & 31.35 & 22.38 \\
        Revenue from renewable sales (\$) & 0 & 0 & 3.0771$\times 10^6$ & 2.8548$\times 10^6$ & 0 & 0 & 3.0771$\times 10^6$ & 2.8548$\times 10^6$ \\
        Renewable curtailed (\%) & 28.09 & 18.14 & 0 & 0 & 28.09 & 18.14 & 0 & 0 \\
        Revenue lost due to curtailment (\$) & 2.1000$\times 10^6$ & 1.7074$\times 10^6$ & 0 & 0 & 2.1000$\times 10^6$ & 1.7074$\times 10^6$ & 0 & 0 \\
        Annual operating profit (\$) & 6.2021$\times 10^6$ & 9.4056$\times 10^6$ & 8.8558$\times 10^6$ & 1.1630$\times 10^7$ & 9.1588$\times 10^6$ & 1.2470$\times 10^7$ & 1.1812$\times 10^7$ & 1.4695$\times 10^7$ \\
        \bottomrule
    \end{tabular}
\end{table*}

\begin{table*}[!t]
    \centering
    \footnotesize
    \renewcommand{\arraystretch}{1.5}
    \caption{\small RCHP Revenue Breakdown in 2022 (MISO)}
    \label{tab:breakdown_MISO}
     \begin{tabular}{l c c c c c c c c }
        \toprule
        & \multicolumn{2}{c}{M0: Standalone}
        & \multicolumn{2}{c}{M1-p: Producer}  & \multicolumn{2}{c}{M1-c: Consumer} & \multicolumn{2}{c}{M2: Prosumer} \\
         \midrule
        Renewable Type & Solar & Wind & Solar & Wind & Solar & Wind & Solar & Wind \\
        Total renewable generation ( MWh) & ~~/~~ & 1.6659$\times 10^5$ &~~/~~  & 1.6659$\times 10^5$ & ~~/~~  & 1.6659$\times 10^5$ & ~~/~~ & 1.6659$\times 10^5$\\
        Renewable in hydrogen production (\%) & ~~/~~ & 81.65 & ~~/~~ & 79.67 & ~~/~~ & 81.65 & ~~/~~ & 79.67 \\
        Hydrogen produced (kg) & ~~/~~ & 2.5844$\times 10^6$ & ~~~/~~ & 2.5218$\times 10^6$ & ~~~/~~ & 3.2627$\times 10^6$ & ~~/~~ & 3.2002$\times 10^6$\\
        Revenue from hydrogen sales (\$) & ~~/~~ & 1.8090$\times 10^7$ & ~~~/~~ & 1.7653$\times 10^7$ & ~~~/~~ & 2.2839$\times 10^6$ & ~~/~~ & 2.2402$\times 10^7$\\
        Renewable sold in the market (\%) & ~~/~~ & 0 & ~~~/~~ & 20.33 & ~~~/~~ & 0 & ~~/~~ & 20.33 \\
        Revenue from renewable sales (\$) & ~~/~~ & 0 & ~~/~~ & 2.4421$\times 10^6$ & ~~~/~~ & 0 & ~~/~~ & 2.4421$\times 10^6$ \\
        Renewable curtailed (\%) & ~~/~~ & 18.35 & ~~~/~~ & 0 & ~~~/~~ & 18.35 & ~~/~~ & 0 \\
        Revenue lost due to curtailment (\$) & ~~/~~ & 1.7848$\times 10^6$ & ~~~/~~ & 0 & ~~~/~~ & 1.7848$\times 10^6$ & ~~/~~ & 0 \\
        Annual operating profit (\$) & ~~/~~ & 1.6541$\times 10^7$ & ~~/~~ & 1.8552$\times 10^7$ & ~~/~~ & 1.8003$\times 10^6$ & ~~/~~ & 2.0013$\times 10^7$ \\
        \bottomrule
    \end{tabular}
\end{table*}



\end{document}